\definecolor{egg}{rgb}{.98,.97,.92}
\definecolor{astroorange}{rgb}{1,.93,.79}
\definecolor{darkorange}{rgb}{1,.89,.6}
\definecolor{dullblue}{rgb}{.29,.47,.77}
\definecolor{grayblue}{rgb}{.98,.98,.98}
\definecolor{fadedblue}{rgb}{.78,.86,.92}
\definecolor{tiffanyblue}{rgb}{.96,1,1}
\definecolor{grayish}{rgb}{.93,.93,.97}
\definecolor{charcoal}{rgb}{.247,.259,.27}
\definecolor{evergreen}{rgb}{.7725,.858,.7647}
\definecolor{dullred}{rgb}{.929,.498,.598}
\definecolor{lavender}{rgb}{.8,.741,.85}
\newcommand{\Hr}{\mathrm{Haar}}
\newcommand{\tr}[1]{\mathrm{Tr}\left\{#1\right\}}
\newcommand{\ptr}[2]{\mathrm{Tr}_{#1}\left\{#2\right\}}
\newcommand{\av}[1]{\underset{\tiny{#1}}{\mathbb{E}}}
\newcommand{\norm}[1]{\left|\left| #1 \right|\right|}
\newcommand{\abs}[1]{\left| #1 \right|}
\newcommand{\OTOC}{\mathrm{OTOC}}
\newcommand{\comment}[1]{}
\newcommand{\xpos}{0} 
\newcommand{\ypos}{0} 
\newcommand{\xrel}{1} 
\newcommand{\yrel}{.5} 
\newcommand{\height}{2em} 
\newcommand{\width}{2em} 
\newcommand{\name}{} 
\newcommand{\nodenum}{0} 
\newcommand{\heightsingle}{2em} 
\newcommand{\heightdouble}{4.6em} 
\newcommand{\widthsingle}{2em} 
\newcommand{\rowspace}{2*\yrel} 
\newtheorem{theorem}{Theorem}
\newtheorem{lemma}{Lemma}
\newtheorem{prop}{Proposition}
\newtheorem{corollary}{Corollary}
\newtheorem{definition}{Definition}
\begin{document}
\title{Resource Theory of Non-Revivals with Applications to Quantum Many-Body Scars}
\author{Roy J. Garcia}
\email{roygarcia@g.harvard.edu}
\affiliation{Department of Physics, Harvard University, Cambridge, Massachusetts 02138, USA}

\author{Kaifeng Bu}
\email{kfbu@fas.harvard.edu}
\affiliation{Department of Physics, Harvard University, Cambridge, Massachusetts 02138, USA}

\author{Liyuan Chen}
\email{liyuanchen@fas.harvard.edu}
\affiliation{Department of Physics, Harvard University, Cambridge, Massachusetts 02138, USA}
\affiliation{John A. Paulson School of Engineering and Applied Science, Harvard University, Cambridge, Massachusetts 02138, USA}

\author{Anton M. Graf}
\email{agraf@g.harvard.edu}
\affiliation{John A. Paulson School of Engineering and Applied Science, Harvard University, Cambridge, Massachusetts 02138, USA}

\date{\today}

\begin{abstract}
The study of state revivals has a long history in dynamical systems. We introduce a resource theory to understand the use of state revivals in quantum physics, especially in quantum many-body scarred systems. In this theory, a state is said to contain no amount of resource if it experiences perfect revivals under some unitary evolution. All other states  are said to be resourceful. We show that this resource bounds information scrambling. Furthermore, we show that quantum many-body scarred dynamics can produce revivals in the Hayden-Preskill decoding protocol and can also be used to recover damaged quantum information. Our theory establishes a framework to study information retrieval and its applications in quantum many-body physics.

\end{abstract}

\maketitle


The study of revivals in quantum dynamics is of fundamental interest in quantum physics. Early studies of state revivals date back to the Poincaré recurrence theorem of 1890, in which a system eventually evolves arbitrarily close to its initial state. In 1957, Bocchieri and Loinger established the quantum recurrence theorem, showing that a system with discrete energies will eventually approach its initial state~\cite{PhysRev.107.337, PhysRevA.18.2379}. Recently, such systems have attracted much attention due to the discovery of long-lived, periodic state revivals in non-integrable systems~\cite{Serbyn2021Quantum, papic2021weak, Moudgalya_2022}. These revivals were first observed experimentally on a Rydberg atom chain~ \cite{Bernien2017Probing}. A further investigation into the PXP model by Turner et al. found that these revivals were due to a set of eigenstates with nearly equal energy spacing~\cite{Turner2018Weak}, dubbed quantum many-body scar (QMBS) states. These eigenstates are non-thermal and exhibit sub-volume law entanglement entropy~\cite{PhysRevB.98.155134}. They constitute a weak violation of the Eigenstate Thermalization Hypothesis~\cite{PhysRevA.43.2046, PhysRevE.50.888}, which posits that eigenstates behave as thermal states.

The discovery of these states has launched an extensive investigation into QMBS systems. A complimentary study explains the revivals via unstable periodic orbits in a phase space~\cite{PhysRevLett.122.040603}, reminiscent of the quantum scars proposed by Heller~\cite{PhysRevLett.53.1515}, which are defined as eigenstates with enhanced probability densities along a classical unstable periodic orbit. This was followed by further investigations into the relation between QMBSs and Heller's scars.~\cite{evrard2023quantum, PhysRevLett.130.250402} By perturbing the PXP model, one can generate perfect revivals \cite{PhysRevLett.122.220603} or reach integrability~ \cite{PhysRevB.99.161101}. Furthermore, an exact form of some QMBS states in the PXP model has been constructed~\cite{PhysRevLett.122.173401}. Analytic expressions for QMBS states were also found in the spin-1 Affleck-Kennedy-Lieb-Tasaki model~\cite{PhysRevB.98.235155}. Other systems exhibiting QMBSs include: the 2-dimensional PXP model~\cite{PhysRevResearch.2.022065, PhysRevB.101.220304}, the transverse field Ising ladder~\cite{PhysRevB.101.220305}, the spin-1 XY model~\cite{PhysRevLett.123.147201, PhysRevB.101.174308}, and a lattice model based on the toric code~\cite{PhysRevResearch.1.033144}.

State revivals in QMBS systems hold potential applications in quantum information, as they can be realized on a quantum simulator~\cite{Bernien2017Probing}. The recovery of initial quantum states is a central problem in quantum information. This task forms the basis of the Hayden-Preskill thought experiment~\cite{Hayden_2007, yoshida2017efficient, Bao2021, yoshida2022recovery}, in which a quantum state is retrieved from a black hole. This experiment has been realized on an ion trap quantum computer~\cite{Landsman2019}. Furthermore, Yan and Sinitsyn put forth a proposal to recover damaged quantum information via chaotic evolution~\cite{PhysRevLett.125.040605}. Both protocols rely on implementing some form of time reversal. This suggests that QMBS systems may hold advantages in information retrieval problems, as they avoid time reversal. Furthermore, QMBS systems were numerically shown to account for revivals of the out-of-time-ordered correlator, indicating that they can recover delocalized information~\cite{PhysRevResearch.4.023095}. It is therefore of interest to develop a resource theory to characterize the quantum revivals found in QMBS systems.

Resource theory is a framework used to quantify advantages provided by a quantum process, called a resource~\cite{RevModPhys.91.025001}. One first defines a set of states which contain the resource, known as resourceful states, and a set of states which contain no resource, known as free states. One then defines a resource monotone, a function which measures the amount of resource in a state. This framework has been used to define resources such as entanglement~\cite{RevModPhys.81.865, RevModPhys.91.025001}, magic~\cite{Veitch_2014, PhysRevLett.118.090501, Wang_2019,BGJ23a,BGJ23b,BGJ23c}, quantum thermodynamics~\cite{PhysRevLett.111.250404, Brandao_secondlaws2015, PhysRevLett.121.110403, PhysRevResearch.2.043374, PhysRevX.11.021014, Chiribella2022, Khanian2023}, coherence~\cite{aberg2006quantifying, PhysRevLett.113.140401, PhysRevLett.116.120404, PhysRevLett.119.150405, RevModPhys.89.041003, PhysRevX.7.011024, tajima2022universalLimit, tajima2022universalTrade},  uncomplexity~\cite{PhysRevA.106.062417}, quantum heat engines~\cite{Bera2021, PhysRevResearch.4.013157}, and scrambling ~\cite{Garcia2022Resource}. This framework has been successful in identifying problems where these resources are useful~\cite{Takag2019,Liu2019}. Entanglement, for example, is used to perform quantum teleportation~\cite{PhysRevLett.70.1895}, while magic is used to bound the time complexity in the classical simulation of quantum circuits~\cite{PhysRevLett.118.090501, PhysRevX.6.021043, Bravyi2019simulationofquantum, PRXQuantum.2.010345, Seddon_2019, BukohPRL19, Bu2022}, and also the
generalization capacity in quantum machine learning~\cite{BuPRA19_stat,BuQST23}.

We utilize this framework to measure a resource which we dub `non-revivals'. Free states are those which experience perfect revivals at a certain time, while all other states are said to be resourceful. We measure the amount of resource in a state by quantifying its lack of revivals. We show that revivals can account for the reversal of scrambling, i.e., information delocalization. We present two applications, showing that QMBS systems: 1) can produce revivals in the Hayden-Preskill decoding protocol and 2) can also be used to recover damaged information.

\textit{Main results.}
We develop a framework to study systems in which some states evolve back to themselves under time evolution. It is known that Hamiltonians with rational eigenvalues generate such revivals. We consider an $n$-qubit Hamiltonian $H$ of dimension $d=2^n$, where $N_R$ ($N_I$) eigenvalues are rational (irrational) numbers. We further assume that $N_R, N_I>1$ and that the rational eigenvalues have a least-common denominator of $T$. We will later show that some quantum many-body scarred models may be rescaled to Hamiltonians with such constraints. We define $S_{\mathrm{Rat}}$ ($S_{\mathrm{Irr}}$) as the set of rational (irrational) eigenstates of $H$, i.e., the set of eigenstates with rational (irrational) energies. We also define $L_{\mathrm{Rat}}=\mathrm{span}(S_{\mathrm{Rat}})$ and the two following sets of indices: $\mathcal{A}=\{i:\ket{\psi_i}\in S_{\mathrm{Rat}}\}$ and $\mathcal{B}=\{i:\ket{\psi_i}\in S_{\mathrm{Irr}}\}$, where $i$ is an index enumerating an eigenstate of $H$.

We assume that the difference between any two irrational eigenergies is irrational. This ensures that only certain quantum states evolve back to themselves, allowing us to define the free states. This property can be found in physical systems. For example, the toy-model Hamiltonian   ${H=\sqrt{2}(X^{\otimes 2}+Z^{\otimes 2})+ Y^{\otimes 2} + \underset{\substack{{P,P'\in \{X,Y,Z\}},\\P\neq P'}}{\sum}P\otimes P'}$, where $X, Y,Z$ are Pauli operators, has both rational and irrational eigenvalues. In the appendix, we show numerically that the difference between any two irrational eigenvalues is irrational, consistent with our assumption.

We now identify the states which evolve back to themselves after a fixed time, which we choose to be $T$. We use the revival fidelity to measure the overlap between an initial state $\ket{\psi}$ and the time evolved state $e^{-iH2\pi T}\ket{\psi}$:
\begin{equation}
	F_R(\ket{\psi})=\abs{\bra{\psi}e^{-iH2\pi T}\ket{\psi}}.
\end{equation}
We say $\ket{\psi}$ exhibits a perfect revival when ${F_R(\ket{\psi})=1}$. This occurs if and only if $e^{-iH2\pi T}\ket{\psi}=\ket{\psi}$ (up to a phase). We introduce a resource theory in which free pure states are those which exhibit perfect revivals. The `resource' is the property which inhibits perfect revivals.

\begin{definition}
Free pure states $\ket{\psi}$ are those satisfying $F_R(\ket{\psi})=1$. Resourceful pure states are all other states.
\end{definition}

We now explicitly construct the set of free pure states. Consider an arbitrary state in $ L_{\mathrm{Rat}}$ with the form ${\ket{\psi}=\sum_{i\in \mathcal{A}}c_i\ket{\psi_i}}$. This state is a superposition of rational eigenstates and exhibits a revival after a time $2\pi T$: ${e^{-iH 2\pi T}\ket{\psi}=\underset{i\in \mathcal{A}}{\sum}e^{-i E_{i}2\pi T}c_i\ket{\psi_i}=\underset{i\in \mathcal{A}}{\sum}c_i\ket{\psi_i}=\ket{\psi}}$. We used that $e^{-iE_i2\pi T}=1$ since $E_iT$ is an integer. In the appendix, we prove that states exhibiting perfect revivals are either eigenstates or have the form of $\ket{\psi}$.

\begin{theorem}\label{Thm:Fidelity}
	The set of free pure states is ${\mathcal{S}_F= S_{\mathrm{Irr}}\sqcup L_{\mathrm{Rat}}}$.
\end{theorem}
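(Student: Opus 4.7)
The plan is to prove $\mathcal{S}_F = S_{\mathrm{Irr}} \sqcup L_{\mathrm{Rat}}$ by establishing both inclusions. The forward inclusion $S_{\mathrm{Irr}} \sqcup L_{\mathrm{Rat}} \subseteq \mathcal{S}_F$ is essentially immediate: the $L_{\mathrm{Rat}}$ half was already carried out in the paragraph preceding the theorem, and for any $\ket{\psi_j} \in S_{\mathrm{Irr}}$ one has $e^{-iH 2\pi T}\ket{\psi_j} = e^{-iE_j 2\pi T}\ket{\psi_j}$, a pure global phase, so $F_R(\ket{\psi_j}) = 1$. Disjointness of the union is automatic since any nonzero vector in $L_{\mathrm{Rat}}$ is orthogonal to every eigenstate of $S_{\mathrm{Irr}}$ by the spectral decomposition of the Hermitian $H$.

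For the reverse inclusion I would begin by rephrasing $F_R(\ket{\psi})=1$ as the eigenvalue equation $e^{-iH 2\pi T}\ket{\psi} = e^{i\phi}\ket{\psi}$ for some global phase $\phi \in [0,2\pi)$, using that $|\braket{\psi|\chi}|=1$ between unit vectors forces $\ket{\chi}$ to be a multiple of $\ket{\psi}$. Expanding $\ket{\psi} = \sum_{i \in \mathcal{A}} c_i \ket{\psi_i} + \sum_{j \in \mathcal{B}} d_j \ket{\psi_j}$ in the eigenbasis and using $E_i T \in \mathbb{Z}$ for $i \in \mathcal{A}$, matching coefficients on both sides yields the scalar conditions $c_i = e^{i\phi} c_i$ for each $i \in \mathcal{A}$ and $d_j e^{-iE_j 2\pi T} = e^{i\phi} d_j$ for each $j \in \mathcal{B}$.

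From here I would split into two cases. If some $c_i \neq 0$, the first relation forces $e^{i\phi}=1$, and the second then requires $e^{-iE_j 2\pi T}=1$ whenever $d_j \neq 0$; this would make $E_j T$ an integer and hence $E_j$ rational, contradicting $j \in \mathcal{B}$. Therefore every $d_j$ vanishes and $\ket{\psi} \in L_{\mathrm{Rat}}$. If instead every $c_i = 0$, then $\ket{\psi}$ is a superposition of irrational eigenstates on whose support $e^{-iE_j 2\pi T} = e^{i\phi}$. If two distinct $d_j, d_k$ were nonzero, equating these phases would give $(E_j - E_k) T \in \mathbb{Z}$, making $E_j - E_k$ rational and violating the standing assumption that differences of distinct irrational eigenvalues of $H$ are irrational. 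Hence exactly one $d_j$ is nonzero and $\ket{\psi} \in S_{\mathrm{Irr}}$ up to a phase.

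The main obstacle is the mixed case in which a priori both rational and irrational eigenstate coefficients appear; ruling it out cleanly requires combining the irrationality hypothesis on eigenvalue differences with the observation that any nonzero rational contribution pins the global phase $\phi$ to be trivial. Once that observation is in hand, the rest of the argument reduces to comparing the phases $e^{-iE 2\pi T}$ across the eigenbasis and extracting integrality conditions on $ET$.
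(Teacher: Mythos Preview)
Your proof is correct and follows essentially the same strategy as the paper: expand in the eigenbasis, use that rational eigenvalues give trivial phases under $e^{-iH2\pi T}$, and invoke the standing irrationality assumptions to rule out unwanted superpositions. The only structural difference is that the paper argues the contrapositive (states outside $S_{\mathrm{Irr}}\sqcup L_{\mathrm{Rat}}$ have $F_R<1$, case-by-case), whereas you start from $F_R(\ket{\psi})=1$, extract the eigenvalue equation $e^{-iH2\pi T}\ket{\psi}=e^{i\phi}\ket{\psi}$, and deduce membership directly; your version is in fact a bit more explicit about handling the global phase $\phi$, which the paper leaves somewhat implicit in the mixed case.
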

We use $\sqcup$ to represent the disjoint union of sets and to stress that $\mathcal{S}_F$ does \textit{not} contain superpositions between states in $S_{\mathrm{Irr}}$ and $L_{\mathrm{Rat}}$. We emphasize that the set of free pure states depends on the Hamiltonian. For example, rescaling a Hamiltonian by a constant may change this set; we discuss this in the appendix. From the above theorem, it follows that a resourceful pure state is a superposition of an irrational eigenstate and at least one other eigenstate.
\begin{corollary}
A resourceful pure state has the form
\begin{equation}
    \ket{\psi_R}=c_j\ket{\psi_j}+\sum_{i\in \mathcal{A}\cup \mathcal{B}}c_{i}\ket{\psi_i},
\end{equation}
where $j\in \mathcal{B}$, $c_j\neq 0$ and there exists at least one $i\neq j$ such that $c_i\neq 0$.
\end{corollary}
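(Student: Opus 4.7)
The plan is to derive the corollary directly from Theorem~\ref{Thm:Fidelity} by taking the set-theoretic complement of $\mathcal{S}_F$ within the space of normalized pure states. Since $\mathcal{S}_F = S_{\mathrm{Irr}} \sqcup L_{\mathrm{Rat}}$, any resourceful pure state $\ket{\psi_R}$ is characterized by simultaneously failing two membership conditions: it is not contained in $L_{\mathrm{Rat}}$, and it is not (up to phase) a single irrational eigenstate in $S_{\mathrm{Irr}}$. I would expand an arbitrary pure state in the full eigenbasis of $H$ as $\ket{\psi_R} = \sum_{i \in \mathcal{A}} c_i \ket{\psi_i} + \sum_{i \in \mathcal{B}} c_i \ket{\psi_i}$, and then translate each failure condition into a statement about the support of the coefficients.

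First, the condition $\ket{\psi_R} \notin L_{\mathrm{Rat}}$ forces at least one coefficient indexed in $\mathcal{B}$ to be nonzero; pick any such index and call it $j$, so that $c_j \neq 0$ with $j \in \mathcal{B}$. Second, the condition $\ket{\psi_R} \notin S_{\mathrm{Irr}}$ rules out the possibility that $\ket{\psi_R} = e^{i\theta}\ket{\psi_j}$, meaning at least one coefficient $c_i$ with $i \neq j$ must also be nonzero. This remaining nonzero coefficient may lie in either $\mathcal{A}$ or $\mathcal{B}$, so I absorb all coefficients other than $c_j$ into the sum $\sum_{i \in \mathcal{A} \cup \mathcal{B}} c_i \ket{\psi_i}$, obtaining exactly the form stated in the corollary.

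There is no substantive obstacle here; the only care required is bookkeeping about the disjoint union. I would note explicitly that if only $c_j$ were nonzero, then $\ket{\psi_R} \in S_{\mathrm{Irr}} \subset \mathcal{S}_F$, contradicting resourcefulness, and that if every nonzero $c_i$ had $i \in \mathcal{A}$, then $\ket{\psi_R} \in L_{\mathrm{Rat}} \subset \mathcal{S}_F$, again a contradiction. These two exclusions together yield the existence of the required additional nonzero coefficient $c_i$ with $i \neq j$, completing the argument.
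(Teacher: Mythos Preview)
Your proposal is correct and mirrors the paper's approach: the paper states the corollary as an immediate consequence of Theorem~\ref{Thm:Fidelity} (``From the above theorem, it follows that a resourceful pure state is a superposition of an irrational eigenstate and at least one other eigenstate''), which is exactly the complement argument you give. The only minor difference is that the paper supplements this with a direct check that states of the displayed form fail to revive under $e^{-iH2\pi T}$, whereas you rely purely on the set-theoretic complement of $\mathcal{S}_F$; both are valid and equivalent.
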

To understand why $\ket{\psi_R}$ is resourceful, we evolve this state for a time $2\pi T$, ${e^{-iH2\pi T}\ket{\psi_R}=c_je^{-iE_j2\pi T}\ket{\psi_j}+\underset{i\in \mathcal{A}\cup \mathcal{B}}{\sum}c_{i}e^{-iE_i2\pi T}\ket{\psi_i}}$. The energy $E_j$ is irrational and, by assumption, the difference between $E_j$ and any other eigenenergy must be irrational. Hence, $e^{-iE_j 2\pi T}\neq e^{-iE_i 2\pi T}$ for all $i\neq j$, meaning that ${e^{-iH2\pi T}\ket{\psi_R}\neq \ket{\psi_R}}$. Since $\ket{\psi_R}$ does not exhibit perfect revivals, this state is resourceful.

We now identify the unitaries which can be used to generate the non-revivals resource. We first identify the unitaries which do not increase the amount of resource in a state. These so-called `free' unitaries are defined as those which map free pure states to free pure states; they are denoted by the set $\mathcal{U}_F$. Resourceful unitaries are all other unitaries in the $n$-qubit unitary group $\mathcal{U}$. The evolution unitary $e^{-iHt}$ for a time $t$ is one example of a free unitary. Free unitaries satisfy the following property.
\begin{lemma}\label{Lemma:Free}
If $U\in \mathcal{U}_F$, then $U^\dagger\in \mathcal{U}_F$.  
\end{lemma}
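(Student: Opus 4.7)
The plan is to show that any free unitary $U$ actually maps $\mathcal{S}_F$ bijectively onto itself, from which the statement follows immediately by taking $U^{-1}=U^\dagger$. Concretely, I will establish that $U$ restricts to a unitary on the subspace $L_{\mathrm{Rat}}$ and induces a permutation (up to phases) of the irrational eigenstates $\{\ket{\psi_j}\}_{j\in\mathcal{B}}$; both properties are manifestly preserved under inversion.

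The main step, and the one I expect to be the actual obstacle, is proving $U(L_{\mathrm{Rat}})\subseteq L_{\mathrm{Rat}}$. Here I would argue by contradiction using linearity together with the disjoint-union structure of $\mathcal{S}_F=S_{\mathrm{Irr}}\sqcup L_{\mathrm{Rat}}$ given by Theorem~\ref{Thm:Fidelity}. Suppose some unit vector $\ket{\phi}\in L_{\mathrm{Rat}}$ satisfies $U\ket{\phi}=e^{i\alpha}\ket{\psi_j}$ for some $j\in\mathcal{B}$. Since $N_R>1$, I can pick a second unit vector $\ket{\phi'}\in L_{\mathrm{Rat}}$ linearly independent from $\ket{\phi}$; any normalized superposition of $\ket{\phi}$ and $\ket{\phi'}$ lies in $L_{\mathrm{Rat}}$ and is therefore free, so its image under $U$ must also lie in $\mathcal{S}_F$. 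Running through the three alternatives for $U\ket{\phi'}\in\mathcal{S}_F$ produces a contradiction in each case: if $U\ket{\phi'}=e^{i\beta}\ket{\psi_{j'}}$ with $j'\neq j$, the image of the superposition is a sum of two distinct irrational eigenstates, resourceful by the Corollary; if $j'=j$, injectivity of $U$ forces $\ket{\phi'}\propto\ket{\phi}$; if $U\ket{\phi'}\in L_{\mathrm{Rat}}$, the image is a superposition with nonzero components on both $\ket{\psi_j}$ and $L_{\mathrm{Rat}}$, excluded from $\mathcal{S}_F$ by the disjointness in Theorem~\ref{Thm:Fidelity}.

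Having established $U(L_{\mathrm{Rat}})\subseteq L_{\mathrm{Rat}}$, linearity and invertibility of $U$ on the finite-dimensional subspace $L_{\mathrm{Rat}}$ upgrade this to the equality $U(L_{\mathrm{Rat}})=L_{\mathrm{Rat}}$. Unitarity then gives $U(L_{\mathrm{Rat}}^\perp)=L_{\mathrm{Rat}}^\perp$. For each $j\in\mathcal{B}$, the image $U\ket{\psi_j}$ lies in $\mathcal{S}_F\cap L_{\mathrm{Rat}}^\perp$; but the only free states orthogonal to $L_{\mathrm{Rat}}$ are the irrational eigenstates themselves, since any free state with no $L_{\mathrm{Rat}}$ component must belong to $S_{\mathrm{Irr}}$. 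Thus $U\ket{\psi_j}=e^{i\theta_j}\ket{\psi_{\sigma(j)}}$ for an injection $\sigma:\mathcal{B}\to\mathcal{B}$, which is a bijection because $\mathcal{B}$ is finite.

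The conclusion is then immediate: $U^\dagger$ inherits the bijection $L_{\mathrm{Rat}}\to L_{\mathrm{Rat}}$ and the inverse permutation $\sigma^{-1}$ of $S_{\mathrm{Irr}}$ (with conjugated phases), so $U^\dagger(\mathcal{S}_F)=\mathcal{S}_F\subseteq\mathcal{S}_F$, giving $U^\dagger\in\mathcal{U}_F$. The only delicate ingredient is Step 1; everything else is a formal consequence of unitarity and the explicit description of $\mathcal{S}_F$ from Theorem~\ref{Thm:Fidelity}, and it is worth noting that the argument uses the assumption $N_R>1$ in an essential way to supply the linearly independent auxiliary vector $\ket{\phi'}$.
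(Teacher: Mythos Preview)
Your argument is correct and in fact more careful than the paper's own proof. The paper argues in one line that because $U$ is injective and $U(\mathcal{S}_F)\subseteq\mathcal{S}_F$, one automatically has $U(\mathcal{S}_F)=\mathcal{S}_F$, whence $U^\dagger=U^{-1}$ sends $\mathcal{S}_F$ back into $\mathcal{S}_F$. Taken literally that inference is not justified: $\mathcal{S}_F$ is an infinite set (it contains the entire unit sphere of $L_{\mathrm{Rat}}$), and injective self-maps of infinite sets need not be surjective. Your proof closes exactly this gap by using the disjoint-union description $\mathcal{S}_F=S_{\mathrm{Irr}}\sqcup L_{\mathrm{Rat}}$ together with the hypothesis $N_R>1$ to force $U(L_{\mathrm{Rat}})\subseteq L_{\mathrm{Rat}}$; finite-dimensionality of $L_{\mathrm{Rat}}$ then legitimately upgrades this to equality, and the permutation of $S_{\mathrm{Irr}}$ follows.

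The trade-off is that your route essentially pre-establishes the block-diagonal structure of free unitaries that the paper only derives afterward in Theorem~\ref{Thm:FreeUnitaries}, where Lemma~\ref{Lemma:Free} is used as an input. So you front-load the structural analysis: the lemma's proof is longer, but rigorous on its own, and as a bonus you have already done most of the work needed for Theorem~\ref{Thm:FreeUnitaries}.
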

Using Lemma~\ref{Lemma:Free}, we find the form of the free unitaries.
\begin{theorem}\label{Thm:FreeUnitaries}
Free unitaries in $\mathcal{U}_F$ have the form
\begin{equation}
\begin{split}
U_F&=\sum_{\substack{i,j\in \mathcal{A}}}c_{i,j}\ket{\psi_i}\bra{\psi_j}+\sum_{\substack{i\in \mathcal{B}}}c_{i,{\sigma(i)}}\ket{\psi_i}\bra{\psi_{\sigma(i)}},
\end{split}
\end{equation}
where $\sigma$ is a permutation of $\mathcal{B}$, $\abs{c_{i,\sigma(i)}}^2=1$, and $\sum_{i\in \mathcal{A}}c^*_{i,j'}c_{i,j}=\delta_{j',j}$.
\end{theorem}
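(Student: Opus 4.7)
The plan is to exploit the sharp structural constraint built into $\mathcal{S}_F = S_{\mathrm{Irr}} \sqcup L_{\mathrm{Rat}}$: the free set is a single linear subspace together with a finite collection of isolated rays, so a unitary that maps it into itself must respect these two pieces separately. Combined with Lemma~\ref{Lemma:Free}, this suggests splitting the proof into two steps — first show $U$ preserves $L_{\mathrm{Rat}}$, then show $U$ acts as a phase-weighted permutation on $\{\ket{\psi_j}\}_{j \in \mathcal{B}}$ — after which the claimed block form follows simply by reading off matrix elements.

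For the first step, I would take any $\ket{\phi} \in L_{\mathrm{Rat}}$; since $\ket{\phi}$ is free, $U\ket{\phi}$ lies either in $L_{\mathrm{Rat}}$ or equals $\ket{\psi_j}$ (up to a phase) for some $j \in \mathcal{B}$. To rule out the latter, I would pick $\ket{\phi'} \in L_{\mathrm{Rat}}$ orthogonal to $\ket{\phi}$ (available since $N_R>1$) and consider the free state $\tfrac{1}{\sqrt{2}}(\ket{\phi}+\ket{\phi'})$. Its image $\tfrac{1}{\sqrt{2}}(\ket{\psi_j} + U\ket{\phi'})$ is a superposition of the irrational eigenstate $\ket{\psi_j}$ with an orthogonal free component $U\ket{\phi'}$, which by the Corollary is resourceful — a contradiction. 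Hence $U(L_{\mathrm{Rat}}) \subseteq L_{\mathrm{Rat}}$, and by finite-dimensionality $U|_{L_{\mathrm{Rat}}}$ is unitary on $L_{\mathrm{Rat}}$, producing the first sum with coefficients obeying $\sum_{i \in \mathcal{A}} c^*_{i,j'} c_{i,j} = \delta_{j',j}$.

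For the second step, for any $j \in \mathcal{B}$ the freeness of $\ket{\psi_j}$ forces $U\ket{\psi_j}$ to be free; since $U$ is unitary and $\ket{\psi_j} \perp L_{\mathrm{Rat}}$, its image is orthogonal to $U(L_{\mathrm{Rat}}) = L_{\mathrm{Rat}}$, ruling out the $L_{\mathrm{Rat}}$ branch and forcing $U\ket{\psi_j} = e^{i\theta_j}\ket{\psi_{\pi(j)}}$ with $\pi(j) \in \mathcal{B}$. Unitarity on $\mathrm{span}(S_{\mathrm{Irr}})$ promotes $\pi$ to a bijection; setting $\sigma = \pi^{-1}$ and reading off $c_{i,\sigma(i)} = e^{i\theta_{\sigma(i)}}$ yields the second sum with $|c_{i,\sigma(i)}|^2=1$. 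The main obstacle is the first step: forbidding a rational superposition from being sent to an isolated irrational ray relies essentially on the Corollary together with $N_R > 1$, and this is where the disjoint-union structure of $\mathcal{S}_F$ really does the work — the rest is bookkeeping.
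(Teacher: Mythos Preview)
Your proposal is correct and follows essentially the same approach as the paper's proof: both use the superposition argument with $N_R>1$ to show that $U_F$ cannot send a state of $L_{\mathrm{Rat}}$ to an irrational eigenstate, then deduce the block-diagonal structure and the phase-weighted permutation on $\mathcal{B}$. The only cosmetic difference is that the paper invokes Lemma~\ref{Lemma:Free} explicitly to kill the off-diagonal block $c_{k,j}$ with $k\in\mathcal{A},\, j\in\mathcal{B}$, whereas you obtain the same conclusion via the equality $U(L_{\mathrm{Rat}})=L_{\mathrm{Rat}}$ and orthogonality; these are equivalent.
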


The first term ensures that states in $L_{\mathrm{Rat}}$ are mapped to $L_{\mathrm{Rat}}$. The second term ensures that irrational eigenstates are mapped to irrational eigenstates. This guarantees that states in $\mathcal{S}_{F}$ are mapped to $\mathcal{S}_{F}$. 

The set of free unitaries is useful in defining functions, known as resource monotones, which measure the amount of resource in a state. We use the revival fidelity to introduce one such function below.
\begin{definition}\label{Def:NonRevivalMonotone}
The non-revival monotone of a pure state $\ket{\psi}$ is
\begin{equation}
	R(\ket{\psi})=\max_{U_F\in \mathcal{U}_F}\left\{
	 1-F_R(U_F\ket{\psi})\right\}.
\end{equation}
\end{definition}
The non-revival monotone satisfies the following two properties, qualifying it as a resource monotone:
\begin{enumerate}
	\item (Faithfulness) $R(\ket{\psi})\geq 0$ for any pure state $\ket{\psi}$ and $R(\ket{\psi})=0$ iff $\ket{\psi}$ is a free pure state. 
	\item (Invariance) $R(U_F\ket{\psi})=R(\ket{\psi})$ where $U_F\in\mathcal{U}_F$ and $\ket{\psi}$ is any pure state.
\end{enumerate}
Faithfulness guarantees that resourceful states generate a positive value of the measure. Invariance guarantees that free unitaries cannot increase the value of the measure, as, by definition, they cannot generate the non-revivals resource. The non-revival monotone has operational meaning; it can bound the experimentally measured revivals of a state $F_R(\ket{\psi})$ through $R(\ket{\psi})\geq 1-F_R(\ket{\psi})$.

It is also useful to quantify the amount of resource generated by a unitary. Resourceful unitaries can map a free pure state to a resourceful pure state. That is, they can destroy the revivals exhibited by some free pure states. This motivates the following operationally meaningful monotone to measure the resource in a unitary.

\begin{definition}
The revival destruction capacity of a unitary $U$ is 
\begin{equation}
	D(U)=\max_{\ket{\psi}\in \mathcal{S}_F}\left\{
R(U\ket{\psi})\right\}.
\end{equation}
\end{definition}
This function is a resource monotone as it satisfies the following conditions: 
\begin{enumerate}
	\item (Faithfulness) $D(U)\geq 0 \ \forall \ U\in \mathcal{\mathcal{U}}$ and $D(U)= 0$ if and only if $U\in \mathcal{U}_F$.
	\item (Invariance) $D(V_1UV_2)=D(U)$ for $U\in\mathcal{U}$ and $V_1,V_2\in \mathcal{U}_F$.
\end{enumerate}
Faithfulness guarantees that the monotone is positive for resourceful unitaries, while invariance ensures that the monotone cannot increase by applying a free unitary.

We have so far developed this resource theory for pure states, as these states are used to study revivals in quantum many-body scarred systems~\cite{Bernien2017Probing}. However, one can also generalize this framework to account for mixed states, which are useful for modeling noise in experiments. The revival fidelity of a density matrix $\rho$ is 
\begin{equation}
	F_{R,M}(\rho)=\tr{\sqrt{\sqrt{\rho}e^{-iH2\pi T}\rho e^{iH2\pi T}\sqrt{\rho}}}.
\end{equation}

\begin{definition} A free density matrix $\rho$ is one which satisfies $F_{R,M}(\rho)=1$. Resourceful density matrices are all other density matrices. 
\end{definition}
We use $\mathcal{M}_F$ to denote the set of free density matrices. We construct an explicit expression for states in this set. 

\begin{theorem}\label{Thm:DensityFidelity}
A free density matrix $\rho\in\mathcal{M}_F$ is a state with the form
\begin{equation}
	\rho=\sum_{\substack{i,j \in  \mathcal{A}\\}}a_{i,j}\ket{\psi_i}\bra{\psi_j}+\sum_{i\in\mathcal{ B}}a_{i,i}\ket{\psi_i}\bra{\psi_i}.
\end{equation}
\end{theorem}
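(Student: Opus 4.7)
The plan is to reduce the condition $F_{R,M}(\rho)=1$ to a commutation statement and then solve that statement by expanding $\rho$ in the eigenbasis of $H$.

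First I would observe that $F_{R,M}(\rho)$ is just the Uhlmann fidelity between $\rho$ and $\sigma := e^{-iH2\pi T}\rho e^{iH2\pi T}$. Since $\sigma$ is unitarily related to $\rho$, it is a valid density operator with the same spectrum, and Uhlmann's theorem gives $F(\rho,\sigma)=1$ if and only if $\rho=\sigma$. Therefore $\rho\in\mathcal{M}_F$ is equivalent to $[U,\rho]=0$ with $U:=e^{-iH2\pi T}$, i.e., $\rho$ is invariant under conjugation by $U$.

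Next I would write $\rho=\sum_{i,j}a_{i,j}\ket{\psi_i}\bra{\psi_j}$ in the eigenbasis of $H$, where $H\ket{\psi_k}=E_k\ket{\psi_k}$. A direct computation gives
\begin{equation}
U\rho U^\dagger=\sum_{i,j}a_{i,j}\,e^{-i(E_i-E_j)2\pi T}\ket{\psi_i}\bra{\psi_j},
\end{equation}
so the invariance condition becomes $a_{i,j}\bigl(e^{-i(E_i-E_j)2\pi T}-1\bigr)=0$ for every pair $(i,j)$. The task therefore reduces to determining for which index pairs the phase factor equals one.

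I would then split the analysis into three cases using the hypotheses on the spectrum. (i) If $i,j\in\mathcal{A}$, both $E_i T$ and $E_j T$ are integers by the least-common-denominator assumption, so $e^{-i(E_i-E_j)2\pi T}=1$ and $a_{i,j}$ is unconstrained. (ii) If $i\in\mathcal{A}$ and $j\in\mathcal{B}$ (or vice versa), then $E_i-E_j$ is the difference of a rational and an irrational number, hence irrational, so $(E_i-E_j)T\notin\mathbb{Z}$ and the phase factor differs from $1$; this forces $a_{i,j}=0$. (iii) If $i,j\in\mathcal{B}$ with $i\neq j$, the standing assumption that any two distinct irrational eigenvalues differ by an irrational number again yields $(E_i-E_j)T\notin\mathbb{Z}$, so $a_{i,j}=0$. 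The only surviving off-diagonal entries are those with both indices in $\mathcal{A}$, and all diagonal entries are free, reproducing exactly the stated form.

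The main obstacle is the irrational-difference case (iii): one must invoke the paper's assumption that irrational eigenvalues have pairwise irrational differences, since without it nothing prevents two $\mathcal{B}$-eigenvalues from being congruent mod $1/T$ and leaving spurious off-diagonal coherences. A minor subtlety is handling possible degeneracies among irrational eigenvalues, but this can be absorbed by choosing any orthonormal basis within each degenerate subspace and merging the corresponding labels, after which the stated form is unchanged.
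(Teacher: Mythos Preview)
Your proposal is correct and follows essentially the same route as the paper: reduce $F_{R,M}(\rho)=1$ to the invariance condition $e^{-iH2\pi T}\rho e^{iH2\pi T}=\rho$, expand $\rho$ in the energy eigenbasis, and determine which coefficients $a_{i,j}$ survive from the phase constraints. The only minor difference is presentational: you invoke Uhlmann's theorem explicitly to justify the fidelity-to-equality step (which the paper leaves implicit) and treat each matrix element independently, whereas the paper groups Hermitian-conjugate pairs and argues with factors like $e^{-iE_j4\pi T}\neq1$; your version is slightly cleaner but not a genuinely different argument.
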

One can check that, with the form of $\rho$ above, $e^{-iH2\pi T}\rho e^{iH2\pi T}=\rho$, yielding $F_{R,M}(\rho)=1$.
Using this theorem, we can find the corresponding free unitaries for density matrices.
\begin{theorem}\label{Thm:FreeUnitariesDensity}
$\mathcal{U}_F$ is the set of unitaries which maps $\mathcal{M}_F$ to itself.
\end{theorem}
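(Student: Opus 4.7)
The statement asserts set equality, so I would prove the two inclusions separately: (i) every $U_F\in\mathcal{U}_F$ sends $\mathcal{M}_F$ into itself, and (ii) any unitary that preserves $\mathcal{M}_F$ must lie in $\mathcal{U}_F$. Both directions rely on the explicit characterizations already in hand, namely Theorem~\ref{Thm:FreeUnitaries} for free unitaries and Theorem~\ref{Thm:DensityFidelity} for free density matrices.

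\textbf{Forward direction.} I would write $U_F=A+B$ with $A=\sum_{i,j\in\mathcal{A}}c_{i,j}\ket{\psi_i}\bra{\psi_j}$ supported on $L_{\mathrm{Rat}}$ and $B=\sum_{i\in\mathcal{B}}c_{i,\sigma(i)}\ket{\psi_i}\bra{\psi_{\sigma(i)}}$ a phased permutation on $\mathrm{span}(S_{\mathrm{Irr}})$, and split any $\rho\in\mathcal{M}_F$ as $\rho=\rho_A+\rho_B$ using Theorem~\ref{Thm:DensityFidelity}. Expanding $U_F\rho U_F^\dagger$ gives eight cross-terms of type $A\rho_A B^\dagger$, $B\rho_B A^\dagger$, $A\rho_B\cdots$, $B\rho_A\cdots$; each of these vanishes by the eigenbasis orthogonality $\langle\psi_i|\psi_j\rangle=\delta_{ij}$, because $A$ annihilates $\mathrm{span}(S_{\mathrm{Irr}})$ and $B$ annihilates $L_{\mathrm{Rat}}$. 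The surviving pieces are $A\rho_A A^\dagger$, which is manifestly supported on the rational block, and $B\rho_B B^\dagger$, which using $\sigma$ a permutation and $|c_{i,\sigma(i)}|^2=1$ simplifies to $\sum_{i\in\mathcal{B}} a_{\sigma(i),\sigma(i)}\ket{\psi_i}\bra{\psi_i}$. The total output then has exactly the form prescribed by Theorem~\ref{Thm:DensityFidelity}, placing it in $\mathcal{M}_F$.

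\textbf{Reverse direction.} The plan is to reduce the density-matrix statement back to the pure-state statement by looking at rank. The main observation is that the rank-$1$ elements of $\mathcal{M}_F$ are precisely $\{\ket{\psi}\bra{\psi}:\ket{\psi}\in\mathcal{S}_F\}$. One inclusion is immediate: if $\ket{\psi}\in L_{\mathrm{Rat}}$ then $\ket{\psi}\bra{\psi}$ is supported in the rational block, and if $\ket{\psi}\in S_{\mathrm{Irr}}$ then $\ket{\psi}\bra{\psi}$ is a single diagonal entry on an irrational eigenstate; both match Theorem~\ref{Thm:DensityFidelity}. Conversely, any element of $\mathcal{M}_F$ is block-diagonal between $L_{\mathrm{Rat}}$ and $\mathrm{span}(S_{\mathrm{Irr}})$ with a diagonal irrational part, so its rank is the rank of the rational block plus the number of nonzero irrational diagonal entries; rank equal to one forces exactly one of these contributions to equal one, and in either case the state is a free pure state. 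Since any unitary preserves rank, if $U$ maps $\mathcal{M}_F$ into itself it must map the rank-$1$ elements of $\mathcal{M}_F$ to rank-$1$ elements of $\mathcal{M}_F$, which by the observation is exactly the statement that $U$ maps $\mathcal{S}_F$ to $\mathcal{S}_F$, i.e., $U\in\mathcal{U}_F$.

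\textbf{Main obstacle.} Neither direction is conceptually deep once the characterizations of Theorems~\ref{Thm:FreeUnitaries} and \ref{Thm:DensityFidelity} are in place; the main point to be careful about is the rank-$1$ characterization in the reverse direction, since one must justify that rank adds across the rational/irrational block decomposition so that only the two listed pure-state families arise. The forward direction is essentially a bookkeeping exercise in orthogonality and the unit-modulus condition on the permutation phases.
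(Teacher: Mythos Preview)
Your proof is correct, and the reverse direction takes a genuinely different route from the paper's. The paper argues the inclusion $\mathcal{U}_{\mathcal{M}_F}\subseteq\mathcal{U}_F$ by a lengthy coefficient calculation: it writes an arbitrary $U\in\mathcal{U}_{\mathcal{M}_F}$ in the eigenbasis, conjugates the test states $\rho_k=\ket{\psi_k}\bra{\psi_k}$ and $\rho_{k,p}=(a_k\ket{\psi_k}+a_p\ket{\psi_p})(a_k^*\bra{\psi_k}+a_p^*\bra{\psi_p})$ by both $U$ and $U^\dagger$ (invoking a density-matrix analogue of Lemma~\ref{Lemma:Free}), and extracts enough constraints on the $c_{i,j}$ to force $U$ into the form of Theorem~\ref{Thm:FreeUnitaries}. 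Your rank-$1$ observation short-circuits all of this: since rank is unitarily invariant and the rank-$1$ members of $\mathcal{M}_F$ coincide with $\{\ket{\psi}\bra{\psi}:\ket{\psi}\in\mathcal{S}_F\}$, preservation of $\mathcal{M}_F$ immediately gives preservation of $\mathcal{S}_F$, and you are done by the definition of $\mathcal{U}_F$. This is considerably more economical and exploits that Theorem~\ref{Thm:FreeUnitaries} has already been established, whereas the paper essentially re-derives its content from scratch in the mixed-state setting. Your forward direction matches the paper's final verification step; the only point to state cleanly (which you identify as the main obstacle) is that rank is additive across the $L_{\mathrm{Rat}}\oplus\mathrm{span}(S_{\mathrm{Irr}})$ block decomposition, which holds because the blocks are orthogonal.
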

By using $\mathcal{U}_F$ and $F_{R,M}$, one can generalize the non-revival monotone in Definition~\ref{Def:NonRevivalMonotone} to density matrices.

\begin{figure*}[t]
\subfigure[]{
\scalebox{.6}{
\begin{tikzpicture}
    \draw [thick,color=charcoal]
    (0,\yrel+2*\rowspace)--(3.5*\xrel,\yrel+2*\rowspace)
    (0,\yrel+\rowspace)--(3.5*\xrel,\yrel+\rowspace)
    (0,\yrel)--(2.5*\xrel,\yrel)
    (2.5*\xrel,1.1*\yrel)--(3.5*\xrel,1.1*\yrel)
    (2.5*\xrel,0.9*\yrel)--(3.5*\xrel,0.9*\yrel)
    (0,-\yrel)--(2.5*\xrel,-\yrel)
    (2.5*\xrel,-1.1*\yrel)--(3.5*\xrel,-1.1*\yrel)
    (2.5*\xrel,-0.9*\yrel)--(3.5*\xrel,-0.9*\yrel)
    (0,-\yrel-\rowspace)--(3.5*\xrel,-\yrel-\rowspace)
    (0,-\yrel-2*\rowspace)--(3.5*\xrel,-\yrel-2*\rowspace);
    
	\renewcommand{\xpos}{-.2*\xrel}
	\draw [decorate,line width=.75pt,color=charcoal,decoration={brace,amplitude=5pt},xshift=\xrel,yshift=-\rowspace]	(\xpos,-\yrel+2*\rowspace) -- (\xpos,+2*\rowspace+\yrel) node [black,midway,xshift=9pt] {};
	    
	\renewcommand{\xpos}{-.2*\xrel}
	\draw [decorate,line width=.75pt,color=charcoal,decoration={brace,amplitude=5pt},xshift=\xrel,yshift=-\rowspace]	(\xpos,-\yrel-0*\rowspace) -- (\xpos,-0*\rowspace+\yrel) node [black,midway,xshift=9pt] {};
	
	\renewcommand{\xpos}{-.2*\xrel}
	\draw [decorate,line width=.75pt,color=charcoal,decoration={brace,amplitude=5pt},xshift=\xrel,yshift=-\rowspace]	(\xpos,-\yrel-2*\rowspace) -- (\xpos,-2*\rowspace+\yrel) node [black,midway,xshift=9pt] {};
    
    \renewcommand{\nodenum}{v15}
    \renewcommand{\name}{\includegraphics[width=.025\textwidth]{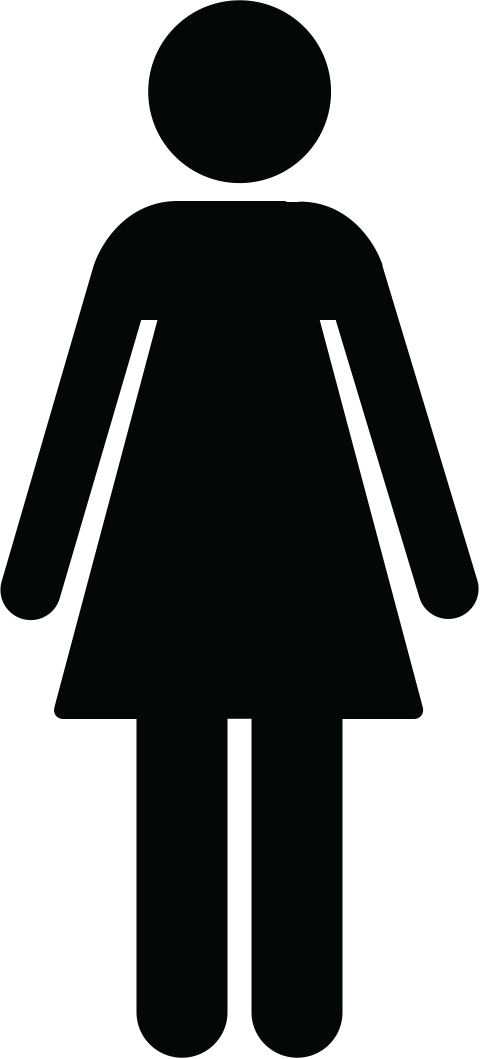}}
	\renewcommand{\xpos}{-2.5*\xrel}
    \renewcommand{\ypos}{\yrel+\rowspace}
    \renewcommand{\height}{\heightdouble}
    \renewcommand{\width}{\widthsingle}
    \node[] (\nodenum) at (\xpos,\ypos) {\name};
     
    \renewcommand{\nodenum}{v15}
    \renewcommand{\name}{\includegraphics[width=.04\textwidth]{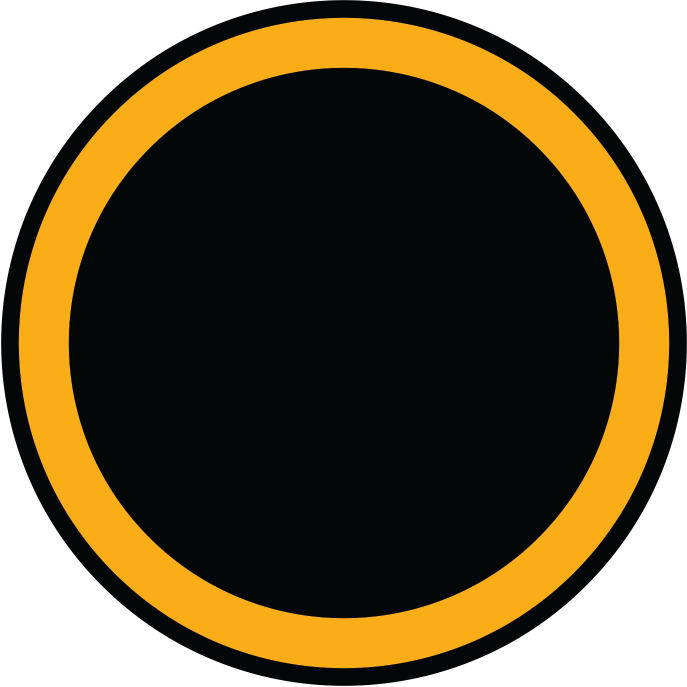}}
	\renewcommand{\xpos}{-2.5*\xrel}
    \renewcommand{\ypos}{\yrel}
    \renewcommand{\height}{\heightdouble}
    \renewcommand{\width}{\widthsingle}
    \node[] (\nodenum) at (\xpos,\ypos) {\name};
        
    \renewcommand{\nodenum}{v15}
    \renewcommand{\name}{\includegraphics[width=.02\textwidth]{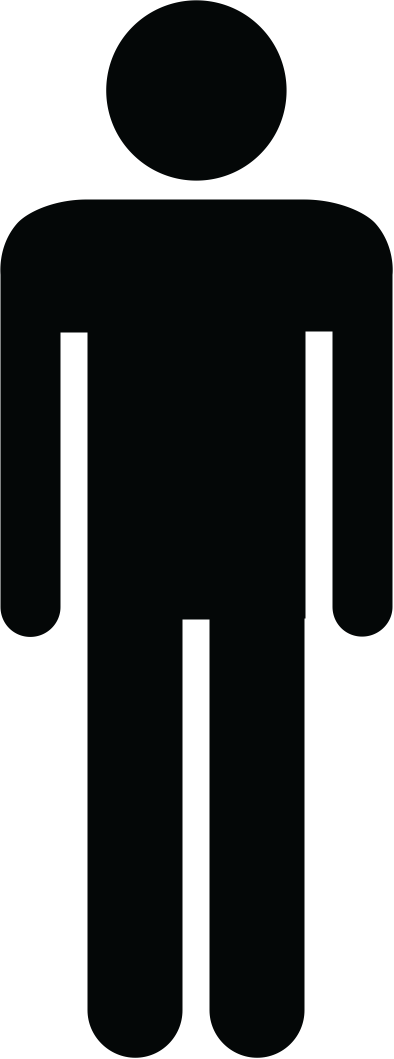}}
	\renewcommand{\xpos}{-2.5*\xrel}
    \renewcommand{\ypos}{-\yrel-\rowspace}
    \renewcommand{\height}{\heightdouble}
    \renewcommand{\width}{\widthsingle}
    \node[] (\nodenum) at (\xpos,\ypos) {\name};
       	
  	\renewcommand{\nodenum}{v2}
    \renewcommand{\name}{$R$}
	\renewcommand{\xpos}{-1.6*\xrel}
    \renewcommand{\ypos}{5*\yrel}
    \renewcommand{\height}{\heightsingle}
    \renewcommand{\width}{\widthsingle}
    \node[rectangle,, line width=.35mm, fill=evergreen, rounded corners, minimum width=\width, minimum height=\height, draw=charcoal] (\nodenum) at (\xpos,\ypos) {\name}; 

  	\renewcommand{\nodenum}{v2}
    \renewcommand{\name}{$A$}
	\renewcommand{\xpos}{-1.6*\xrel}
    \renewcommand{\ypos}{3*\yrel}
    \renewcommand{\height}{\heightsingle}
    \renewcommand{\width}{\widthsingle}
    \node[rectangle,, line width=.35mm, fill=evergreen, rounded corners, minimum width=\width, minimum height=\height, draw=charcoal] (\nodenum) at (\xpos,\ypos) {\name}; 
    
	\renewcommand{\nodenum}{v3}
    \renewcommand{\name}{$B$}
	\renewcommand{\xpos}{-1.6*\xrel}
    \renewcommand{\ypos}{\yrel}
    \renewcommand{\height}{\heightsingle}
    \renewcommand{\width}{\widthsingle}
    \node[rectangle,, line width=.35mm, fill=evergreen, rounded corners, minimum width=\width, minimum height=\height, draw=charcoal] (\nodenum) at (\xpos,\ypos) {\name}; 

	\renewcommand{\nodenum}{v3}
    \renewcommand{\name}{$B'$}
	\renewcommand{\xpos}{-1.6*\xrel}
    \renewcommand{\ypos}{\yrel-\rowspace}
    \renewcommand{\height}{\heightsingle}
    \renewcommand{\width}{\widthsingle}
    \node[rectangle,, line width=.35mm, fill=evergreen, rounded corners, minimum width=\width, minimum height=\height, draw=charcoal] (\nodenum) at (\xpos,\ypos) {\name}; 
    
	\renewcommand{\nodenum}{v3}
    \renewcommand{\name}{$A'$}
	\renewcommand{\xpos}{-1.6*\xrel}
    \renewcommand{\ypos}{\yrel-2*\rowspace}
    \renewcommand{\height}{\heightsingle}
    \renewcommand{\width}{\widthsingle}
    \node[rectangle,, line width=.35mm, fill=evergreen, rounded corners, minimum width=\width, minimum height=\height, draw=charcoal] (\nodenum) at (\xpos,\ypos) {\name}; 
      
	\renewcommand{\nodenum}{v3}
    \renewcommand{\name}{$R'$}
	\renewcommand{\xpos}{-1.6*\xrel}
    \renewcommand{\ypos}{\yrel-3*\rowspace}
    \renewcommand{\height}{\heightsingle}
    \renewcommand{\width}{\widthsingle}
    \node[rectangle,, line width=.35mm, fill=evergreen, rounded corners, minimum width=\width, minimum height=\height, draw=charcoal] (\nodenum) at (\xpos,\ypos) {\name}; 

	\renewcommand{\nodenum}{v4}
    \renewcommand{\name}{ $\ket{\mathrm{Bell}}$}
	\renewcommand{\xpos}{-.75*\xrel}
    \renewcommand{\ypos}{4*\yrel}
    \renewcommand{\height}{\heightsingle}
    \renewcommand{\width}{\widthsingle}
    \node[](\nodenum) at (\xpos,\ypos) {\name};
        
	\renewcommand{\nodenum}{v4}
    \renewcommand{\name}{ $\ket{\mathrm{Bell}}$}
	\renewcommand{\xpos}{-.75*\xrel}
    \renewcommand{\ypos}{0*\yrel}
    \renewcommand{\height}{\heightsingle}
    \renewcommand{\width}{\widthsingle}
    \node[](\nodenum) at (\xpos,\ypos) {\name};
    
	\renewcommand{\nodenum}{v5}
    \renewcommand{\name}{$\ket{\mathrm{Bell}}$}
	\renewcommand{\xpos}{-.75*\xrel}
    \renewcommand{\ypos}{-2*\rowspace}
    \renewcommand{\height}{9.7em}
    \renewcommand{\width}{\widthsingle}
    \node[](\nodenum) at (\xpos,\ypos) {\name};  
	
	\renewcommand{\nodenum}{v6}
    \renewcommand{\name}{$U(t)$}
	\renewcommand{\xpos}{\xrel}
    \renewcommand{\ypos}{\rowspace}
    \renewcommand{\height}{\heightdouble}
    \renewcommand{\width}{\widthsingle}
    \node[rectangle, fill=egg,  line width =.3mm,rounded corners, minimum width=\width, minimum height=  	    \height, draw=charcoal] (\nodenum) at (\xpos,\ypos) {\name};
    
    \renewcommand{\nodenum}{v7}
    \renewcommand{\name}{$U^*(t)$}
	\renewcommand{\xpos}{\xrel}
    \renewcommand{\ypos}{-\rowspace}
    \renewcommand{\height}{\heightdouble}
    \renewcommand{\width}{\widthsingle}
    \node[rectangle, fill=egg,  line width =.3mm,rounded corners, minimum width=\width, minimum height=  	    \height, draw=charcoal] (\nodenum) at (\xpos,\ypos) {\name};
    
    \renewcommand{\nodenum}{v15}
    \renewcommand{\name}{\includegraphics[width=.025\textwidth]{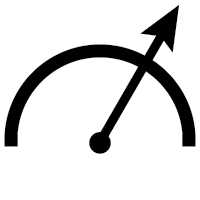}}
	\renewcommand{\xpos}{2.5*\xrel}
    \renewcommand{\ypos}{0}
    \renewcommand{\height}{\heightdouble}
    \renewcommand{\width}{\widthsingle}
    \node[rectangle, fill=white,  line width =.3mm,rounded corners, minimum width=\width, minimum height=  	    \height, draw=charcoal] (\nodenum) at (\xpos,\ypos) {\name};

	\renewcommand{\nodenum}{v17}
    \renewcommand{\name}{$R$}
	\renewcommand{\xpos}{4*\xrel}
    \renewcommand{\ypos}{\yrel+2*\rowspace}
    \renewcommand{\height}{\heightsingle}
    \renewcommand{\width}{\widthsingle}
    \node[rectangle,, line width=.35mm, fill=evergreen, rounded corners, minimum width=\width, minimum height=\height, draw=charcoal] (\nodenum) at (\xpos,\ypos) {\name};
    
	\renewcommand{\nodenum}{v17}
    \renewcommand{\name}{$C$}
	\renewcommand{\xpos}{4*\xrel}
    \renewcommand{\ypos}{\yrel+\rowspace}
    \renewcommand{\height}{\heightsingle}
    \renewcommand{\width}{\widthsingle}
    \node[rectangle,, line width=.35mm, fill=evergreen, rounded corners, minimum width=\width, minimum height=\height, draw=charcoal] (\nodenum) at (\xpos,\ypos) {\name};
    
	\renewcommand{\nodenum}{v17}
    \renewcommand{\name}{$D$}
	\renewcommand{\xpos}{4*\xrel}
    \renewcommand{\ypos}{\yrel}
    \renewcommand{\height}{\heightsingle}
    \renewcommand{\width}{\widthsingle}
    \node[rectangle,, line width=.35mm, fill=evergreen, rounded corners, minimum width=\width, minimum height=\height, draw=charcoal] (\nodenum) at (\xpos,\ypos) {\name};
        
	\renewcommand{\nodenum}{v17}
    \renewcommand{\name}{$D'$}
	\renewcommand{\xpos}{4*\xrel}
    \renewcommand{\ypos}{\yrel-\rowspace}
    \renewcommand{\height}{\heightsingle}
    \renewcommand{\width}{\widthsingle}
    \node[rectangle,, line width=.35mm, fill=evergreen, rounded corners, minimum width=\width, minimum height=\height, draw=charcoal] (\nodenum) at (\xpos,\ypos) {\name};
        
	\renewcommand{\nodenum}{v17}
    \renewcommand{\name}{$C'$}
	\renewcommand{\xpos}{4*\xrel}
    \renewcommand{\ypos}{\yrel-2*\rowspace}
    \renewcommand{\height}{\heightsingle}
    \renewcommand{\width}{\widthsingle}
    \node[rectangle,, line width=.35mm, fill=evergreen, rounded corners, minimum width=\width, minimum height=\height, draw=charcoal] (\nodenum) at (\xpos,\ypos) {\name};
            
	\renewcommand{\nodenum}{v17}
    \renewcommand{\name}{$R'$}
	\renewcommand{\xpos}{4*\xrel}
    \renewcommand{\ypos}{\yrel-3*\rowspace}
    \renewcommand{\height}{\heightsingle}
    \renewcommand{\width}{\widthsingle}
    \node[rectangle,, line width=.35mm, fill=evergreen, rounded corners, minimum width=\width, minimum height=\height, draw=charcoal] (\nodenum) at (\xpos,\ypos) {\name};

\end{tikzpicture}
}}
\subfigure[]{
\includegraphics[width=.21\textwidth]{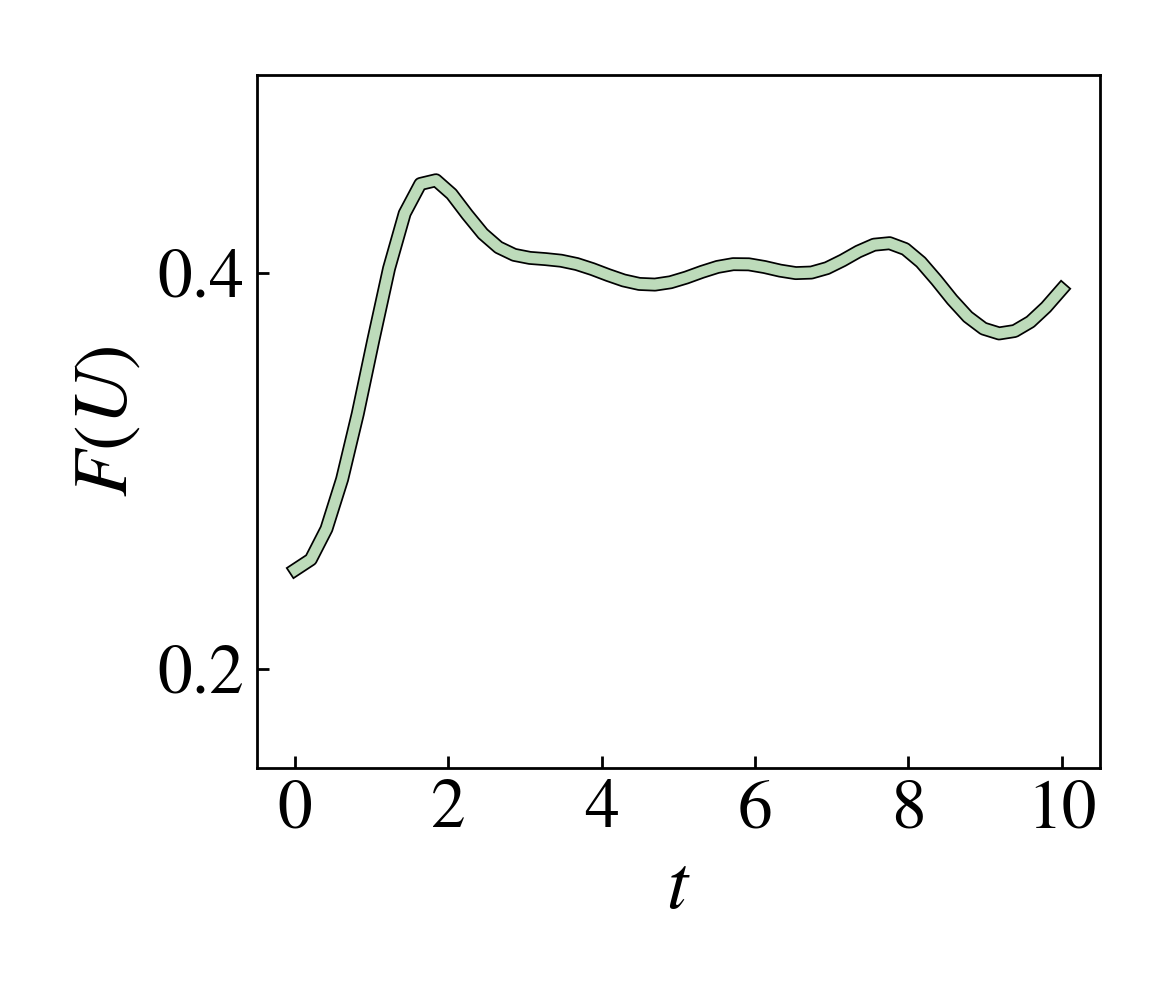}
}
\subfigure[]{
\includegraphics[width=.21\textwidth]{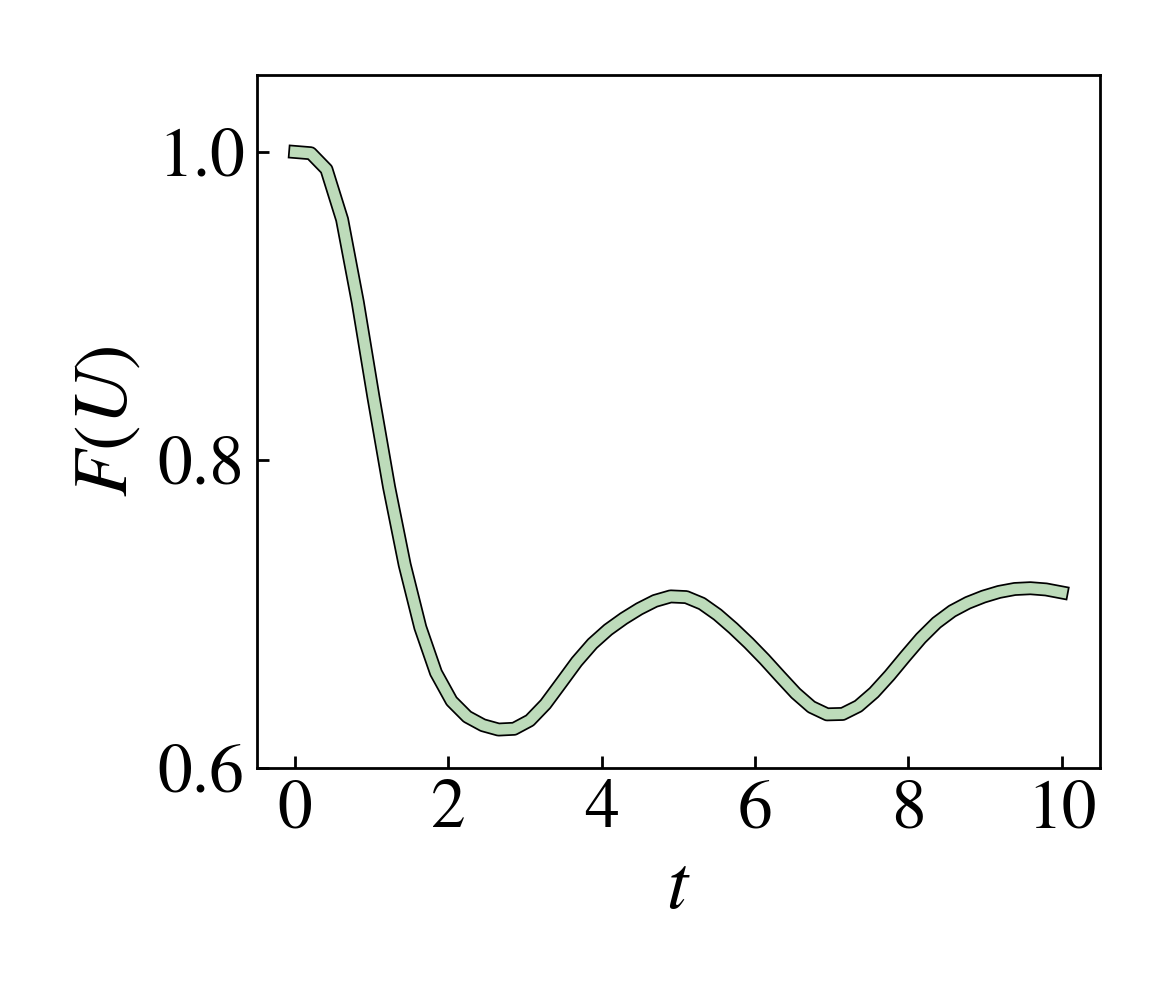}
}
\subfigure[]{
\includegraphics[width=.21\textwidth]{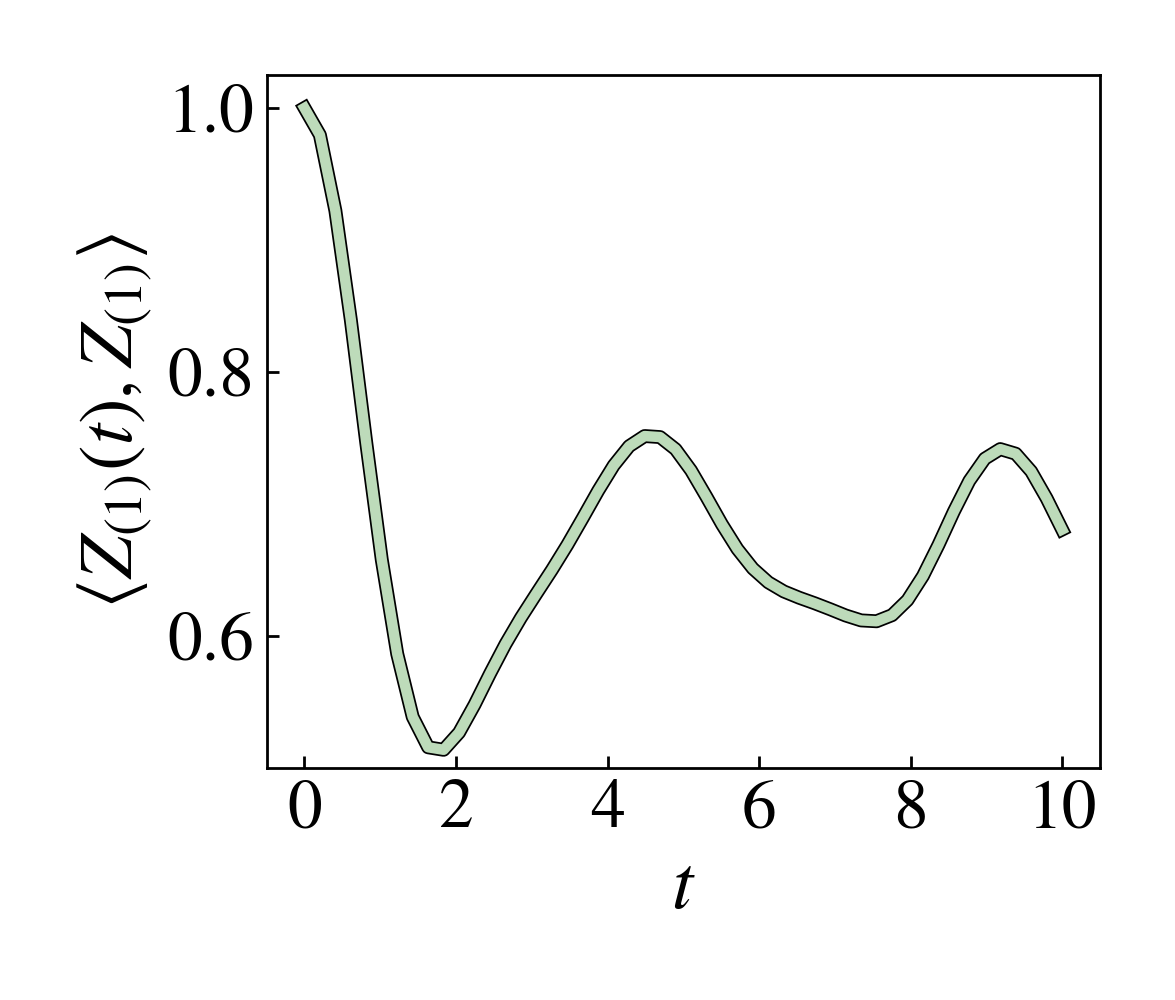}
}
\vspace{-3mm}
\caption{(a) Hayden-Preskill decoding protocol. Alice creates a Bell state between her system $A$ and a reference system $R$. She throws her half of the Bell state into the black hole, system $B$, which forms a Bell state with another system $B'$, representing radiation. The system $AB$ then evolves with $U(t)$. Bob creates a Bell state between his system $A'$ and a reference system $R'$. He then evolves the system $B'A'$ with $U^*(t)$. Bob then projectively measures a Bell state between the $d_D$-dimensional output systems $D$ and $D'$, ${\ket{\mathrm{Bell}}_{D,D'}=\frac{1}{\sqrt{d_D}}\sum_{i=1}^{d_D}\ket{i}_D\otimes \ket{i}_{D'}}$. Afterwards, Bob's reference system $R'$ forms a Bell state with $R$, indicating a successful teleportation of Alice's state. (b) Decoding fidelity of the Hayden-Preskill protocol as a function of time under evolution of the PXP model, where system $A$ is the first qubit and $D$ is the $n$-th qubit. (c) Decoding fidelity where both $D$ and $A$ are the first qubit. (d) Plot of $\langle  Z_{(1)} (t), Z_{(1)}\rangle$ as a function of time. In all plots, the system size is $n=12$.}
\vspace{-3mm}
\label{Figure:PXP_numerics}
\end{figure*}

Aside from states, observables can also experience revivals when they undergo Heisenberg evolution. We study the non-revivals resource in observables and its connection to quantum chaos. Define the inner product between the $n$-qubit operators $O_1$ and $O_2$ as $\langle O_1,O_2 \rangle=\frac{1}{d}\tr{O_1^\dagger O_2}$. Define the norm $\norm{\cdot}_2=\sqrt{\frac{1}{d}\tr{|\cdot|^2}}$. We use the notation ${\langle \cdot\rangle =\frac{1}{d}\tr{\cdot}}$ and ${O(t)=e^{iHt}O e^{-iHt}}$. We denote the $n$-qubit Pauli group as $\mathcal{P}_2^{\otimes n}$. Define $\mathcal{O}$ as the set of $n$-qubit observables $O$ satisfying $\norm{O}_2=1$.  Define the revival correlator of an operator $O$  as
\begin{equation}
	G(O)=\abs{\langle O(2\pi T),O\rangle}^2.
\end{equation}
For $O\in \mathcal{O}$, $G(O)=1$ if and only if $O=e^{i\phi}O(2\pi T)$ for some phase $\phi$. Free observables are those which exhibit perfect revivals, up to a phase.
\begin{definition}
The set of free observables, $\mathcal{O}_F$, consists of observables $O\in \mathcal{O}$ satisfying $G(O)=1$. Resourceful observables are all other normalized observables.
\end{definition}
Similar to the case of free density matrices, one can find the expression for the set of free observables.

\begin{prop}\label{Prop:FreeObservable}
A free observable $O_F\in \mathcal{O}_F$ is a normalized observable with the form
\begin{equation}
	O_F=\sum_{i,j\in \mathcal{A}}a_{i,j}\ket{\psi_i}\bra{\psi_j}+\sum_{i\in \mathcal{B}}a_{i,i}\ket{\psi_i}\bra{\psi_i}.
\end{equation}
\end{prop}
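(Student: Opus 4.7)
The plan is to translate the scalar identity $G(O)=1$ into an operator identity via Cauchy--Schwarz and then read off the allowed matrix elements in the energy eigenbasis of $H$.

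First I would exploit that $O(2\pi T)$ is the unitary conjugate of $O$, so $\norm{O(2\pi T)}_2=\norm{O}_2=1$. Cauchy--Schwarz applied to $\langle\cdot,\cdot\rangle$ then yields $|\langle O(2\pi T),O\rangle|\leq 1$, with equality if and only if $O(2\pi T)=\lambda O$ for some scalar $\lambda$; norm preservation forces $|\lambda|=1$, so $G(O)=1$ is equivalent to the operator identity $e^{iH 2\pi T}Oe^{-iH 2\pi T}=e^{i\phi}O$ for some phase $\phi$ (this recovers the ``up to a phase'' equivalence already quoted in the text).

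Expanding $O=\sum_{i,j}a_{i,j}\ket{\psi_i}\bra{\psi_j}$ in the eigenbasis, the operator identity reduces to the scalar condition $a_{i,j}(e^{i(E_i-E_j)2\pi T}-e^{i\phi})=0$ for every $(i,j)$. I would then split the index pairs into three categories: (a) $i,j\in\mathcal{A}$, where $(E_i-E_j)T\in\mathbb{Z}$ so the exponential equals $1$; (b) $i=j$, where it is trivially $1$; and (c) $i\neq j$ with at least one of $i,j$ in $\mathcal{B}$, where $E_i-E_j$ is irrational (using the paper's standing assumption together with the elementary fact that a rational minus an irrational is irrational).

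Finally I would pin down the phase using Hermiticity of the observable, $a_{j,i}=a_{i,j}^*$. A nonzero off-diagonal coefficient forces the scalar condition at both $(i,j)$ and $(j,i)$, so $e^{2i\phi}=1$ and $\phi\in\{0,\pi\}$. The $\phi=\pi$ branch would require $(E_i-E_j)T$ to be a half-integer for every nonvanishing $a_{i,j}$, which is incompatible with all three categories; combined with $\norm{O}_2=1$, this rules out $\phi=\pi$ and leaves $\phi=0$. The equation $e^{i(E_i-E_j)2\pi T}=1$ then forces $a_{i,j}=0$ throughout category (c) while leaving (a) and (b) unconstrained, yielding exactly the form stated in Proposition~\ref{Prop:FreeObservable}. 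I expect the main subtlety to be cleanly eliminating the $\phi=\pi$ branch: the Hermiticity of observables together with the irrationality of all $\mathcal{B}$-involving energy differences are the two ingredients that do this work.
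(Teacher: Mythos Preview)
Your argument is correct and proceeds along the same lines as the paper's proof: both reduce $G(O)=1$ to the operator identity $O(2\pi T)=e^{i\phi}O$, expand in the energy eigenbasis, and use Hermiticity together with the irrationality assumptions to kill the unwanted matrix elements. Your organization is cleaner than the paper's, which checks a couple of special sub-forms of $O$ and then asserts the general case; you instead pin down $\phi$ globally and read off the constraints in one pass. One small gap to patch: your deduction $e^{2i\phi}=1$ presupposes a nonzero off-diagonal coefficient, so you should dispose of the purely diagonal case separately (it is immediate, since a diagonal $O$ already has the stated form). With that one-line remark added, the proof is complete.
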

Similar to Theorem~\ref{Thm:FreeUnitariesDensity}, $\mathcal{U}_F$ is the set of unitaries which maps all free observables to free observables.

We define the following monotone to quantify the amount of resource in a normalized observable.

\begin{definition}
The observable non-revival monotone of $O\in\mathcal{O}$ is
\begin{equation}
	\mathcal{G}(O)=\max_{U_F\in \mathcal{U}_F}\left\{
1-G(U_F^\dagger O U_F)\right\}.
\end{equation}
\end{definition}
This function satisfies
\begin{enumerate}
	\item (Faithfulness) $\mathcal{G}(O)\geq 0 \ \forall \ O\in \mathcal{O}$ and $\mathcal{G}(O)= 0$ if and only if $O\in \mathcal{O}_F$.
	\item (Invariance) $\mathcal{G}(U_F^\dagger O U_F)=\mathcal{G}(O)$ $\forall \ U_F\in \mathcal{U}_F$.
\end{enumerate}
This function is positive for resourceful observables (due to faithfulness) and does not increase if we evolve an observable by a free unitary (due to invariance).

We show how this monotone relates to scrambling, i.e., information spreading. The out-of-time-ordered correlator (OTOC) is a function used to experimentally measure scrambling, which marks the onset of chaos in quantum many-body systems. We bound the OTOC with the observable non-revival monotone, giving our monotone operational meaning. For a unitary $U$, define the OTOC between two operators $O_1$ and $O_2$ as
\begin{equation}
	\OTOC(O_1,O_2;U)=\langle U^\dagger O_1UO_2U^\dagger O_1UO_2\rangle.
\end{equation}
When $O_1$ and $O_2$ are disjoint, an OTOC value near zero is used as an indication that information from $O_1$ has spread to the support of $O_2$, quantifying information scrambling. For $U(t)=e^{-iH t}$ and either $O_1\in \mathcal{O}_F$ or $O_2\in \mathcal{O}_F$, it is clear that ${\OTOC(O_1,O_2;U(2\pi T))=\OTOC(O_1,O_2;U(0))}$. We take the case where $O_1$ and $O_2$ are Pauli operators. If $U(t)$ is scrambling, $O_1(t)$ becomes a uniform sum of many Pauli operators, yielding an OTOC value near 0. Conversely, if $ O_1(t)$ retains a large overlap with $O_1$, i.e., if a revival occurs, then one expects the OTOC to be close to its initial value and scrambling is suppressed. This motivates the following bound.

\begin{theorem}\label{Theorem:OTOCBound}
	For a Pauli operator $O_1$,
\begin{equation}
\begin{split}
	1-2\mathcal{G}(O_1) \leq \OTOC(O_1,O_1;U(2\pi T))
	&\leq  1.
\end{split}
\end{equation}	
\end{theorem}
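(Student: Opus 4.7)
The plan is to exploit the Pauli structure of $O_1$ by expanding the Heisenberg-evolved operator $O_1(t)$ in the $n$-qubit Pauli basis and reducing the OTOC to a signed sum of squared coefficients. Write $O_1(2\pi T) = \sum_{P\in \mathcal{P}_2^{\otimes n}} c_P P$ with real $c_P$ (since $O_1(t)$ is Hermitian). Unitarity of Heisenberg evolution together with $\|O_1\|_2 = 1$ gives the normalization $\sum_P c_P^2 = 1$. Since every Pauli $P$ either commutes or anticommutes with $O_1$, we have $P O_1 = \epsilon_P O_1 P$ with $\epsilon_P \in \{+1,-1\}$, and using $O_1^2 = I$ together with $\langle P P' \rangle = \delta_{P,P'}$ yields
\begin{equation}
\OTOC(O_1,O_1;U(2\pi T)) = \sum_P \epsilon_P\, c_P^2.
\end{equation}

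From this identity the upper bound $\OTOC \leq \sum_P c_P^2 = 1$ is immediate. For the lower bound, I would rearrange the identity as $1 - \OTOC = 2\sum_{P:\{P,O_1\}=0} c_P^2$. The key observation is that the anticommutant set excludes $O_1$ itself (since $O_1$ commutes with itself), so the sum is dominated by $\sum_{P\neq O_1} c_P^2 = 1 - c_{O_1}^2$. Noting that $c_{O_1} = \langle O_1(2\pi T), O_1\rangle$, this identifies $c_{O_1}^2 = G(O_1)$, giving
\begin{equation}
1 - \OTOC(O_1,O_1;U(2\pi T)) \leq 2\bigl(1 - G(O_1)\bigr).
\end{equation}

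To close the argument, I would invoke the definition of the observable non-revival monotone: $\mathcal{G}(O_1) = \max_{U_F\in\mathcal{U}_F}\{1-G(U_F^\dagger O_1 U_F)\} \geq 1 - G(O_1)$, obtained by choosing $U_F = \mathbb{I}$ (the identity is trivially a free unitary). Combining this with the previous inequality yields $1 - \OTOC \leq 2\mathcal{G}(O_1)$, i.e., the claimed lower bound.

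The main obstacle I anticipate is conceptual rather than technical: one must be careful that the Pauli reduction really produces a sum of squares with the precise sign structure, and in particular that the $O_1$ term itself contributes with the correct sign $+1$ so that it is excluded from the defect $1 - \OTOC$. Once this bookkeeping is done, both bounds follow from a single identity, a normalization constraint, and the faithfulness of $\mathcal{G}$.
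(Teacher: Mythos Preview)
Your proposal is correct and follows essentially the same approach as the paper: expand $O_1(2\pi T)$ in the Pauli basis with real coefficients, reduce the OTOC to the signed sum $\sum_P \epsilon_P c_P^2$, bound it using the normalization $\sum_P c_P^2=1$ together with $c_{O_1}^2=G(O_1)$, and then pass from $G$ to $\mathcal{G}$ via the identity free unitary. The only cosmetic differences are that the paper first treats general $O_1,O_2$ before specializing, and obtains the upper bound by Cauchy--Schwarz rather than directly from the normalization identity.
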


When $O_1$ exhibits substantial revivals, $\mathcal{G}(O_1)$ is small, which can lead to larger values of $\OTOC(O_1,O_1;U(2\pi T))$ near its initial value of 1. These OTOC revivals, indicative of `unscrambled' information, are consistent with previous numerical studies of scrambling in QMBS systems~\cite{PhysRevResearch.4.023095}.

We present applications of our resource theory to two information recovery problems. In both applications, we consider a QMBS system with the Hamiltonian $H_{\mathrm{QMBS}}$. Assume that all eigenvalues of $H_{\mathrm{QMBS}}$ are irrational. In this case, the scar states of $H_{\mathrm{QMBS}}$ are a set of eigenstates with equally-spaced eigenvalues of the form $\{E_0m_i+E_1\}_{i=1}^{N_R}$, where $m_i$ is a non-zero integer and either $E_0$ or $E_1$ (or both) is an irrational number. 

We rescale $H_{\mathrm{QMBS}}$ to the new Hamiltonian, ${H'_{\mathrm{QMBS}}=(H_{\mathrm{QMBS}}-E_1)/E_0}$, so that it meets the constraints required by our resource theory, i.e., multiple rational and irrational eigenvalues. The Hamiltonian $H'_{\mathrm{QMBS}}$ has $N_R$ rational eigenvalues, $\{m_i\}_{i=1}^{N_R}$, and a set of $N_R$ rational eigenstates, $S_{\mathrm{Rat}}$. We assume that the remaining eigenvalues are irrational and that the difference between any two irrational eigenvalues is irrational. We define the evolution unitary as $U(t)=e^{-iH'_{\mathrm{QMBS}}t}$.

In our first application, we use our resource theory to bound the success of the Hayden-Preskill decoding protocol (explained in Fig.~\ref{Figure:PXP_numerics} (a)), in which a quantum state is thrown into and recovered from a  black hole. We assume that the black hole dynamics are given by the QMBS unitary, $U(t)$. This assumption is motivated by the fact that both this protocol~\cite{Landsman2019} and QMBS dynamics~ \cite{Bernien2017Probing} have been demonstrated on quantum devices. The success of the recovery is given by the decoding fidelity 
\begin{equation}\label{Eq:OTOCFidelity}
	F(U(t))=\frac{1}{d_A^2 \av{P_A, P_D}\OTOC(P_A,P_D;U(t))},
\end{equation}
where $P_A$ ($P_D$) is a Pauli operator on system $A$ ($D$), and $d_A$ is the dimension of system $A$. The average is taken uniformly over all Pauli operators on their respective systems. At $t=0$, $F(U(0))=\frac{1}{d_{A\backslash D}^2 }$, where $d_{A\backslash D}=2^{\abs{A\backslash D}}$. When $A$ and $D$ are disjoint, $F(U(0))=\frac{1}{d_A^2}$.

We numerically study the dynamics of $F(U(t))$. We consider the $n$-qubit PXP model with periodic boundary conditions: $H_{\mathrm{PXP}}=\sum_{i=1}^n  \Pi_{(i-1)} X_{(i)} \Pi_{(i+1)}$,
where $\Pi^0_{(i)}=\ket{0}\bra{0}_i$, and $X_{(i)}$ and $Z_{(i)}$ are Pauli operators on the $i$-th qubit. The evolution unitary is $U(t)=e^{-iH_{\mathrm{PXP}}t}$. This QMBS system contains scar states with approximately equal energy spacing. We do not rescale this Hamiltonian, as this will only rescale the time parameter in the numerical simulations.

We show that PXP dynamics produce oscillations in $F(U(t))$, varying the success of recovery. We consider two scenarios, finding that: (1) when $A$ and $D$ are disjoint, $F(U(t))$ attains local minima when the unitary unscrambles information (see Fig.~\ref{Figure:PXP_numerics} (b)); (2) when $A$ and $D$ are the same, $F(U(t))$ attains local maxima when the unitary unscrambles information (see Fig.~\ref{Figure:PXP_numerics} (c)). To explain these findings, we plot $\langle Z_{(1)}(t),Z_{(1)}\rangle$ in Fig.~\ref{Figure:PXP_numerics} (d), showing that the PXP dynamics produce revivals of the Pauli operator $Z_{(1)}$. When these revivals occur, $\av{P_A, P_D}\OTOC(P_A,P_D;U(t))$ (and hence $F(U(t))$) evolves toward its initial value. This implies that when $\langle Z_{(1)}(t),Z_{(1)}\rangle$ attains a peak, Fig.~\ref{Figure:PXP_numerics} (b) attains a trough while Fig.~\ref{Figure:PXP_numerics} (c) attains a peak.

In our second application, we present a cryptography protocol which utilizes QMBS dynamics to recover damaged information. A complementary protocol which uses time-reversal for information recovery is presented in~\cite{PhysRevLett.125.040605}. Alice wishes to send a message $\ket{\phi}$ to Bob, who will use this state to compute $\bra{\phi}O\ket{\phi}$, where $O$ is a single-qubit, resourceful observable. Eve plans to eavesdrop by measuring $\ket{\phi}$ as it is transmitted to Bob. To thwart Eve's efforts and securely transmit the message, Alice and Bob implement the following encoding/decoding scheme.

\comment{
\begin{figure}[t!]
\scalebox{.9}{
\begin{tikzpicture}

    \draw [thick,color=charcoal]
    (-\xrel,\yrel)--++(5.5*\xrel,0)
    (1.5*\xrel,-.25*\yrel)--++(4*\xrel,0)
    (4.5*\xrel,1.1*\yrel)--++(1*\xrel,0)
    (4.5*\xrel,.9*\yrel)--++(1*\xrel,0);

	\renewcommand{\nodenum}{v1}
    \renewcommand{\name}{$U(t_1)$}
	\renewcommand{\xpos}{0}
    \renewcommand{\ypos}{\yrel}
    \renewcommand{\height}{\heightsingle}
    \renewcommand{\width}{\widthsingle}
    \node[rectangle, fill=egg, rounded corners, minimum width=\width, minimum height=\height, draw] (\nodenum) at (\xpos,\ypos) {\name};
    
	\renewcommand{\nodenum}{v1}
    \renewcommand{\name}{$\mathcal{M}_p$}
	\renewcommand{\xpos}{1.5*\xrel}
    \renewcommand{\ypos}{.4*\yrel}
    \renewcommand{\height}{1.95*\heightsingle}
    \renewcommand{\width}{\widthsingle}
    \node[rectangle, fill=evergreen, rounded corners, minimum width=\width, minimum height=\height, draw] (\nodenum) at (\xpos,\ypos) {\name};
     
	\renewcommand{\nodenum}{v1}
    \renewcommand{\name}{$U(t_2)$}
	\renewcommand{\xpos}{3*\xrel}
    \renewcommand{\ypos}{\yrel}
    \renewcommand{\height}{\heightsingle}
    \renewcommand{\width}{\widthsingle}
    \node[rectangle, fill=egg, rounded corners, minimum width=\width, minimum height=\height, draw] (\nodenum) at (\xpos,\ypos) {\name};
       
	\renewcommand{\nodenum}{v1}
    \renewcommand{\name}{\includegraphics[width=.025\textwidth]{Figures/Measure.png}}
	\renewcommand{\xpos}{4.5*\xrel}
    \renewcommand{\ypos}{\yrel}
    \renewcommand{\height}{\heightsingle}
    \renewcommand{\width}{\widthsingle}
    \node[rectangle, fill=white, rounded corners, minimum width=\width, minimum height=\height, draw] (\nodenum) at (\xpos,\ypos) {\name};
          
	\renewcommand{\nodenum}{v1}
    \renewcommand{\name}{\includegraphics[width=.022\textwidth]{Figures/Girl.png}}
	\renewcommand{\xpos}{-2.25*\xrel}
    \renewcommand{\ypos}{0.75*\yrel}
    \renewcommand{\height}{\heightsingle}
    \renewcommand{\width}{\widthsingle}
    \node[] (\nodenum) at (\xpos,\ypos) {\name};

    \renewcommand{\nodenum}{v1}
    \renewcommand{\name}{$\ket{\phi}$}
	\renewcommand{\xpos}{-1.5*\xrel}
    \renewcommand{\ypos}{\yrel}
    \renewcommand{\height}{\heightsingle}
    \renewcommand{\width}{\widthsingle}
    \node[] (\nodenum) at (\xpos,\ypos) {\name};

\end{tikzpicture}
}
\vspace{-3mm}
\caption{Protocol for Alice to recover damaged information. Alice encodes her free pure state $\ket{\phi}$ with $U(t_1)$, which Bob measures via the weak measurement channel $\mathcal{M}_p$. Alice decodes her state with $U(t_2)$ and measures a local observable.}
\label{Fig:Protocol}
\vspace{-6mm}
\end{figure}
}

(1) Alice prepares many copies of the density matrix $\rho_0=\ket{\phi}\bra{\phi}$ where ${\ket{\phi}=\sum_{i\in \mathcal{A}}c_i\ket{\psi_i}}$ is a free pure state.

(2) To hide her information, Alice applies the encoding unitary $U(t_1)=e^{-iH'_{\mathrm{QMBS}} t_1}$, obtaining  ${\rho_1=U(t_1)\rho_0 U^\dagger(t_1)}$, where $t_1$ is selected such that $\abs{\bra{\phi}\rho_1\ket{\phi}}$ is small. She then sends all copies of her state to Bob via a public communication channel.

(3) Eve intercepts the message in the communication channel, performing a weak measurement on $\rho_1$ with a strength of $p\in(0,1)$, which Bob knows. This produces
\begin{equation}
\begin{split}
  \mathcal{M}_p(\rho_1)=&(1-p)\rho_1\otimes \ket{0}\bra{0}_E\\
  &+p\sum_{i=1}^{d}\ket{i}\bra{i}_A\rho_1 \ket{i}\bra{i}_A\otimes \ket{i}\bra{i}_E,
\end{split}
\end{equation}
where $\ket{i}_A$ denotes a computational basis state on Alice's $d$-dimensional system and $\ket{i}_E$ denotes a basis state on Eve's $d+1$-dimensional system. Intuitively, a weak measurement can be interpreted as a probabilistic projective measurement; more details can be found in the appendix.

(4) After receiving the state $ \mathcal{M}_p(\rho_1)$, Bob applies the decoding unitary $U(t_2)$, where $t_2=mt_R-t_1>0$, $m$ is an integer, and $t_R$ is the earliest time at which ${U(t_R)\ket{\phi}=\ket{\phi}}$. The final reduced state on Alice's original system, where we use $\ket{i}=\ket{i}_A$, is
\begin{equation}\label{Eq:rhof}
\begin{split}
\rho_{f}
	&=(1-p)\ket{\phi}\bra{\phi}  \\
	&\hspace{5mm}+p\sum_{i=1}^{d}U (t_2)\ket{i}\bra{i}U^\dagger (t_2)\abs{\bra{i}U(t_1)\ket{\phi}}^2.
\end{split}
\end{equation}

(5) Bob uses $\rho_f$ to construct $\bra{\phi}O\ket{\phi}$.

To accomplish step (5), we first use  Eq.~\eqref{Eq:rhof} to write
\begin{equation}
\begin{split}
    \bra{\phi}O\ket{\phi} =&\frac{1}{1-p}[\tr{O\rho_{f}} \\
 &-p\sum_{i=1}^{d}\bra{i}U^\dagger (t_2) O U (t_2)\ket{i}\abs{\bra{i}U(t_1)\ket{\phi}}^2].
\end{split}
\end{equation}
We assume that each computational basis state $\ket{i}$ has a small overlap with the scar states (rational eigenstates) of $H_{\mathrm{QMBS}}'$. Thus $U(t_2)$ does not generate revivals of $\ket{i}$. Instead, $U(t_2)$ evolves $\ket{i}$ chaotically, yielding ${\bra{i}U^\dagger (t_2) O U (t_2)\ket{i}\rightarrow \int_{\Hr} dU\bra{i}U^\dagger O U\ket{i}=\frac{1}{d}\tr{O}}$
when $t_2$ is large. The expectation value simplifies to
\begin{equation}\label{Eq:ExpectFinal}
	\bra{\phi}O\ket{\phi}=\frac{1}{1-p}\left(\tr{O\rho_{f}}-\frac{p}{d}\tr{O}\right).
\end{equation}

By experimentally measuring $\tr{O\rho_{f}}$, Bob can obtain $\bra{\phi}O\ket{\phi}$ through Eq.~\eqref{Eq:ExpectFinal}. Thus, Bob can recover some form of Alice's initial information in spite of Eve's interference. Furthermore, the measurement of $\tr{O\rho_{f}}$ can be performed efficiently via techniques such as classical shadow tomography \cite{Huang2020}. In the PXP model, the assumption on $\bra{i}U^\dagger (t_2) O U (t_2)\ket{i}$ does not necessarily hold, as some computational basis states have a large overlap with scarred eigenstates. Thus, it is interesting to explore models which permit this assumption.

\textit{Summary.} We developed a resource theory of quantum non-revivals. States and observables with no amount of resource are those which experience perfect periodic revivals. We measure the amount of resource in a state or observable by quantifying the suppression of revivals. We show that such measures can be used to understand the reversal of information scrambling via the out-of-time-ordered correlator. This suggests that one may also be able to bound magic with our framework, which we leave as an open problem. We also show that the revivals present in quantum many-body scarred systems can both enable and inhibit the retrieval of quantum information. For future consideration, we propose exploring the connection between our framework and other resource theories, such as the theory of coherence.

\textit{Acknowledgements.}
We are grateful to Arthur Jaffe and Eric J. Heller for insightful discussions. This work was supported in part by the ARO Grant W911NF-19-1-0302 and the ARO
MURI Grant W911NF-20-1-0082. A.M.G thanks the Harvard Quantum Initiative for financial support.

\bibliography{Bibliography}

\newpage
\onecolumngrid
\begin{appendix}

\section{Toy model for irrational spacing assumption}
In the main text, we make the assumption that, for a given Hamiltonian, no two irrational eigenvalues have rational spacing. We provide a toy model
\begin{equation}
H=\sqrt{2}(X\otimes X+Z\otimes Z)+ Y\otimes Y + \underset{\substack{{P,P'\in \{X,Y,Z\}},\\P\neq P'}}{\sum}P\otimes P'
\end{equation}
which we state satisfies this condition. We provide more details in this appendix. We numerically compute the eigenvalues of this Hamiltonian: $\{3,-1-\sqrt{2},-1+\sqrt{2}-\sqrt{10},-1+\sqrt{2}+\sqrt{10}\}$. The first eigenvalue in this set is rational, while the remaining eigenvalues are irrational. The difference between any two irrational values is an irrational number. In the main text, we focus on systems which have more than one rational eigenvalue, as these systems have more practical applications for encoding quantum information and are relevant to quantum many-body scarred systems. However, one can extend the results of this work to include systems with only one rational eigenvalue.

\section{Proof of Theorem~\ref{Thm:Fidelity}}
Consider an arbitrary state in $ L_{\mathrm{Rat}}$,
\begin{equation}
	\ket{\psi_F}=\sum_{i\in \mathcal{A}}c_i\ket{\psi_i}\in L_{\mathrm{Rat}}.
\end{equation}
The eigenvalue of a rational state $\ket{\psi_i}$ can be written as $\tfrac{n_i}{T}$ for an integer $n_i$. Therefore,
\begin{equation}
	e^{-iH2\pi T}\ket{\psi_F}=\sum_{i\in \mathcal{A}}c_ie^{-i2\pi n_i}\ket{\psi_i}=\sum_{i\in \mathcal{A}}c_i\ket{\psi_i}=\ket{\psi_F}.
\end{equation}
It follows that $F_R(\ket{\psi}_F)=1$. In the case where we instead choose $\ket{\psi_F}=\ket{\psi_i}\in S_{\mathrm{Irr}}$, then ${e^{-iH2\pi T}\ket{\psi_F}=e^{-iE_i2\pi T}\ket{\psi_F}}$ and $F_R(\ket{\psi}_F)=1$. It follows that all states in the set $S_{\mathrm{Irr}}\sqcup L_{\mathrm{Rat}}$ have a revival fidelity of unity.

Now consider the case where we have a state $\ket{\psi_R}\notin S_{\mathrm{Irr}}\sqcup L_{\mathrm{Rat}}$ with the form
\begin{equation}\label{Eq:psiR}
	\ket{\psi_R}=\sum_{\substack{i\in \mathcal{B}'\subseteq \mathcal{B},\\ \abs{\mathcal{B}'}> 1}}c_i\ket{\psi_i},
\end{equation}
where all coefficients are nonzero. Then
\begin{equation}
	e^{-iH2\pi T}\ket{\psi_R}=\sum_{\substack{i\in \mathcal{B}'\subseteq \mathcal{B},\\ \abs{\mathcal{B}'}>1}}c_ie^{-iE_i 2\pi T}\ket{\psi_i}\neq \ket{\psi_R}.
\end{equation}
This follows since all $E_i$ in this sum are irrational and the difference between any two irrational eigenenergies is assumed to be irrational, implying that $e^{-iE_i 2\pi T}\neq e^{-iE_j 2\pi T}$ for $i\neq j$. Therefore,  $F_R(\ket{\psi_R})<1$. For concreteness, we provide an example of a state with the form of Eq.~\eqref{Eq:psiR}. Take $\ket{\psi_R}=\frac{1}{\sqrt{2}}(\ket{\psi_1}+\ket{\psi_2})$, where $\ket{\psi_1}$ and $\ket{\psi_2}$ are irrational eigenstates. The recovery fidelity for this state is $F_R(\ket{\psi_R}) = \sqrt{\frac{1}{2}(1+\cos((E_1-E_2)2\pi T))}$. Since $E_1-E_2$ is an irrational number, then $\cos((E_1-E_2)2\pi T)\neq 1$, so $F_R(\ket{\psi_R})<1$.

Lastly, we take the case where a state $\ket{\psi_R} \notin S_{\mathrm{Irr}}\sqcup L_{\mathrm{Rat}}$ has the form
\begin{equation}
	\ket{\psi_R}=\sum_{\substack{i\in \mathcal{A}'\subseteq \mathcal{A}, \\\abs{\mathcal{A}'}> 0}}c_i\ket{\psi_i}+\sum_{\substack{i\in \mathcal{B}'\subseteq \mathcal{B},\\ \abs{\mathcal{B}'}>0}}c_i\ket{\psi_i},
\end{equation}
where all coefficients are nonzero. Since all eigenstates in the second sum have irrational energies, then 
\begin{equation}
	e^{-iH2\pi T}\ket{\psi_R}=\sum_{\substack{i\in \mathcal{A}'\subseteq \mathcal{A}, \\\abs{\mathcal{A}'}> 0}}c_i\ket{\psi_i}+\sum_{\substack{i\in \mathcal{B}'\subseteq \mathcal{B},\\ \abs{\mathcal{B}'}>0}}c_i e^{-iE_i 2\pi T}\ket{\psi_i}\neq \ket{\psi_R}
\end{equation}
and $F_R(\ket{\psi_R})<1$. Hence, $F_R(\ket{\psi_R})=1$ if and only if  $\ket{\psi}\in S_{\mathrm{Irr}}\sqcup L_{\mathrm{Rat}}$, yielding $\mathcal{S}_F=S_{\mathrm{Irr}}\sqcup L_{\mathrm{Rat}}$.

\section{Set of Free Pure states of Rescaled Hamiltonian}
In the main text, we state that rescaling a Hamiltonian by a constant can change the set of free pure states. Here, we provide an example. Consider the Hamiltonian, 
${H=\ket{00}\bra{00}+2\ket{01}\bra{01}+\sqrt{2}\ket{10}\bra{10}+2\sqrt{2}\ket{11}\bra{11}}$. The rational eigenstates are $\ket{00}$ and $\ket{01}$, since they have rational eigenenergies of 1 and 2, respectively. The irrational eigenstates are $\ket{10}$ and $\ket{11}$, since they have irrational eigenenergies of $\sqrt{2}$ and $2\sqrt{2}$, respectively. The set of free pure states is therefore $\{\ket{10},\ket{11}\}\sqcup\mathrm{span}(\{\ket{00},\ket{01}\})$. By rescaling $H$ to $\sqrt{2}H$, the rational eigenstates of the new Hamiltonian are $\ket{10}$ and $\ket{11}$, and the set of free pure states is $\{\ket{00},\ket{01}\}\sqcup\mathrm{span}(\{\ket{10},\ket{11}\})$. We use this example to demonstrate that the resource theory is Hamiltonian-dependent.

\section{Proof of Lemma~\ref{Lemma:Free}}
By definition, for a free unitary $U$, $\{U\ket{\psi_1}:\ket{\psi_1}\in \mathcal{S}_F\}\subseteq \mathcal{S}_F$. Unitaries are one-to-one mappings, i.e. for any two states $\ket{\psi_1}$ and $\ket{\psi_2}$ satisfying $\ket{\psi_1}\neq \ket{\psi_2}$, then $U \ket{\psi_1}\neq U \ket{\psi_2}$. Therefore, given a free unitary $U$, $\{U\ket{\psi_1}:\ket{\psi_1}\in \mathcal{S}_F\}=\mathcal{S}_F$. For any fixed free pure state $\ket{\psi_2}$, there exists a free pure state $\ket{\psi_1}$ such that $\ket{\psi_2}=U\ket{\psi_1}$. Hence $U^\dagger \ket{\psi_2}=\ket{\psi_1}\in\mathcal{S}_F$ for all $\ket{\psi_2}\in \mathcal{S}$, implying that $U^\dagger \in \mathcal{U}_F$.

\section{Proof of Theorem~\ref{Thm:FreeUnitaries}}
Let $U_F\in \mathcal{U}_F$. This unitary can be written as
\begin{equation}
	U_F=\sum_{i,j\in \mathcal{A} \cup \mathcal{B}}c_{i,j}\ket{\psi_i}\bra{\psi_j}
\end{equation}
for some coefficients $c_{i,j}$ which we now find.
For a fixed $k\in \mathcal{A}$,
\begin{equation}
\begin{split}
	U_F\ket{\psi_k}
	&=\sum_{i,j\in \mathcal{A}\cup \mathcal{B}}c_{i,j}\ket{\psi_i}\bra{\psi_j}\psi_k\rangle\\
	&=\sum_{i,j\in \mathcal{A}\cup \mathcal{B}}c_{i,j}\ket{\psi_i}\delta_{j,k}\\	
	&=\sum_{i\in \mathcal{A}\cup \mathcal{B}}c_{i,k}\ket{\psi_i}.
\end{split}
\end{equation}
We require that $U_F\ket{\psi_k}\in \mathcal{S}_F=S_{\mathrm{Irr}}\sqcup L_{\mathrm{Rat}}$. We first take the case where $U_F\ket{\psi_k}\in S_{\mathrm{Irr}}$ and show that this leads to a contradiction. In this case, there exists an index $j\in \mathcal{B}$ such that $\abs{c_{j,k}}=1$ and $c_{i,k}=0$ for $i\neq j$ and $i\in \mathcal{A}\cup \mathcal{B}$. Since $N_R>1$, then there is an index $i\in \mathcal{A}$ satisfying $i\neq j$, such that there exists a free pure state $c_1\ket{\psi_i}+c_2\ket{\psi_k}$ which satisfies
\begin{equation}
\begin{split}
U_F\left[c_1\ket{\psi_i}+c_2\ket{\psi_k}\right]
	&=c_1U_F\ket{\psi_i}+c_2\ket{\psi_j}\\
	&\notin \mathcal{S}_F.
\end{split}
\end{equation}
This implies that $U_F\notin \mathcal{U}_F$, which is a contradiction. Therefore, it must be true that $U_F\ket{\psi_k}\in L_{\mathrm{Rat}}$ for all $k\in \mathcal{A}$. Hence, $c_{i,k}=0$ when $i\in \mathcal{B}$ and $k\in \mathcal{A}$. 

Since $U_F^\dagger \in \mathcal{U}_{F}$, then it also holds that $U_F^\dagger \ket{\psi_k}\in  L_{\mathrm{Rat}}$ for all $k\in \mathcal{A}$. Therefore
\begin{equation}
\begin{split}
	U_F^\dagger \ket{\psi_k}
	&=\sum_{i,j\in \mathcal{A}\cup \mathcal{B}}c^*_{i,j}\ket{\psi_j}\bra{\psi_i}\psi_k\rangle\\
	&=\sum_{i,j\in \mathcal{A}\cup \mathcal{B}}c^*_{i,j}\ket{\psi_j}\delta_{i,k}\\	
	&=\sum_{j\in \mathcal{A}\cup \mathcal{B}}c^*_{k,j}\ket{\psi_j}
\end{split}
\end{equation}
implies that $c_{k,j}=0$ when $k\in \mathcal{A}$ and $j\in \mathcal{B}$. The unitary is then 
\begin{equation}
\begin{split}
	U_F
	&=\sum_{\substack{i,j\in \mathcal{A}}}c_{i,j}\ket{\psi_i}\bra{\psi_j}+\sum_{\substack{i\in \mathcal{B},\\ j\in \mathcal{A}}}c_{i,j}\ket{\psi_i}\bra{\psi_j}+\sum_{\substack{i\in \mathcal{A},\\ j\in \mathcal{B}}}c_{i,j}\ket{\psi_i}\bra{\psi_j}+\sum_{\substack{i,j\in \mathcal{B}}}c_{i,j}\ket{\psi_i}\bra{\psi_j}\\
	&=\sum_{\substack{i,j\in \mathcal{A}}}c_{i,j}\ket{\psi_i}\bra{\psi_j}+\sum_{\substack{i,j\in \mathcal{B}}}c_{i,j}\ket{\psi_i}\bra{\psi_j}.
\end{split}
\end{equation}

Now take $k\in B$ and compute
\begin{equation}
\begin{split}
	U_F^\dagger \ket{\psi_k}
	&=\left[\sum_{\substack{i,j\in \mathcal{A}}}c^*_{i,j}\ket{\psi_j}\bra{\psi_i}+\sum_{\substack{i,j\in \mathcal{B}}}c^*_{i,j}\ket{\psi_j}\bra{\psi_i}\right]\ket{\psi_k}\\
	&=\sum_{\substack{i,j\in \mathcal{B}}}c^*_{i,j}\ket{\psi_j}\delta_{i,k}\\
	&=\sum_{\substack{i\in \mathcal{B}}}c^*_{k,j}\ket{\psi_j}.
\end{split}
\end{equation}
In order for $U_F^\dagger\ket{\psi_k}\in \mathcal{S}_F$, it must be true that  $U_F^\dagger\ket{\psi_k}\in S_{\mathrm{Irr}}$. For $i,k\in \mathcal{B}$, there exists an index $l_k\in \mathcal{B}$ such that $c_{k,j}=0$ for $j\neq l_k$ and $\abs{c_{k,l_k}}=1$. The unitary is
\begin{equation}
\begin{split}
	U_F
	&=\sum_{\substack{i,j\in \mathcal{A}}}c_{i,j}\ket{\psi_i}\bra{\psi_j}+\sum_{\substack{i,j\in \mathcal{B}}}c_{i,j}\ket{\psi_i}\bra{\psi_j}\\
	&=\sum_{\substack{i,j\in \mathcal{A}}}c_{i,j}\ket{\psi_i}\bra{\psi_j}+\sum_{\substack{i\in \mathcal{B}}}c_{i,l_i}\ket{\psi_i}\bra{\psi_{l_i}}.
\end{split}
\end{equation}

We find the conditions on $U_F$ which make it unitary:
\begin{equation}
\begin{split}
	U_F^\dagger U_F
	&=\left[\sum_{i',j'\in \mathcal{A}}c^*_{i',j'}\ket{\psi_{j'}}\bra{\psi_{i'}}+\sum_{i'\in \mathcal{B}}c^*_{i',l_{i'}}\ket{\psi_{l_{i'}}}\bra{\psi_{i'}}\right]\left[\sum_{i,j\in \mathcal{A}}c_{i,j}\ket{\psi_i}\bra{\psi_j}+\sum_{i\in \mathcal{B}}c_{i,l_i}\ket{\psi_i}\bra{\psi_{l_i}}\right]\\
	&=\sum_{i,j,i',j'\in \mathcal{A}}c^*_{i',j'}c_{i,j}\ket{\psi_{j'}}\bra{\psi_{i'}}\psi_i\rangle \bra{\psi_j}+\sum_{\substack{i,i'\in \mathcal{B}}}c^*_{i',l_{i'}}c_{i,l_i}\ket{\psi_{l_{i'}}}\bra{\psi_{i'}}\psi_{i}\rangle \bra{\psi_{l_i}}\\
	&=\sum_{i,j,i',j'\in \mathcal{A}}c^*_{i',j'}c_{i,j}\delta_{i',i}\ket{\psi_{j'}} \bra{\psi_j}+\sum_{\substack{i,i'\in \mathcal{B}}}c^*_{i',l_{i'}}c_{i,l_i}\delta_{i',i}\ket{\psi_{l_{i'}}} \bra{\psi_{l_i}}\\
	&=\sum_{i,j,j'\in \mathcal{A}}c^*_{i,j'}c_{i,j}\ket{\psi_{j'}} \bra{\psi_j}+\sum_{\substack{i\in \mathcal{B}}}c^*_{i,l_i}c_{i,l_i}\ket{\psi_{l_i}} \bra{\psi_{l_i}}\\
	&=\sum_{j,j'\in \mathcal{A}}\left[\sum_{i\in \mathcal{A}}c^*_{i,j'}c_{i,j}\right]\ket{\psi_{j'}} \bra{\psi_j}+\sum_{\substack{i\in \mathcal{B}}}\abs{c_{i,l_i}}^2\ket{\psi_{l_i}} \bra{\psi_{l_i}}.
\end{split}
\end{equation}
Unitarity requires that $\{l_i\}_{i\in B}=\mathcal{B}$, $\abs{c_{i,l_i}}^2=1$ for $i\in \mathcal{B}$, and $\sum_{i\in \mathcal{A}}c^*_{i,j'}c_{i,j}=\delta_{j',j}$. Furthermore, the condition $\{l_i\}_{i\in B}=\mathcal{B}$ implies that $l_i$ can be written as $l_i=\sigma(i)$, where $\sigma$ is a permutation of $\mathcal{B}$.

\section{Properties of non-revival monotone}
Since the revival fidelity satisfies $F_R(\ket{\psi})\leq 1$ for any $\ket{\psi}$, then $R(\ket{\psi})\geq 0$. Take the case where $\ket{\psi}$ is a free pure state. Then for any $U_F\in \mathcal{U}_F$, $U_F\ket{\psi}$ is also a free pure state and $F_R(U_F\ket{\psi})=1$, implying that $R(\ket{\psi})=0$. Conversely, if $R(\ket{\psi})=0$, then $F_R(U_F\ket{\psi})=1$ for all $U_F\in \mathcal{U}_F$. This is true only if $\ket{\psi}$ is a free pure state. This proves faithfulness. Invariance follows from the definition of the non-revival monotone.

\section{Properties of revival destruction capacity}
	We have $D(U)\geq 0$ since $R(\ket{\psi})\geq 0$. If $U\in \mathcal{U}_F$, then for $\ket{\psi}\in \mathcal{S}_F$, $U\ket{\psi}\in \mathcal{S}_F$. This produces $R(U\ket{\psi})=0$ and $D(U)=0$. If $U\notin \mathcal{U}_F$, then there exists a state $\ket{\psi}\in \mathcal{S}_F$ such that $U\ket{\psi}\notin \mathcal{S}_F$, yielding $R(U\ket{\psi})>0$ and $D(U)>0$. Therefore, $D(U)=0$ if and only if $U\in \mathcal{U}_F$, proving faithfulness. 
	
We prove invariance. Assuming that  $V_1,V_2\in \mathcal{U}_F$,
\begin{equation}
\begin{split}
D(V_1UV_2)
	&=\max_{\substack{\ket{\psi}\in \mathcal{S}_F,\\ U_F\in \mathcal{U}_F}}\left\{
	 1-F_R(U_FV_1UV_2\ket{\psi})\right\}\\
	&=\max_{\substack{\ket{\psi'}\in \mathcal{S}_F,\\ U_F'\in \mathcal{U}_F}}\left\{
	 1-F_R(U_F'U\ket{\psi'})\right\}\\
	&=D(U).
\end{split}
\end{equation}
In the second line, we use that the maximization is invariant under $V_2\ket{\psi}=\ket{\psi'}\in \mathcal{S}_F$ and $U_FV_1=U_F'\in\mathcal{U}_F$.

\section{Proof of Theorem~\ref{Thm:DensityFidelity}}
Define the state
\begin{equation}
	\rho=\sum_{\substack{i,j \in  \mathcal{A}\\}}a_{i,j}\ket{\psi_i}\bra{\psi_j}+\sum_{i\in\mathcal{ B}}a_{i,i}\ket{\psi_i}\bra{\psi_i}.
\end{equation}
Then
\begin{equation}
\begin{split}
		\rho(2\pi T)
		&=\sum_{\substack{i,j \in  \mathcal{A}\\}}a_{i,j}e^{-i(E_i-E_j)2\pi T}\ket{\psi_i}\bra{\psi_j}+\sum_{i\in\mathcal{ B}}a_{i,i}e^{-i(E_i-E_i)2\pi T}\ket{\psi_i}\bra{\psi_i}\\
		&=\sum_{\substack{i,j \in  \mathcal{A}\\}}a_{i,j}\ket{\psi_i}\bra{\psi_j}+\sum_{i\in\mathcal{ B}}a_{i,i}\ket{\psi_i}\bra{\psi_i}\\
		&=\rho.
\end{split}
\end{equation}
Hence, $F_{R,M}(\rho)=1$. 

We now show that all density matrices  satisfying $F_{R,M}(\rho)=1$ having the above form. Any quantum state can be written as
\begin{equation}
	\rho=\sum_{\substack{i,j \in  \mathcal{A}\\}}a_{i,j}\ket{\psi_i}\bra{\psi_j}+\sum_{i\in \mathcal{B}}a_{i,i}\ket{\psi_i}\bra{\psi_i}+\sum_{i,j\in \mathcal{B}, i<j}(a_{i,j}\ket{\psi_i}\bra{\psi_j}+a_{j,i}\ket{\psi_j}\bra{\psi_i})+\sum_{i\in \mathcal{A},j\in \mathcal{B}}(a_{i,j}\ket{\psi_i}\bra{\psi_j}+a_{j,i}\ket{\psi_j}\bra{\psi_i}).
\end{equation}
We evolve
\begin{equation}
\begin{split}
	\rho(2\pi T)
	&=\sum_{\substack{i,j \in  \mathcal{A}\\}}a_{i,j}\ket{\psi_i}\bra{\psi_j}+\sum_{i\in \mathcal{B}}a_{i,i}\ket{\psi_i}\bra{\psi_i}\\
	&\hspace{10mm}+\sum_{i,j\in \mathcal{B}, i<j}(a_{i,j}e^{-i(E_i-E_j)2\pi T}\ket{\psi_i}\bra{\psi_j}+a_{j,i}e^{i(E_i-E_j)2\pi T}\ket{\psi_j}\bra{\psi_i})\\
	&\hspace{10mm}+\sum_{i\in \mathcal{A},j\in \mathcal{B}}(a_{i,j}e^{-i(E_i-E_j)2\pi T}\ket{\psi_i}\bra{\psi_j}+a_{j,i}e^{i(E_i-E_j)2\pi T}\ket{\psi_j}\bra{\psi_i})\\
	&=\sum_{\substack{i,j \in  \mathcal{A}\\}}a_{i,j}\ket{\psi_i}\bra{\psi_j}+\sum_{i\in \mathcal{B}}a_{i,i}\ket{\psi_i}\bra{\psi_i}\\
	&\hspace{10mm}+\sum_{i,j\in \mathcal{B}, i<j}e^{-i(E_i-E_j)2\pi T}(a_{i,j}\ket{\psi_i}\bra{\psi_j}+a_{j,i}e^{i(E_i-E_j)4\pi T}\ket{\psi_j}\bra{\psi_i})\\
	&\hspace{10mm}+\sum_{i\in \mathcal{A},j\in \mathcal{B}}e^{iE_j2\pi T}(a_{i,j}\ket{\psi_i}\bra{\psi_j}+a_{j,i}e^{-iE_j4\pi T}\ket{\psi_j}\bra{\psi_i}).
\end{split}
\end{equation}
We find the conditions which ensure that $\rho(2\pi T)=\rho$ so that $F(\rho)=1$. In the third and fourth sums, if $a_{i,j}$ is nonzero, then so is $a_{j,i}$ by the Hermiticity condition, $a_{i,j}=a_{j,i}^*$. In the fourth sum, $e^{-iE_j 4\pi T}\neq 1$. Hence, if $a_{j,i}\neq 0$ then $(a_{i,j}\ket{\psi_i}\bra{\psi_j}+a_{j,i}e^{-iE_j4\pi T}\ket{\psi_j}\bra{\psi_i})\neq (a_{i,j}\ket{\psi_i}\bra{\psi_j}+a_{j,i}\ket{\psi_j}\bra{\psi_i})$, implying that $\rho(2\pi T)\neq \rho$. Therefore all coefficients in the fourth sum must vanish. Similarly, since we assume that there is no rational spacing between irrational eigenvalues, then  $e^{i(E_{i}-E_j)4\pi T}\neq 1$ in the third sum. Hence all coefficients in the third sum must vanish. This proves that a state $\rho$ satisfying $F_{R,M}(\rho)=1$ must have the form of
\begin{equation}
	\rho=\sum_{\substack{i,j \in  \mathcal{A}\\}}a_{i,j}\ket{\psi_i}\bra{\psi_j}+\sum_{i\in\mathcal{ B}}a_{i,i}\ket{\psi_i}\bra{\psi_i}.
\end{equation}

\section{Proof of Theorem~\ref{Thm:FreeUnitariesDensity}}

We define $\mathcal{U}_{\mathcal{M}_F}$ as the set of unitaries which map $\mathcal{M}_F$ to itself. We will show that $\mathcal{U}_{\mathcal{M}_F}=\mathcal{U}_F$. We first take a unitary $U\in \mathcal{U}_{\mathcal{M}_F}$ to have the general form
\begin{equation}
U=\sum_{i,j\in \mathcal{A}\cup \mathcal{B}}c_{i,j}\ket{\psi_i}\bra{\psi_j}.
\end{equation}
We define the density matrix $\rho_k=\ket{\psi_k}\bra{\psi_k}\in \mathcal{M}_F$, where $k\in \mathcal{A} \cup \mathcal{B}$. We compute the following
\begin{equation}
\begin{split}
U^\dagger \rho_k U
	&=\left[	\sum_{i',j'\in \mathcal{A}\cup \mathcal{B}}c^*_{i',j'}\ket{\psi_{j'}}\bra{\psi_{i'}}\right]\ket{\psi_k}\bra{\psi_k} \left[\sum_{i,j\in \mathcal{A}\cup\mathcal{B}}c_{i,j}\ket{\psi_i}\bra{\psi_j}\right]\\
	&=\sum_{i',j',i,j\in \mathcal{A}\cup \mathcal{B}}c^*_{i',j'}c_{i,j}\ket{\psi_{j'}}\bra{\psi_j}\delta_{i',k}\delta_{k,i}\\
	&=\sum_{j',j\in \mathcal{A}\cup \mathcal{B}}c^*_{k,j'}c_{k,j}\ket{\psi_{j'}}\bra{\psi_j}.
\end{split}
\end{equation}
Since $U^\dagger \rho_k U\in \mathcal{M}_F$, it follows that $c^*_{k,j'}c_{k,j}=0$ for $k\in \mathcal{A}\cup \mathcal{B}$ and $j,j'\in \mathcal{B}$ where $j\neq j'$. 
This implies that there exists an index $l_k\in \mathcal{B}$ such that $c_{k,j}=0$ for $j\in \mathcal{B}$ and $j\neq l_k$. The subscript in $l_k$ denotes that this index may depend on the value of $k$.

Before proceeding with the proof, we introduce the following lemma.
\begin{lemma}\label{Lemma:FreeUnitaryDM}
If $U\in \mathcal{U}_{\mathcal{M}_F}$, then $U^\dagger \in \mathcal{U}_{\mathcal{M}_F}$.
\end{lemma}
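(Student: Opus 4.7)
The plan is to characterize $\mathcal{M}_F$ as the set of density matrices commuting with $V:=e^{-iH 2\pi T}$ and then deduce the symmetry $U\in\mathcal{U}_{\mathcal{M}_F}\Rightarrow U^\dagger\in\mathcal{U}_{\mathcal{M}_F}$ from a finite-dimensional linearity argument, in close analogy with the pure-state proof of Lemma~\ref{Lemma:Free}.

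First I would show that $\mathcal{M}_F=\{\rho\text{ a density matrix}:V\rho=\rho V\}$. The form given in Theorem~\ref{Thm:DensityFidelity} is manifestly $V$-invariant under conjugation, since each eigenphase of $V$ acts trivially on $L_{\mathrm{Rat}}$ and as distinct phases on the irrational eigenstates. Conversely, expanding any $V$-commuting $\rho$ in the eigenbasis of $H$ as $\rho=\sum_{i,j}a_{i,j}\ket{\psi_i}\bra{\psi_j}$, the commutation relation forces $a_{i,j}=0$ unless $e^{-i(E_i-E_j)2\pi T}=1$, which under the irrational-spacing hypothesis is possible only for $i,j\in\mathcal{A}$ or $i=j\in\mathcal{B}$. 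Denote the commutant of $V$ in the full $d\times d$ matrix algebra by $\{V\}'$.

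The key preliminary step is to show that $\mathcal{M}_F$ spans $\{V\}'$ as a complex vector space. For arbitrary $\sigma\in\{V\}'$, decompose $\sigma=A+iB$ into Hermitian pieces; since $V$ is unitary, both $A$ and $B$ also commute with $V$. Writing $A=A_+-A_-$ with $A_\pm\ge 0$ via the spectral theorem, and noting that $A_\pm$ are spectral functions of $A$ and therefore continue to commute with $V$, we obtain (after trace normalization of the nonzero pieces) an expression for $\sigma$ as a complex linear combination of at most four density matrices in $\mathcal{M}_F$.

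Equipped with this spanning property, the hypothesis $U\rho U^\dagger\in\mathcal{M}_F\subseteq\{V\}'$ for every $\rho\in\mathcal{M}_F$ extends by $\mathbb{C}$-linearity to $U\sigma U^\dagger\in\{V\}'$ for every $\sigma\in\{V\}'$. Since $\{V\}'$ is finite dimensional and conjugation by $U$ is an injective linear map, the inclusion $U\{V\}'U^\dagger\subseteq\{V\}'$ is forced to be an equality, so $U^\dagger\{V\}'U=\{V\}'$ as well. For any $\rho\in\mathcal{M}_F$ the operator $U^\dagger\rho U$ therefore lies in $\{V\}'$, is positive, and has unit trace, so it lies in $\mathcal{M}_F$, giving $U^\dagger\in\mathcal{U}_{\mathcal{M}_F}$. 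The main subtle step is the spanning claim: the crucial ingredient is that taking positive and negative spectral parts preserves commutation with $V$, which is immediate from the functional calculus but should be recorded explicitly. Once that is in place, the rest is a routine dimension-counting bijection argument.
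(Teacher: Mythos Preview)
Your proof is correct and takes a genuinely different route from the paper. The paper's argument is essentially one line: it observes that conjugation by $U$ is injective and concludes immediately that the image $\{U^\dagger\rho U:\rho\in\mathcal{M}_F\}$ equals $\mathcal{M}_F$, then inverts. Read literally this is ``injective on a set implies surjective onto it,'' which is fine for finite sets but not obvious for the infinite convex set $\mathcal{M}_F$. Your approach supplies the structure that makes this step legitimate: by identifying $\mathcal{M}_F$ with the density matrices in the commutant $\{V\}'$ and showing that $\mathcal{M}_F$ spans $\{V\}'$ (via the Hermitian and positive/negative-part decomposition, with the functional-calculus observation that $A_\pm$ inherit commutation with $V$), you reduce the question to an injective linear endomorphism of a finite-dimensional vector space, where injectivity does force surjectivity. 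The commutant viewpoint also makes it transparent that positivity and unit trace are preserved by both $U(\cdot)U^\dagger$ and its inverse, so the linear bijection on $\{V\}'$ restricts to a bijection on $\mathcal{M}_F$. What the paper's version buys is brevity; what yours buys is that the dimension-counting step is actually on solid ground.
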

\begin{proof}
By definition, for a unitary $U\in\mathcal{U}_{\mathcal{M}_F}$,  $\{U^\dagger \rho U : \rho\in \mathcal{M}_F\}\subseteq \mathcal{M}_F$. Unitaries are one-to-one mappings, implying that for any two free pure states $\rho_1$ and $\rho_2$, if $\rho_1\neq \rho_2$, then $U^\dagger \rho_1 U \neq U^\dagger \rho_2 U$. Therefore, $\{U^\dagger \rho_1 U : \rho_1\in \mathcal{M}_F\}=\mathcal{M}_F$. For any fixed free observable $\rho_2$, there exists a free pure state $\rho_1$ such that $\rho_2=U^\dagger \rho_1 U$. Hence $U\rho_2U^\dagger=\rho_1\in \mathcal{M}_F$ for all $\rho_2\in\mathcal{M}_F$ and $U^\dagger \in \mathcal{U}_{\mathcal{M}_F}$.
\end{proof}
Since $U^\dagger\in \mathcal{U}_{\mathcal{M}_F}$ by Lemma~\ref{Lemma:FreeUnitaryDM}, then $U \rho_k U^\dagger \in\mathcal{M}_F$. We compute 
\begin{equation}
\begin{split}
U\rho_k U^\dagger
	&=\left[	\sum_{i',j'\in \mathcal{A}\cup \mathcal{B}}c_{i',j'}\ket{\psi_{i'}}\bra{\psi_{j'}}\right]\ket{\psi_k}\bra{\psi_k} \left[\sum_{i,j\in \mathcal{A}\cup \mathcal{B}}c^*_{i,j}\ket{\psi_j}\bra{\psi_i}\right]\\
	&=\sum_{i',j',i,j\in \mathcal{A}\cup \mathcal{B}}c_{i',j'}c^*_{i,j}\ket{\psi_{i'}}\bra{\psi_i}\delta_{j',k}\delta_{k,j}\\
	&=\sum_{i',i\in \mathcal{A}\cup \mathcal{B}}c_{i',k}c^*_{i,k}\ket{\psi_{i'}}\bra{\psi_i}.
\end{split}
\end{equation}
It follows that $c_{i',k}c^*_{i,k}=0$ for $k\in \mathcal{A}\cup \mathcal{B}$ and $i,i'\in \mathcal{B}$ where $i\neq i'$. This implies that there exists an index $m_k\in \mathcal{B}$ such that $c_{i,k}=0$ for $i\in \mathcal{B}$ and $i\neq m_k$.

The unitary becomes
\begin{equation}
	\begin{split}
	U
	&=\sum_{i,j\in \mathcal{A}\cup \mathcal{B}}c_{i,j}\ket{\psi_i}\bra{\psi_j}\\
	&=\sum_{i,j\in \mathcal{A}}c_{i,j}\ket{\psi_i}\bra{\psi_j}+\sum_{i,j\in \mathcal{B}}c_{i,j}\ket{\psi_i}\bra{\psi_j}+\sum_{i\in \mathcal{A},j\in \mathcal{B}}c_{i,j}\ket{\psi_i}\bra{\psi_j}+\sum_{i\in \mathcal{B},j\in \mathcal{A}}c_{i,j}\ket{\psi_i}\bra{\psi_j}\\
	&=\sum_{i,j\in \mathcal{A}}c_{i,j}\ket{\psi_i}\bra{\psi_j}+\sum_{i\in \mathcal{B}}c_{i,l_i}\ket{\psi_i}\bra{\psi_{l_i}}+\sum_{i\in \mathcal{A}}c_{i,l_i}\ket{\psi_i}\bra{\psi_{l_i}}+\sum_{i\in \mathcal{A}}c_{m_i,i}\ket{\psi_{m_i}}\bra{\psi_i},
	\end{split}
\end{equation} 
where $l_i,m_i\in \mathcal{B}$.

We consider the following state
\begin{equation}
\begin{split}
\rho_{k,p}
&=(a_{k}\ket{\psi_k}+a_p\ket{\psi_p})(a_k^*\bra{\psi_k}+a_p^*\bra{\psi_p})\\
&=a_{k,k}\ket{\psi_k}\bra{\psi_k}+a_{k,p}\ket{\psi_k}\bra{\psi_p}+a_{p,k}\ket{\psi_p}\bra{\psi_k}+a_{p,p}\ket{\psi_p}\bra{\psi_p}.
\end{split}
\end{equation}
where $k,p\in \mathcal{A}$ and, for example, $a_{k,p}=a_ka_p^*$. We assume that all coefficients are non-zero. Now compute
\begin{equation}
\begin{split}
U^\dagger \rho_{k,p} U
	&=\left[\sum_{i',j'\in \mathcal{A}}c^*_{i',j'}\ket{\psi_{j'}}\bra{\psi_{i'}}+\sum_{i'\in \mathcal{B}}c^*_{i',l_{i'}}\ket{\psi_{l_{i'}}}\bra{\psi_{i'}}+\sum_{i'\in \mathcal{A}}c^*_{i',l_{i'}}\ket{\psi_{l_{i'}}}\bra{\psi_{i'}}+\sum_{i'\in \mathcal{A}}c^*_{m_{i'},i'}\ket{\psi_{i'}}\bra{\psi_{m_{i'}}}\right]\\
	&\hspace{20mm}\times\left[a_{k,k}\ket{\psi_k}\bra{\psi_k}+a_{k,p}\ket{\psi_k}\bra{\psi_p}+a_{p,k}\ket{\psi_p}\bra{\psi_k}+a_{p,p}\ket{\psi_p}\bra{\psi_p}\right] \\
	&\hspace{20mm}\times\left[\sum_{i,j\in \mathcal{A}}c_{i,j}\ket{\psi_i}\bra{\psi_j}+\sum_{i\in \mathcal{B}}c_{i,l_i}\ket{\psi_i}\bra{\psi_{l_i}}+\sum_{i\in \mathcal{A}}c_{i,l_i}\ket{\psi_i}\bra{\psi_{l_i}}+\sum_{i\in \mathcal{A}}c_{m_i,i}\ket{\psi_{m_i}}\bra{\psi_i}\right]\\
	&=\left[\sum_{i',j'\in \mathcal{A}}c^*_{i',j'}\ket{\psi_{j'}}\bra{\psi_{i'}}+\sum_{i'\in \mathcal{A}}c^*_{i',l_{i'}}\ket{\psi_{l_{i'}}}\bra{\psi_{i'}}\right]
	\left[a_{k,k}\ket{\psi_k}\bra{\psi_k}+a_{k,p}\ket{\psi_k}\bra{\psi_p}+a_{p,k}\ket{\psi_p}\bra{\psi_k}+a_{p,p}\ket{\psi_p}\bra{\psi_p}\right] \\
	&\hspace{20mm}\times\left[\sum_{i,j\in \mathcal{A}}c_{i,j}\ket{\psi_i}\bra{\psi_j}+\sum_{i\in \mathcal{A}}c_{i,l_i}\ket{\psi_i}\bra{\psi_{l_i}}\right]\\
	&=\left[Q_1^\dagger +Q_2^\dagger \right]\rho_{k,p}\left[Q_1 +Q_2 \right]\\
	&=Q_1^\dagger \rho_{k,p}Q_1+Q_1^\dagger \rho_{k,p}Q_2+Q_2^\dagger \rho_{k,p}Q_1+Q_2^\dagger \rho_{k,p}Q_2 ,
\end{split}
\end{equation}
where we define $Q_1=\sum_{i,j\in \mathcal{A}}c_{i,j}\ket{\psi_i}\bra{\psi_j}$ and $Q_2=\sum_{i\in \mathcal{A}}c_{i,l_i}\ket{\psi_i}\bra{\psi_{l_i}}$.
We now prove conditions on the coefficients to ensure that $U^\dagger \rho_{k,p}U\in \mathcal{M}_F$. We compute the fourth term in the above equation:

\begin{equation}
\begin{split}
Q_2^\dagger \rho_{k,p} Q_2
	&=\sum_{i'\in \mathcal{A}}c^*_{i',l_{i'}}\ket{\psi_{l_{i'}}}\bra{\psi_{i'}}
	\left[a_{k,k}\ket{\psi_k}\bra{\psi_k}+a_{k,p}\ket{\psi_k}\bra{\psi_p}+a_{p,k}\ket{\psi_p}\bra{\psi_k}+a_{p,p}\ket{\psi_p}\bra{\psi_p}\right]\sum_{i\in \mathcal{A}}c_{i,l_i}\ket{\psi_i}\bra{\psi_{l_i}}\\
	&=\sum_{i,i'\in \mathcal{A}}c_{i,l_i}c^*_{i',l_{i'}}\ket{\psi_{l_{i'}}}\bra{\psi_{l_i}}
	\left[a_{k,k}\delta_{i',k}\delta_{k,i}+a_{k,p}\delta_{i',k}\delta_{p,i}+a_{p,k}\delta_{i',p}\delta_{k,i}+a_{p,p}\delta_{i',p}\delta_{p,i}\right]\\
	&=a_{k,k}c_{k,l_k}c^*_{k,l_{k}}\ket{\psi_{l_{k}}}\bra{\psi_{l_k}}+a_{k,p}c_{p,l_p}c^*_{k,l_{k}}\ket{\psi_{l_{k}}}\bra{\psi_{l_p}}
	+a_{p,k}c_{k,l_k}c^*_{p,l_{p}}\ket{\psi_{l_{p}}}\bra{\psi_{l_k}}+a_{p,p}c_{p,l_p}c^*_{p,l_{p}}\ket{\psi_{l_{p}}}\bra{\psi_{l_p}}.
\end{split}
\end{equation}

As $l_p,l_k\in\mathcal{B}$, we require that $c_{p,l_p}c^*_{k,l_{k}}=0$ for $l_p\neq l_k$ and for all $p,k\in \mathcal{A}$. This implies that there exists an index $L\in\{l_p\}_{p\in A}$ such that $c_{p,l_p}=0$ for $l_p\neq L$ and $p\in A$. This yields $Q_2=\sum_{i\in \mathcal{A},l_i=L}c_{i,L}\ket{\psi_i}\bra{\psi_{L}}$. We are free to include the terms in the sum where $l_i\neq L$, since $c_{i,l_i}=0$ in this case. We do this for convenience and write $Q_2=\sum_{i\in \mathcal{A}}c_{i,L}\ket{\psi_i}\bra{\psi_{L}}$.

Now we compute the second term:
\begin{equation}
\begin{split}
Q_1^\dagger \rho_{k,p} Q_2
	&=\sum_{i',j'\in \mathcal{A}}c^*_{i',j'}\ket{\psi_{j'}}\bra{\psi_{i'}}
	\left[a_{k,k}\ket{\psi_k}\bra{\psi_k}+a_{k,p}\ket{\psi_k}\bra{\psi_p}+a_{p,k}\ket{\psi_p}\bra{\psi_k}+a_{p,p}\ket{\psi_p}\bra{\psi_p}\right]\sum_{i\in \mathcal{A}}c_{i,L}\ket{\psi_i}\bra{\psi_{L}}\\
	&=\sum_{i,i',j'\in \mathcal{A}}c_{i,L}c^*_{i',j'}\ket{\psi_{j'}}\bra{\psi_{L}}
	\left[a_{k,k}\delta_{i',k}\delta_{k,i}+a_{k,p}\delta_{i',k}\delta_{p,i}+a_{p,k}\delta_{i',p}\delta_{k,i}+a_{p,p}\delta_{i',p}\delta_{p,i}\right]\\
	&=\sum_{j'\in \mathcal{A}}\ket{\psi_{j'}}\bra{\psi_{L}}
	\left[a_{k,k}c_{k,L}c^*_{k,j'}+a_{k,p}c_{p,L}c^*_{k,j'}+a_{p,k}c_{k,L}c^*_{p,j'}+a_{p,p}c_{p,L}c^*_{p,j'}\right]\\
	&=\sum_{j'\in \mathcal{A}}\ket{\psi_{j'}}\bra{\psi_{L}}
	\left[a_{k,k}c_{k,L}c^*_{k,j'}+\abs{a_{k,p}}e^{i
\theta}c_{p,L}c^*_{k,j'}+\abs{a_{k,p}}e^{-i\theta}c_{k,L}c^*_{p,j'}+(1-a_{k,k})c_{p,L}c^*_{p,j'}\right].
\end{split}
\end{equation}
In the last line, we rewrite the coefficient as $a_{k,p}=\abs{a_{k,p}}e^{i\theta}$ and use that $a_{k,p}=a_{p,k}^*$.
The brackets must vanish for any allowed choice of the state's coefficients:
\begin{equation}\label{Eq:Brackets}
a_{k,k}c_{k,L}c^*_{k,j'}+\abs{a_{k,p}}e^{i
\theta}c_{p,L}c^*_{k,j'}+\abs{a_{k,p}}e^{-i\theta}c_{k,L}c^*_{p,j'}+(1-a_{k,k})c_{p,L}c^*_{p,j'}=0.
\end{equation}
The above must also hold for $\theta\rightarrow \theta+\theta_1$ for any $\theta_1$:
\begin{equation}
a_{k,k}c_{k,L}c^*_{k,j'}+\abs{a_{k,p}}e^{i
\theta}e^{i\theta_1}c_{p,L}c^*_{k,j'}+\abs{a_{k,p}}e^{-i\theta}e^{-i\theta_1}c_{k,L}c^*_{p,j'}+(1-a_{k,k})c_{p,L}c^*_{p,j'}=0.
\end{equation}
Subtracting the second constraint equation from the first and dividing out $\abs{a_{k,p}}e^{i\theta}$,
\begin{equation}
\abs{a_{k,p}}e^{i\theta}(1-e^{i\theta_1})c_{p,L}c^*_{k,j'}+\abs{a_{k,p}}e^{-i\theta}(1-e^{-i\theta_1})c_{k,L}c^*_{p,j'}=0.
\end{equation}
Taking $\theta_1=\pi$ yields
\begin{equation}
c_{p,L}c^*_{k,j'}+e^{-2i\theta}c_{k,L}c^*_{p,j'}=0.
\end{equation}

Setting $\theta=0$ and $\theta=\pi/2$ yields
\begin{equation}
\begin{split}
c_{p,L}c^*_{k,j'}+c_{k,L}c^*_{p,j'}=0\\
c_{p,L}c^*_{k,j'}-c_{k,L}c^*_{p,j'}=0,
\end{split}
\end{equation}
requiring that $c_{p,L}c^*_{k,j'}=0$. Either $c_{p,L}=0$ for all $p\in \mathcal{A}$ or $c_{k,j'}=0$ for all $k,j'\in \mathcal{A}$. These conditions ensure that Eq.~\eqref{Eq:Brackets} is satisfied.

Now compute the third term:
\begin{equation}
\begin{split}
Q_2^\dagger \rho_{k,p} Q_1
	&=	\sum_{i'\in \mathcal{A}}c^*_{i',L}\ket{\psi_{L}}\bra{\psi_{i'}}\left[a_{k,k}\ket{\psi_k}\bra{\psi_k}+a_{k,p}\ket{\psi_k}\bra{\psi_p}+a_{p,k}\ket{\psi_p}\bra{\psi_k}+a_{p,p}\ket{\psi_p}\bra{\psi_p}\right]\sum_{i,j\in \mathcal{A}}c_{i,j}\ket{\psi_{i}}\bra{\psi_{j}}
\\
	&=	\sum_{i',i,j\in \mathcal{A}}c^*_{i',L}c_{i,j}\ket{\psi_{L}}\bra{\psi_{j}}
\left[a_{k,k}\delta_{i',k}\delta_{k,i}+a_{k,p}\delta_{i',k}\delta_{p,i}+a_{p,k}\delta_{i',p}\delta_{k,i}+a_{p,p}\delta_{i',p}\delta_{p,i}\right]\\
	&=	\sum_{j\in \mathcal{A}}\ket{\psi_{L}}\bra{\psi_{j}}
\left[a_{k,k}c^*_{k,L}c_{k,j}+a_{k,p}c^*_{k,L}c_{p,j}+a_{p,k}c^*_{p,L}c_{k,j}+a_{p,p}c^*_{p,L}c_{p,j}\right].
\end{split}
\end{equation}
The above conditions on the coefficients ensure that the brackets vanish.

Now compute, for $k,p\in \mathcal{A}$, 
\begin{equation}
\begin{split}
U \rho_{k,p} U^\dagger
	&=\left[\sum_{i',j'\in \mathcal{A}}c_{i',j'}\ket{\psi_{i'}}\bra{\psi_{j'}}+\sum_{i'\in \mathcal{B}}c_{i',l_{i'}}\ket{\psi_{i'}}\bra{\psi_{l_{i'}}}+\sum_{i'\in \mathcal{A}}c_{i',L}\ket{\psi_{i'}}\bra{\psi_{L}}+\sum_{i'\in \mathcal{A}}c_{m_{i'},i'}\ket{\psi_{m_{i'}}}\bra{\psi_{i'}}\right]\\
	&\hspace{20mm}\times\rho_{k,p}\left[\sum_{i,j\in \mathcal{A}}c^*_{i,j}\ket{\psi_j}\bra{\psi_i}+\sum_{i\in \mathcal{B}}c^*_{i,l_i}\ket{\psi_{l_i}}\bra{\psi_i}+\sum_{i\in \mathcal{A}}c^*_{i,L}\ket{\psi_{L}}\bra{\psi_i}+\sum_{i\in \mathcal{A}}c^*_{m_i,i}\ket{\psi_i}\bra{\psi_{m_i}}\right]\\
	&=\left[\sum_{i',j'\in \mathcal{A}}c_{i',j'}\ket{\psi_{i'}}\bra{\psi_{j'}}+\sum_{i'\in \mathcal{A}}c_{m_{i'},i'}\ket{\psi_{m_{i'}}}\bra{\psi_{i'}}\right]\rho_{k,p}\left[\sum_{i,j\in \mathcal{A}}c^*_{i,j}\ket{\psi_j}\bra{\psi_i}+\sum_{i\in \mathcal{A}}c^*_{m_i,i}\ket{\psi_i}\bra{\psi_{m_i}}\right]\\
		&=\left[Q_1+Q_3\right]\rho_{k,p}\left[Q_1^\dagger+Q_3^\dagger\right]\\
		&=Q_1\rho_{k,p}Q_1^\dagger+Q_1\rho_{k,p}Q_3^\dagger +Q_3\rho_{k,p}Q_1^\dagger +Q_3\rho_{k,p}Q_3^\dagger,
\end{split}
\end{equation}
where $Q_3=\sum_{i'\in \mathcal{A}}c_{m_{i'},i'}\ket{\psi_{m_{i'}}}\bra{\psi_{i'}}$.
We find the conditions on the coefficients which ensure that $U O_{k,p}U^\dagger \in \mathcal{O}_F$. We compute the fourth term in the above equation:

\begin{equation}
\begin{split}
Q_3 \rho_{k,p} Q_3^\dagger
	&=\sum_{i'\in \mathcal{A}}c_{m_{i'},i'}\ket{\psi_{m_{i'}}}\bra{\psi_{i'}}	\left[a_{k,k}\ket{\psi_k}\bra{\psi_k}+a_{k,p}\ket{\psi_k}\bra{\psi_p}+a_{p,k}\ket{\psi_p}\bra{\psi_k}+a_{p,p}\ket{\psi_p}\bra{\psi_p}\right]\sum_{i\in \mathcal{A}}c^*_{m_i,i}\ket{\psi_i}\bra{\psi_{m_i}}\\
	&=\sum_{i,i'\in \mathcal{A}}c_{m_{i'},i'}c^*_{m_i,i}\ket{\psi_{m_{i'}}}\bra{\psi_{m_i}}	\left[a_{k,k}\delta_{i',k}\delta_{k,i}+a_{k,p}\delta_{i',k}\delta_{p,i}+a_{p,k}\delta_{i',p}\delta_{k,i}+a_{p,p}\delta_{i',p}\delta_{p,i}\right]\\
	&=a_{k,k}c_{m_{k},k}c^*_{m_k,k}\ket{\psi_{m_{k}}}\bra{\psi_{m_k}}+a_{k,p}c_{m_{k},k}c^*_{m_p,p}\ket{\psi_{m_{k}}}\bra{\psi_{m_p}}\\
	&\hspace{10mm}+a_{p,k}c_{m_{p},p}c^*_{m_k,k}\ket{\psi_{m_{p}}}\bra{\psi_{m_k}}+a_{p,p}c_{m_{p},p}c^*_{m_p,p}\ket{\psi_{m_{p}}}\bra{\psi_{m_p}}.
\end{split}
\end{equation}

We require that $c_{m_k,k}c^*_{m_p,p}=0$ for $m_k\neq m_p$ and all $p,k\in \mathcal{A}$. This implies that there exists an $M\in \{m_k\}_{k\in\mathcal{A}}$ such that $c_{m_k,k}=0$ for $m_k\neq M$ and $k\in \mathcal{A}$. This yields $Q_3=\sum_{i'\in \mathcal{A}, m_{i'}\neq M}c_{M,i'}\ket{\psi_{M}}\bra{\psi_{i'}}$. By including some nonzero coefficients, we can write this as $Q_3=\sum_{i'\in \mathcal{A}}c_{M,i'}\ket{\psi_{M}}\bra{\psi_{i'}}$. Now we compute the second term:
\begin{equation}
\begin{split}
Q_1 \rho_{k,p} Q_3^\dagger
	&=\sum_{i',j'\in \mathcal{A}}c_{i',j'}\ket{\psi_{i'}}\bra{\psi_{j'}}\left[a_{k,k}\ket{\psi_k}\bra{\psi_k}+a_{k,p}\ket{\psi_k}\bra{\psi_p}+a_{p,k}\ket{\psi_p}\bra{\psi_k}+a_{p,p}\ket{\psi_p}\bra{\psi_p}\right]\sum_{i\in \mathcal{A}}c^*_{M,i}\ket{\psi_i}\bra{\psi_{M}}\\
	&=\sum_{i,i',j'\in \mathcal{A}}c_{i',j'}c^*_{M,i}\ket{\psi_{i'}}\bra{\psi_{M}}\left[a_{k,k}\delta_{j',k}\delta_{k,i}+a_{k,p}\delta_{j',k}\delta_{p,i}+a_{p,k}\delta_{j',p}\delta_{k,i}+a_{p,p}\delta_{j',p}\delta_{p,i}\right]\\
	&=\sum_{i'\in \mathcal{A}}\left[a_{k,k}c_{i',k}c^*_{M,k}+a_{k,p}c_{i',k}c^*_{M,p}+a_{p,k}c_{i',p}c^*_{M,k}+a_{p,p}c_{i',p}c^*_{M,p}\right]\ket{\psi_{i'}}\bra{\psi_{M}}.
\end{split}
\end{equation}
The brackets must vanish for any allowed choice of the state's coefficients. Similar to the proof above, one can show that $c_{i',k}c^*_{M,p}=0$ for all $i',p,k\in \mathcal{A}$. Either $c_{i',k}=0$ for all $i',k\in \mathcal{A}$ or $c_{M,p}=0$ for all $p\in \mathcal{A}$. We now compute the third term:

\begin{equation}
\begin{split}
Q_3 \rho_{k,p} Q_1^\dagger
	&=\sum_{i'\in \mathcal{A}}c_{M,i'}\ket{\psi_{M}}\bra{\psi_{i'}}	\left[a_{k,k}\ket{\psi_k}\bra{\psi_k}+a_{k,p}\ket{\psi_k}\bra{\psi_p}+a_{p,k}\ket{\psi_p}\bra{\psi_k}+a_{p,p}\ket{\psi_p}\bra{\psi_p}\right]\sum_{i,j\in \mathcal{A}}c^*_{i,j}\ket{\psi_{j}}\bra{\psi_{i}}\\
	&=\sum_{i,j,i'\in \mathcal{A}}c_{M,i'}c^*_{i,j}\ket{\psi_{M}}\bra{\psi_{i}}	\left[a_{k,k}\delta_{i',k}\delta_{k,j}+a_{k,p}\delta_{i',k}\delta_{p,j}+a_{p,k}\delta_{i',p}\delta_{k,j}+a_{p,p}\delta_{i',p}\delta_{p,j}\right]\\
	&=\sum_{i\in \mathcal{A}}\left[a_{k,k}c_{M,k}c^*_{i,k}+a_{k,p}c_{M,k}c^*_{i,p}+a_{p,k}c_{M,p}c^*_{i,k}+a_{p,p}c_{M,p}c^*_{i,p}\right]\ket{\psi_{M}}\bra{\psi_{i}}.
\end{split}
\end{equation}
The conditions found above ensure that the brackets vanish. 

Given all of the constraints on the coefficients, there are two possible forms of $U$:
\begin{equation}
	\begin{split}
	U_1&=\sum_{i,j\in \mathcal{A}}c_{i,j}\ket{\psi_i}\bra{\psi_j}+\sum_{i\in \mathcal{B}}c_{i,l_i}\ket{\psi_i}\bra{\psi_{l_i}},\\
	U_2&=\sum_{i\in \mathcal{B}}c_{i,l_i}\ket{\psi_i}\bra{\psi_{l_i}}+\sum_{i\in \mathcal{A}}c_{i,L}\ket{\psi_i}\bra{\psi_{L}}+\sum_{i\in \mathcal{A}}c_{M,i}\ket{\psi_{M}}\bra{\psi_i}.
	\end{split}
\end{equation} 
We show that $U_2$ is not a unitary. We compute 
\begin{equation}
\begin{split}
	U_2^\dagger U_2
	&=\left[\sum_{i'\in \mathcal{B}}c^*_{i',l_{i'}}\ket{\psi_{l_{i'}}}\bra{\psi_{i'}}+\sum_{i'\in \mathcal{A}}c^*_{i',L}\ket{\psi_{L}}\bra{\psi_{i'}}+\sum_{i'\in \mathcal{A}}c^*_{M,i'}\ket{\psi_{i'}}\bra{\psi_{M}}\right]\\
	&\hspace{5mm}\times\left[\sum_{i\in \mathcal{B}}c_{i,l_i}\ket{\psi_i}\bra{\psi_{l_i}}+\sum_{i\in \mathcal{A}}c_{i,L}\ket{\psi_i}\bra{\psi_{L}}+\sum_{i\in \mathcal{A}}c_{M,i}\ket{\psi_{M}}\bra{\psi_i}\right]\\
	&=\sum_{i'\in \mathcal{B}}c^*_{i',l_{i'}}\ket{\psi_{l_{i'}}}\bra{\psi_{i'}}\left[\sum_{i\in \mathcal{B}}c_{i,l_i}\ket{\psi_i}\bra{\psi_{l_i}}+\sum_{i\in \mathcal{A}}c_{M,i}\ket{\psi_{M}}\bra{\psi_i}\right]+\sum_{i,i'\in \mathcal{A}}c^*_{i',L}c_{i,L}\ket{\psi_{L}}\bra{\psi_{i'}}\psi_i\rangle\bra{\psi_{L}}\\
	&\hspace{5mm}+\sum_{i'\in \mathcal{A}}c^*_{M,i'}\ket{\psi_{i'}}\bra{\psi_{M}}\left[\sum_{i\in \mathcal{B}}c_{i,l_i}\ket{\psi_i}\bra{\psi_{l_i}}+\sum_{i\in \mathcal{A}}c_{M,i}\ket{\psi_{M}}\bra{\psi_i}\right]\\
	&=\sum_{i,i'\in \mathcal{B}}c^*_{i',l_{i'}}c_{i,l_i}\ket{\psi_{l_{i'}}}\bra{\psi_{l_i}}\delta_{i,i'}+\sum_{\substack{i\in \mathcal{A},\\i'\in \mathcal{B}}}c^*_{i',l_{i'}}c_{M,i}\ket{\psi_{l_{i'}}}\bra{\psi_i}\delta_{i',M}+\sum_{i,i'\in \mathcal{A}}c^*_{i',L}c_{i,L}\ket{\psi_{L}}\bra{\psi_{L}}\delta_{i,i'}\\
	&\hspace{5mm}+\sum_{\substack{i'\in \mathcal{A},\\i\in \mathcal{B}}}c^*_{M,i'}c_{i,l_i}\ket{\psi_{i'}}\bra{\psi_{l_i}}\delta_{i,M}+\sum_{i,i'\in \mathcal{A}}c^*_{M,i'}c_{M,i}\ket{\psi_{i'}}\bra{\psi_i}\\
	&=\sum_{i\in \mathcal{B}}c^*_{i,l_{i}}c_{i,l_i}\ket{\psi_{l_{i}}}\bra{\psi_{l_i}}+\sum_{\substack{i\in A}}c^*_{M,l_{M}}c_{M,i}\ket{\psi_{l_{M}}}\bra{\psi_i}+\sum_{i\in A}\abs{c_{i,J}}^2\ket{\psi_{L}}\bra{\psi_{L}}\\
	&\hspace{5mm}+\sum_{\substack{i'\in \mathcal{A}}}c^*_{M,i'}c_{M,l_M}\ket{\psi_{i'}}\bra{\psi_{l_M}}+\sum_{i,i'\in \mathcal{A}}c^*_{M,i'}c_{M,i}\ket{\psi_{i'}}\bra{\psi_i}.
\end{split}
\end{equation}
Unitarity requires that $U_2^\dagger U_2=I=\sum_{i\in \mathcal{A}\cup \mathcal{B}}\ket{\psi_i}\bra{\psi_i}$. Therefore, we require that
\begin{equation}
	c^*_{M,i'}c_{M,i}=\delta_{i,i'}
\end{equation}
for all $i,i'\in \mathcal{A}$. However, no such $c_{M,i}$ exists, so $U_2$ is not unitary and $U_2\notin \mathcal{U}_{\mathcal{M}_F}$.

We now find the conditions that make $U_1$ unitary:
\begin{equation}
\begin{split}
U_1^\dagger U_1
	&=\left[\sum_{i',j'\in \mathcal{A}}c^*_{i',j'}\ket{\psi_{j'}}\bra{\psi_{i'}}+\sum_{i'\in \mathcal{B}}c^*_{i',l_{i'}}\ket{\psi_{l_{i'}}}\bra{\psi_{i'}}\right]\left[\sum_{i,j\in \mathcal{A}}c_{i,j}\ket{\psi_i}\bra{\psi_j}+\sum_{i\in \mathcal{B}}c_{i,l_i}\ket{\psi_i}\bra{\psi_{l_i}}\right]\\
	&=\sum_{i,j,i',j'\in \mathcal{A}}c^*_{i',j'}c_{i,j}\ket{\psi_{j'}}\bra{\psi_{i'}} \psi_i\rangle\bra{\psi_j}+\sum_{i,i'\in \mathcal{B}}c^*_{i',l_{i'}}c_{i,l_i}\ket{\psi_{l_{i'}}}\bra{\psi_{i'}}\psi_i\rangle\bra{\psi_{l_i}}\\
	&=\sum_{i,j,i',j'\in \mathcal{A}}c^*_{i',j'}c_{i,j}\ket{\psi_{j'}}\bra{\psi_j}\delta_{i,i'}+\sum_{i,i'\in \mathcal{B}}c^*_{i',l_{i'}}c_{i,l_i}\ket{\psi_{l_{i'}}}\bra{\psi_{l_i}}\delta_{i,i'}\\
	&=\sum_{i,j,j'\in \mathcal{A}}c^*_{i,j'}c_{i,j}\ket{\psi_{j'}}\bra{\psi_j}+\sum_{i\in \mathcal{B}}\abs{c_{i,l_i}}^2\ket{\psi_{l_{i}}}\bra{\psi_{l_i}}.
\end{split}
\end{equation}
Unitarity requires that 
 \begin{equation}\label{Eq:li}
 	\{l_i\}_{i\in\mathcal{B}}=\mathcal{B},
 \end{equation}
 \begin{equation}\label{Eq:cli}
 	\abs{c_{i,l_i}}^2=1, \ \forall i\in \mathcal{B},
 \end{equation}
 \begin{equation}
 	\sum_{i\in \mathcal{A}}c^*_{i,j'}c_{i,j}=\delta_{j',j}, \quad j,j'\in \mathcal{A}.
 \end{equation}
The first equation implies that $l_i$ can be written as $l_i=\sigma(i)$ where $\sigma$ is a permutation of $\mathcal{B}$.

We ensure that the conditions previously found for $c_{i,l_i}$ are consistent with Eqs.~\eqref{Eq:li} and \eqref{Eq:cli}. For $k\in \mathcal{B}$, recall that there exists an index $m_k\in \mathcal{B}$ such that $c_{i,k}=0$ for $i\in \mathcal{B}$ satisfying $i\neq m_k$. Also, there exists an index $l_k\in \mathcal{B}$ such that $c_{k,j}=0$ for $j\in \mathcal{B}$ and $j\neq l_k$. Assume that, for $k_1\in \mathcal{B}$, $l_{k_1}=l'$. There exists an index $m_{l'}\in \mathcal{B}$ such that $c_{i,l'}=0$ for $i\in \mathcal{B}$ and $i\neq m_{l'}$. If $k_1\neq m_{l'}$, then $c_{k_1,l_{k_1}}=0$, contradicting Eq.~\eqref{Eq:cli}. Therefore, $m_{l'}= k_1$. In other words, if $k_1,k_2\in\mathcal{B}$ and $l_{k_1}=l_{k_2}$, then $k_1=k_2$. This ensures that $\{l_k\}_{k\in \mathcal{B}}=\mathcal{B}$, consistent with Eq.~\eqref{Eq:li}.

Lastly, for an arbitrary free pure state $\rho=\sum_{k,p\in \mathcal{A}}a_{k,p}\ket{\psi_k}\bra{\psi_p}+\sum_{k\in \mathcal{B}}a_{k,k}\ket{\psi_k}\bra{\psi_k}$, we verify that $U_1^\dagger \rho  U_1\in \mathcal{M}_F$:
\begin{equation}
\begin{split}
	U_1^\dagger \rho U_1
	&=\left[\sum_{i',j'\in \mathcal{A}}c^*_{i',j'}\ket{\psi_{j'}}\bra{\psi_{i'}}+\sum_{i'\in \mathcal{B}}c^*_{i',l_{i'}}\ket{\psi_{l_{i'}}}\bra{\psi_{i'}}\right]\left[\sum_{k,p\in \mathcal{A}}a_{k,p}\ket{\psi_k}\bra{\psi_p}+\sum_{k\in \mathcal{B}}a_{k,k}\ket{\psi_k}\bra{\psi_k}\right]\\
	&\hspace{5mm}\times\left[\sum_{i,j\in \mathcal{A}}c_{i,j}\ket{\psi_i}\bra{\psi_j}+\sum_{i\in \mathcal{B}}c_{i,l_i}\ket{\psi_i}\bra{\psi_{l_i}}\right]\\
	&=\sum_{i,j,i',j',k,p\in \mathcal{A}}c^*_{i',j'}c_{i,j}a_{k,p}\ket{\psi_{j'}}\bra{\psi_{j}}\delta_{i',k}\delta_{p,i}+\sum_{i,i',k\in \mathcal{B}}c^*_{i',l_{i'}}c_{i,l_i}a_{k,k}\ket{\psi_{l_{i'}}}\bra{\psi_{l_i}}\delta_{i',k}\delta_{k,i}\\
	&=\sum_{j,j',k,p\in \mathcal{A}}c^*_{k,j'}c_{p,j}a_{k,p}\ket{\psi_{j'}}\bra{\psi_{j}}+\sum_{k\in \mathcal{B}}c^*_{k,l_{k}}c_{k,l_k}a_{k,k}\ket{\psi_{l_{k}}}\bra{\psi_{l_k}}\\
	&=\sum_{j,j'\in \mathcal{A}}\left[\sum_{k,p\in \mathcal{A}}c^*_{k,j'}c_{p,j}a_{k,p}\right]\ket{\psi_{j'}}\bra{\psi_{j}}+\sum_{k\in \mathcal{B}}a_{k,k}\ket{\psi_{l_{k}}}\bra{\psi_{l_k}}\\
	&\in \mathcal{M}_F.
\end{split}
\end{equation}
 
Hence, all unitaries in $\mathcal{U}_{\mathcal{M}_F}$ have the form of $U_1$, implying that $\mathcal{U}_{\mathcal{M}_F}=\mathcal{U}_F$.

\section{Proof of Proposition~\ref{Prop:FreeObservable}}
Let $O$ be a normalized observable satisfying 
\begin{equation}
	O=\sum_{\substack{i,j \in  \mathcal{A}\\}}a_{i,j}\ket{\psi_i}\bra{\psi_j}+\sum_{i\in\mathcal{ B}}a_{i,i}\ket{\psi_i}\bra{\psi_i}.
\end{equation}
Then 
\begin{equation}
\begin{split}
	O(2\pi T)&=\sum_{\substack{i,j \in  \mathcal{A}\\}}a_{i,j}e^{-i(E_i-E_j)2\pi T}\ket{\psi_i}\bra{\psi_j}+\sum_{i\in \mathcal{B}}a_{i,i}e^{i(E_i-E_i)2\pi T}\ket{\psi_i}\bra{\psi_i}\\
	&=\sum_{\substack{i,j \in  \mathcal{A}\\}}a_{i,j}\ket{\psi_i}\bra{\psi_j}+\sum_{i\in \mathcal{B}}a_{i,i}\ket{\psi_i}\bra{\psi_i}\\
	&=O
\end{split}
\end{equation}
and $G(O)=1$.

Let $O$ be a normalized observable with the form
\begin{equation}
	O=\sum_{\substack{i,j\in \mathcal{B}_2\subseteq \mathcal{B},\\ i \neq j, \abs{\mathcal{B}_2}\geq 2}}a_{i,j}\ket{\psi_i}\bra{\psi_j},
\end{equation}
where all coefficients are nonzero. By the Hermiticity condition, this can be written as
\begin{equation}
	O=\sum_{\substack{i,j\in \mathcal{B}_2\subseteq \mathcal{B},\\ i \neq j,\abs{\mathcal{B}_2}\geq 2,\\ i<j}}a_{i,j}\ket{\psi_i}\bra{\psi_j}+a_{j,i}\ket{\psi_j}\bra{\psi_i}.
\end{equation}
Then
\begin{equation}
\begin{split}
	O(2\pi T)
	&=\sum_{\substack{i,j\in \mathcal{B}_2\subseteq \mathcal{B},\\ i \neq j,\abs{\mathcal{B}_2}\geq 2,\\ i<j}}a_{i,j}e^{-i(E_i-E_j)2\pi T}\ket{\psi_i}\bra{\psi_j}+a_{j,i}e^{i(E_i-E_j)2\pi T}\ket{\psi_j}\bra{\psi_i}\\
	&=\sum_{\substack{i,j\in \mathcal{B}_2\subseteq \mathcal{B},\\ i \neq j,\abs{\mathcal{B}_2}\geq 2,\\ i<j}}e^{-i(E_i-E_j)2\pi T}\left[a_{i,j}\ket{\psi_i}\bra{\psi_j}+a_{j,i}e^{i(E_i-E_j)4\pi T}\ket{\psi_j}\bra{\psi_i}\right].
\end{split}
\end{equation}
Since, by assumption, $E_i-E_j$ is not a rational number, then $e^{i(E_i-E_j)4\pi T}\neq 1$ and $O\neq e^{i\phi} O(2\pi T)$ for any phase $\phi$. Therefore $G(O)<1$. 

Let $O$ be a normalized observable with the form
\begin{equation}
	O=\sum_{\substack{i\in \mathcal{A}'\subseteq \mathcal{A},j\in \mathcal{B}_1\subseteq \mathcal{B},\\  \abs{\mathcal{A}'}\geq 1, \abs{\mathcal{B}_1}\geq 1}}a_{i,j}\ket{\psi_i}\bra{\psi_j}+a_{j,i}\ket{\psi_j}\bra{\psi_i}
\end{equation}
where all coefficients are nonzero. Then
\begin{equation}
\begin{split}
	O(2\pi T)
	&=\sum_{\substack{i\in \mathcal{A}'\subseteq \mathcal{A},j\in \mathcal{B}_1\subseteq \mathcal{B},\\  \abs{\mathcal{A}'}\geq 1, \abs{\mathcal{B}_1}\geq 1}}a_{i,j}e^{-i(E_i-E_j)2\pi T}\ket{\psi_i}\bra{\psi_j}+a_{j,i}e^{i(E_i-E_j)2\pi T}\ket{\psi_j}\bra{\psi_i}\\
	&=\sum_{\substack{i\in \mathcal{A}'\subseteq \mathcal{A},j\in \mathcal{B}_1\subseteq \mathcal{B},\\  \abs{\mathcal{A}'}\geq 1, \abs{\mathcal{B}_1}\geq 1}}a_{i,j}e^{iE_j2\pi T}\ket{\psi_i}\bra{\psi_j}+a_{j,i}e^{-iE_j2\pi T}\ket{\psi_j}\bra{\psi_i}.
\end{split}
\end{equation}
Since $E_j$ is an irrational number, then $e^{i\phi}O(2\pi T)\neq O$ for any phase $\phi$. Hence, $G(O)<1$. In general, a normalized observable $O$ can be written as
\begin{equation}
	O=\sum_{\substack{i,j \in  \mathcal{A}\\}}a_{i,j}\ket{\psi_i}\bra{\psi_j}+\sum_{i\in \mathcal{B}}a_{i,i}\ket{\psi_i}\bra{\psi_i}+\sum_{i,j\in \mathcal{B}, i\neq j}a_{i,j}\ket{\psi_i}\bra{\psi_j}+\sum_{i\in \mathcal{A},j\in \mathcal{B}}a_{i,j}\ket{\psi_i}\bra{\psi_j}+\sum_{j\in \mathcal{A},i\in \mathcal{B}}a_{i,j}\ket{\psi_i}\bra{\psi_j}.
\end{equation}
If there exist any nonzero coefficients in the last three summations, then $G(O)<1$. Therefore, $G(O)=1$ if and only if
\begin{equation}
	O=\sum_{\substack{i,j \in  \mathcal{A}\\}}a_{i,j}\ket{\psi_i}\bra{\psi_j}+\sum_{i\in \mathcal{B}}a_{i,i}\ket{\psi_i}\bra{\psi_i}.
\end{equation}

\section{Free unitaries for observables}
We define $\mathcal{U}_{\mathcal{O}_F}$ as the set of unitaries which map $\mathcal{O}_F$ to itself. We will show that $\mathcal{U}_{\mathcal{O}_F}=\mathcal{U}_F$. We first take a unitary $U\in \mathcal{U}_{\mathcal{O}_F}$ to have the general form
\begin{equation}
U=\sum_{i,j\in \mathcal{A}\cup \mathcal{B}}c_{i,j}\ket{\psi_i}\bra{\psi_j}.
\end{equation}
We define $O_k=a_{k,k}\ket{\psi_k}\bra{\psi_k}\in \mathcal{O}_F$, where $k\in \mathcal{A} \cup \mathcal{B}$ and $a_k$ is chosen to satisfy $\norm{O_k}_2=1$. We compute the following
\begin{equation}
\begin{split}
U^\dagger O_k U
	&=\left[	\sum_{i',j'\in \mathcal{A}\cup \mathcal{B}}c^*_{i',j'}\ket{\psi_{j'}}\bra{\psi_{i'}}\right]\left[a_{k,k}\ket{\psi_k}\bra{\psi_k}\right] \left[\sum_{i,j\in \mathcal{A}\cup\mathcal{B}}c_{i,j}\ket{\psi_i}\bra{\psi_j}\right]\\
	&=a_{k,k}\sum_{i',j',i,j\in \mathcal{A}\cup \mathcal{B}}c^*_{i',j'}c_{i,j}\ket{\psi_{j'}}\bra{\psi_j}\delta_{i',k}\delta_{k,i}\\
	&=a_{k,k}\sum_{j',j\in \mathcal{A}\cup \mathcal{B}}c^*_{k,j'}c_{k,j}\ket{\psi_{j'}}\bra{\psi_j}.
\end{split}
\end{equation}
Since $U^\dagger O_k U\in \mathcal{O}_F$, it follows that $c^*_{k,j'}c_{k,j}=0$ for $k\in \mathcal{A}\cup \mathcal{B}$ and $j,j'\in \mathcal{B}$ where $j\neq j'$. 
This implies that there exists an index $l_k\in \mathcal{B}$ such that $c_{k,j}=0$ for $j\in \mathcal{B}$ and $j\neq l_k$. The subscript in $l_k$ denotes that this index may depend on the value of $k$.

Before proceeding with the proof, we introduce the following lemma.
\begin{lemma}\label{Lemma:FreeUnitaryObs}
If $U\in \mathcal{U}_{\mathcal{O}_F}$, then $U^\dagger \in \mathcal{U}_{\mathcal{O}_F}$.
\end{lemma}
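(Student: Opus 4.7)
\textit{Proof plan.} My approach will mirror the strategy used for Lemma~\ref{Lemma:Free} (pure states) and Lemma~\ref{Lemma:FreeUnitaryDM} (density matrices), adapted to the Heisenberg action $O \mapsto U^\dagger O U$ that is implicit in the definition of $\mathcal{U}_{\mathcal{O}_F}$. Throughout, I will use only that $U$ is unitary, that $\mathcal{O}_F$ consists of normalized observables, and the explicit characterization of $\mathcal{O}_F$ given by Proposition~\ref{Prop:FreeObservable}.

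First, I would record two elementary facts about the map $\Phi_U: O \mapsto U^\dagger O U$. (i) $\Phi_U$ is a bijection on the full space of bounded operators on the $n$-qubit Hilbert space, with inverse $\Phi_{U^\dagger}: O \mapsto U O U^\dagger$; in particular it is injective. (ii) $\Phi_U$ preserves the Hilbert--Schmidt norm, $\|\Phi_U(O)\|_2 = \|O\|_2$, so it preserves the normalization condition defining $\mathcal{O}$. Combined with the hypothesis that $\Phi_U(\mathcal{O}_F) \subseteq \mathcal{O}_F$, this yields an injection of $\mathcal{O}_F$ into itself.

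The key step is upgrading this injection to a surjection. Here I would invoke Proposition~\ref{Prop:FreeObservable}: the set $\mathcal{O}_F$ is the unit sphere (in the $\|\cdot\|_2$ norm) of the finite-dimensional real vector space $V_F$ of Hermitian operators of the prescribed block form (arbitrary Hermitian on $L_{\mathrm{Rat}}$, diagonal on the irrational eigenbasis). Since $\Phi_U$ is $\mathbb{R}$-linear and norm-preserving on the ambient Hermitian space, the restriction of $\Phi_U$ to $V_F$ is a linear isometry from $V_F$ into $V_F$; any such map between finite-dimensional normed spaces of equal dimension is automatically a bijection on $V_F$, hence on its unit sphere $\mathcal{O}_F$. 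Therefore $\Phi_U(\mathcal{O}_F) = \mathcal{O}_F$.

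With surjectivity in hand, the conclusion is immediate: for any fixed $O_2 \in \mathcal{O}_F$ there exists $O_1 \in \mathcal{O}_F$ with $O_2 = U^\dagger O_1 U$, so $U O_2 U^\dagger = O_1 \in \mathcal{O}_F$. Since $O_2$ was arbitrary in $\mathcal{O}_F$, this shows $\Phi_{U^\dagger}(\mathcal{O}_F) \subseteq \mathcal{O}_F$, i.e., $U^\dagger \in \mathcal{U}_{\mathcal{O}_F}$. The main obstacle in this proof is the injectivity-to-surjectivity upgrade; the previous analogous lemmas gloss over it by invoking ``one-to-one,'' but it is cleanest to make the finite-dimensional isometry argument explicit so that the step does not implicitly assume what it is trying to prove.
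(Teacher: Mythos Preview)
Your proof is correct and follows the same overall skeleton as the paper's: establish that $\Phi_U$ maps $\mathcal{O}_F$ onto itself, then read off that $\Phi_{U^\dagger}$ preserves $\mathcal{O}_F$. The paper's proof simply asserts ``unitaries are one-to-one mappings~\dots\ therefore $\{U^\dagger O U : O\in\mathcal{O}_F\}=\mathcal{O}_F$,'' leaving the injectivity-to-surjectivity step unjustified. Your version supplies the missing ingredient by invoking Proposition~\ref{Prop:FreeObservable} to recognize $\mathcal{O}_F$ as the unit sphere of a finite-dimensional real subspace $V_F$, so that the norm-preserving linear map $\Phi_U|_{V_F}:V_F\to V_F$ is automatically bijective. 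This is a genuine improvement in rigor rather than a different route: the paper's argument tacitly relies on exactly this finite-dimensionality, and your proof makes that reliance explicit.
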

\begin{proof}
By definition, for a unitary $U\in\mathcal{U}_{\mathcal{O}_F}$,  $\{U^\dagger O U : O\in \mathcal{O}_F\}\subseteq \mathcal{O}_F$. Unitaries are one-to-one mappings, implying that for any two free observables $O_1$ and $O_2$, if $O_1\neq O_2$, then $U^\dagger O_1 U \neq U^\dagger O_2 U$. Therefore, $\{U^\dagger O_1 U : O_1\in \mathcal{O}_F\}=\mathcal{O}_F$. For any fixed free observable $O_2$, there exists a free observable $O_1$ such that $O_2=U^\dagger O_1 U$. Hence $UO_2U^\dagger=O_1\in \mathcal{O}_F$ for all $O_2\in\mathcal{O}_F$ and $U^\dagger \in \mathcal{U}_{\mathcal{O}_F}$.
\end{proof}
Since $U^\dagger\in \mathcal{U}_{\mathcal{O}_F}$ by Lemma~\ref{Lemma:FreeUnitaryObs}, then $U O_k U^\dagger \in\mathcal{O}_F$. We compute 
\begin{equation}
\begin{split}
U O_k U^\dagger
	&=\left[	\sum_{i',j'\in \mathcal{A}\cup \mathcal{B}}c_{i',j'}\ket{\psi_{i'}}\bra{\psi_{j'}}\right]\left[a_{k,k}\ket{\psi_k}\bra{\psi_k}\right] \left[\sum_{i,j\in \mathcal{A}\cup \mathcal{B}}c^*_{i,j}\ket{\psi_j}\bra{\psi_i}\right]\\
	&=a_{k,k}\sum_{i',j',i,j\in \mathcal{A}\cup \mathcal{B}}c_{i',j'}c^*_{i,j}\ket{\psi_{i'}}\bra{\psi_i}\delta_{j',k}\delta_{k,j}\\
	&=a_{k,k}\sum_{i',i\in \mathcal{A}\cup \mathcal{B}}c_{i',k}c^*_{i,k}\ket{\psi_{i'}}\bra{\psi_i}.
\end{split}
\end{equation}
It follows that $c_{i',k}c^*_{i,k}=0$ for $k\in \mathcal{A}\cup \mathcal{B}$ and $i,i'\in \mathcal{B}$ where $i\neq i'$. This implies that there exists an index $m_k\in \mathcal{B}$ such that $c_{i,k}=0$ for $i\in \mathcal{B}$ and $i\neq m_k$. The unitary becomes
\begin{equation}
	\begin{split}
	U
	&=\sum_{i,j\in \mathcal{A}\cup \mathcal{B}}c_{i,j}\ket{\psi_i}\bra{\psi_j}\\
	&=\sum_{i,j\in \mathcal{A}}c_{i,j}\ket{\psi_i}\bra{\psi_j}+\sum_{i,j\in \mathcal{B}}c_{i,j}\ket{\psi_i}\bra{\psi_j}+\sum_{i\in \mathcal{A},j\in \mathcal{B}}c_{i,j}\ket{\psi_i}\bra{\psi_j}+\sum_{i\in \mathcal{B},j\in \mathcal{A}}c_{i,j}\ket{\psi_i}\bra{\psi_j}\\
	&=\sum_{i,j\in \mathcal{A}}c_{i,j}\ket{\psi_i}\bra{\psi_j}+\sum_{i\in \mathcal{B}}c_{i,l_i}\ket{\psi_i}\bra{\psi_{l_i}}+\sum_{i\in \mathcal{A}}c_{i,l_i}\ket{\psi_i}\bra{\psi_{l_i}}+\sum_{i\in \mathcal{A}}c_{m_i,i}\ket{\psi_{m_i}}\bra{\psi_i},
	\end{split}
\end{equation} 
where $l_i,m_i\in \mathcal{B}$.

Define the normalized observable
\begin{equation}
	O_{k,p}=a_{k,p}\ket{\psi_k}\bra{\psi_p}+a_{p,k}\ket{\psi_p}\bra{\psi_k},
\end{equation}
where $k,p\in \mathcal{A}$.
Now compute
\begin{equation}
\begin{split}
U^\dagger O_{k,p} U
	&=\left[\sum_{i',j'\in \mathcal{A}}c^*_{i',j'}\ket{\psi_{j'}}\bra{\psi_{i'}}+\sum_{i'\in \mathcal{B}}c^*_{i',l_{i'}}\ket{\psi_{l_{i'}}}\bra{\psi_{i'}}+\sum_{i'\in \mathcal{A}}c^*_{i',l_{i'}}\ket{\psi_{l_{i'}}}\bra{\psi_{i'}}+\sum_{i'\in \mathcal{A}}c^*_{m_{i'},i'}\ket{\psi_{i'}}\bra{\psi_{m_{i'}}}\right]\\
	&\hspace{20mm}\times\left[a_{k,p}\ket{\psi_k}\bra{\psi_p}+a_{p,k}\ket{\psi_p}\bra{\psi_k}\right] \\
	&\hspace{20mm}\times\left[\sum_{i,j\in \mathcal{A}}c_{i,j}\ket{\psi_i}\bra{\psi_j}+\sum_{i\in \mathcal{B}}c_{i,l_i}\ket{\psi_i}\bra{\psi_{l_i}}+\sum_{i\in \mathcal{A}}c_{i,l_i}\ket{\psi_i}\bra{\psi_{l_i}}+\sum_{i\in \mathcal{A}}c_{m_i,i}\ket{\psi_{m_i}}\bra{\psi_i}\right]\\
	&=\left[\sum_{i',j'\in \mathcal{A}}c^*_{i',j'}\ket{\psi_{j'}}\bra{\psi_{i'}}+\sum_{i'\in \mathcal{A}}c^*_{i',l_{i'}}\ket{\psi_{l_{i'}}}\bra{\psi_{i'}}\right]
	\left[a_{k,p}\ket{\psi_k}\bra{\psi_p}+a_{p,k}\ket{\psi_p}\bra{\psi_k}\right] \\
	&\hspace{20mm}\times\left[\sum_{i,j\in \mathcal{A}}c_{i,j}\ket{\psi_i}\bra{\psi_j}+\sum_{i\in \mathcal{A}}c_{i,l_i}\ket{\psi_i}\bra{\psi_{l_i}}\right]\\
	&=\left[Q_1^\dagger +Q_2^\dagger \right]O_{k,p}\left[Q_1 +Q_2 \right]\\
	&=Q_1^\dagger O_{k,p}Q_1+Q_1^\dagger O_{k,p}Q_2+Q_2^\dagger O_{k,p}Q_1+Q_2^\dagger O_{k,p}Q_2 ,
\end{split}
\end{equation}
where we define $Q_1=\sum_{i,j\in \mathcal{A}}c_{i,j}\ket{\psi_i}\bra{\psi_j}$ and $Q_2=\sum_{i\in \mathcal{A}}c_{i,l_i}\ket{\psi_i}\bra{\psi_{l_i}}$.
We now prove conditions on the coefficients to ensure that $U^\dagger O_{k,p}U\in \mathcal{O}_F$. We compute the fourth term in the above equation:

\begin{equation}
\begin{split}
Q_2^\dagger O_{k,p} Q_2
	&=\sum_{i'\in \mathcal{A}}c^*_{i',l_{i'}}\ket{\psi_{l_{i'}}}\bra{\psi_{i'}}
	\left[a_{k,p}\ket{\psi_k}\bra{\psi_p}+a_{p,k}\ket{\psi_p}\bra{\psi_k}\right]\sum_{i\in \mathcal{A}}c_{i,l_i}\ket{\psi_i}\bra{\psi_{l_i}}\\
	&=\sum_{i,i'\in \mathcal{A}}c_{i,l_i}c^*_{i',l_{i'}}\ket{\psi_{l_{i'}}}\bra{\psi_{l_i}}
	\left[a_{k,p}\delta_{i',k}\delta_{p,i}+a_{p,k}\delta_{i',p}\delta_{k,i}\right]\\
	&=a_{k,p}c_{p,l_p}c^*_{k,l_{k}}\ket{\psi_{l_{k}}}\bra{\psi_{l_p}}
	+a_{p,k}c_{k,l_k}c^*_{p,l_{p}}\ket{\psi_{l_{p}}}\bra{\psi_{l_k}}.
\end{split}
\end{equation}
As $l_p,l_k\in\mathcal{B}$, we require that $c_{p,l_p}c^*_{k,l_{k}}=0$ for $l_p\neq l_k$ and for all $p,k\in \mathcal{A}$. This implies that there exists an index $L\in\{l_p\}_{p\in A}$ such that $c_{p,l_p}=0$ for $l_p\neq L$ and $p\in A$. This yields $Q_2=\sum_{i\in \mathcal{A},l_i=L}c_{i,L}\ket{\psi_i}\bra{\psi_{L}}$. We are free to include the terms in the sum where $l_i\neq L$, since $c_{i,l_i}=0$ in this case. We do this for convenience and write $Q_2=\sum_{i\in \mathcal{A}}c_{i,L}\ket{\psi_i}\bra{\psi_{L}}$.

Now we compute the second term:
\begin{equation}
\begin{split}
Q_1^\dagger O_{k,p} Q_2
	&=\sum_{i',j'\in \mathcal{A}}c^*_{i',j'}\ket{\psi_{j'}}\bra{\psi_{i'}}
	\left[a_{k,p}\ket{\psi_k}\bra{\psi_p}+a_{p,k}\ket{\psi_p}\bra{\psi_k}\right]\sum_{i\in \mathcal{A}}c_{i,L}\ket{\psi_i}\bra{\psi_{L}}\\
	&=\sum_{i,i',j'\in \mathcal{A}}c_{i,L}c^*_{i',j'}\ket{\psi_{j'}}\bra{\psi_{L}}
	\left[a_{k,p}\delta_{i',k}\delta_{p,i}+a_{p,k}\delta_{i',p}\delta_{k,i}\right]\\
	&=\sum_{j'\in \mathcal{A}}\ket{\psi_{j'}}\bra{\psi_{L}}
	\left[a_{k,p}c_{p,L}c^*_{k,j'}+a_{p,k}c_{k,L}c^*_{p,j'}\right].
\end{split}
\end{equation}
The brackets must vanish for any allowed choice of the coefficients $a_{k,p}$ and $a_{p,k}$. This is true only if $c_{p,L}c^*_{k,j'}=0$ and $c_{k,L}c^*_{p,j'}=0$ for all $p,k,j'\in \mathcal{A}$. Either $c_{p,L}=0$ for all $p\in \mathcal{A}$ or $c_{k,j'}=0$ for all $k,j'\in \mathcal{A}$.

Now compute the third term:
\begin{equation}
\begin{split}
Q_2^\dagger O_{k,p} Q_1
	&=	\sum_{i'\in \mathcal{A}}c^*_{i',L}\ket{\psi_{L}}\bra{\psi_{i'}}\left[a_{k,p}\ket{\psi_k}\bra{\psi_p}+a_{p,k}\ket{\psi_p}\bra{\psi_k}\right]\sum_{i,j\in \mathcal{A}}c_{i,j}\ket{\psi_{i}}\bra{\psi_{j}}
\\
	&=	\sum_{i',i,j\in \mathcal{A}}c^*_{i',L}c_{i,j}\ket{\psi_{L}}\bra{\psi_{j}}
\left[a_{k,p}\delta_{i',k}\delta_{p,i}+a_{p,k}\delta_{i',p}\delta_{k,i}\right]\\
	&=	\sum_{j\in \mathcal{A}}\ket{\psi_{L}}\bra{\psi_{j}}
\left[a_{k,p}c^*_{k,L}c_{p,j}+a_{p,k}c^*_{p,L}c_{k,j}\right].
\end{split}
\end{equation}
The above conditions on the coefficients ensure that the brackets vanish.

Now compute, for $k,p\in \mathcal{A}$, 
\begin{equation}
\begin{split}
U O_{k,p} U^\dagger
	&=\left[\sum_{i',j'\in \mathcal{A}}c_{i',j'}\ket{\psi_{i'}}\bra{\psi_{j'}}+\sum_{i'\in \mathcal{B}}c_{i',l_{i'}}\ket{\psi_{i'}}\bra{\psi_{l_{i'}}}+\sum_{i'\in \mathcal{A}}c_{i',L}\ket{\psi_{i'}}\bra{\psi_{L}}+\sum_{i'\in \mathcal{A}}c_{m_{i'},i'}\ket{\psi_{m_{i'}}}\bra{\psi_{i'}}\right]\\
	&\hspace{20mm}\times\left[a_{k,p}\ket{\psi_k}\bra{\psi_p}+a_{p,k}\ket{\psi_p}\bra{\psi_k}\right] \\
	&\hspace{20mm}\times\left[\sum_{i,j\in \mathcal{A}}c^*_{i,j}\ket{\psi_j}\bra{\psi_i}+\sum_{i\in \mathcal{B}}c^*_{i,l_i}\ket{\psi_{l_i}}\bra{\psi_i}+\sum_{i\in \mathcal{A}}c^*_{i,L}\ket{\psi_{L}}\bra{\psi_i}+\sum_{i\in \mathcal{A}}c^*_{m_i,i}\ket{\psi_i}\bra{\psi_{m_i}}\right]\\
	&=\left[\sum_{i',j'\in \mathcal{A}}c_{i',j'}\ket{\psi_{i'}}\bra{\psi_{j'}}+\sum_{i'\in \mathcal{A}}c_{m_{i'},i'}\ket{\psi_{m_{i'}}}\bra{\psi_{i'}}\right]
	\left[a_{k,p}\ket{\psi_k}\bra{\psi_p}+a_{p,k}\ket{\psi_p}\bra{\psi_k}\right] \\
	&\hspace{20mm}\times\left[\sum_{i,j\in \mathcal{A}}c^*_{i,j}\ket{\psi_j}\bra{\psi_i}+\sum_{i\in \mathcal{A}}c^*_{m_i,i}\ket{\psi_i}\bra{\psi_{m_i}}\right]\\
		&=\left[Q_1+Q_3\right]O_{k,p}\left[Q_1^\dagger+Q_3^\dagger\right]\\
		&=Q_1O_{k,p}Q_1^\dagger+Q_1O_{k,p}Q_3^\dagger +Q_3O_{k,p}Q_1^\dagger +Q_3O_{k,p}Q_3^\dagger,
\end{split}
\end{equation}
where $Q_3=\sum_{i'\in \mathcal{A}}c_{m_{i'},i'}\ket{\psi_{m_{i'}}}\bra{\psi_{i'}}$.
We find the conditions on the coefficients which ensure that $U O_{k,p}U^\dagger \in \mathcal{O}_F$. We compute the fourth term in the above equation:
\begin{equation}
\begin{split}
Q_3 O_{k,p} Q_3^\dagger
	&=\sum_{i'\in \mathcal{A}}c_{m_{i'},i'}\ket{\psi_{m_{i'}}}\bra{\psi_{i'}}	\left[a_{k,p}\ket{\psi_k}\bra{\psi_p}+a_{p,k}\ket{\psi_p}\bra{\psi_k}\right]\sum_{i\in \mathcal{A}}c^*_{m_i,i}\ket{\psi_i}\bra{\psi_{m_i}}\\
	&=\sum_{i,i'\in \mathcal{A}}c_{m_{i'},i'}c^*_{m_i,i}\ket{\psi_{m_{i'}}}\bra{\psi_{m_i}}	\left[a_{k,p}\delta_{i',k}\delta_{p,i}+a_{p,k}\delta_{i',p}\delta_{k,i}\right]\\
	&=a_{k,p}c_{m_{k},k}c^*_{m_p,p}\ket{\psi_{m_{k}}}\bra{\psi_{m_p}}+a_{p,k}c_{m_{p},p}c^*_{m_k,k}\ket{\psi_{m_{p}}}\bra{\psi_{m_k}}.
\end{split}
\end{equation}
We require that $c_{m_k,k}c^*_{m_p,p}=0$ for $m_k\neq m_p$ and all $p,k\in \mathcal{A}$. This implies that there exists an $M\in \{m_k\}_{k\in\mathcal{A}}$ such that $c_{m_k,k}=0$ for $m_k\neq M$ and $k\in \mathcal{A}$. This yields $Q_3=\sum_{i'\in \mathcal{A}, m_{i'}\neq M}c_{M,i'}\ket{\psi_{M}}\bra{\psi_{i'}}$. By including some nonzero coefficients, we can write this as $Q_3=\sum_{i'\in \mathcal{A}}c_{M,i'}\ket{\psi_{M}}\bra{\psi_{i'}}$. Now we compute the second term:
\begin{equation}
\begin{split}
Q_1 O_{k,p} Q_3^\dagger
	&=\sum_{i',j'\in \mathcal{A}}c_{i',j'}\ket{\psi_{i'}}\bra{\psi_{j'}}\left[a_{k,p}\ket{\psi_k}\bra{\psi_p}+a_{p,k}\ket{\psi_p}\bra{\psi_k}\right]\sum_{i\in \mathcal{A}}c^*_{M,i}\ket{\psi_i}\bra{\psi_{M}}\\
	&=\sum_{i,i',j'\in \mathcal{A}}c_{i',j'}c^*_{M,i}\ket{\psi_{i'}}\bra{\psi_{M}}\left[a_{k,p}\delta_{j',k}\delta_{p,i}+a_{p,k}\delta_{j',p}\delta_{k,i}\right]\\
	&=\sum_{i'\in \mathcal{A}}\left[a_{k,p}c_{i',k}c^*_{M,p}+a_{p,k}c_{i',p}c^*_{M,k}\right]\ket{\psi_{i'}}\bra{\psi_{M}}.
\end{split}
\end{equation}
The brackets must vanish for an arbitrary choice of $a_{k,p}$ and $a_{p,k}$ which make $O_{k,p}$ a normalized observable. This requires that $c_{i',k}c^*_{M,p}=0$ and $c_{i',p}c^*_{M,k}=0$ for all $i',p,k\in \mathcal{A}$. Either $c_{i',k}=0$ for all $i',k\in \mathcal{A}$ or $c_{M,p}=0$ for all $p\in \mathcal{A}$. We now compute the third term:

\begin{equation}
\begin{split}
Q_3 O_{k,p} Q_1^\dagger
	&=\sum_{i'\in \mathcal{A}}c_{M,i'}\ket{\psi_{M}}\bra{\psi_{i'}}	\left[a_{k,p}\ket{\psi_k}\bra{\psi_p}+a_{p,k}\ket{\psi_p}\bra{\psi_k}\right]\sum_{i,j\in \mathcal{A}}c^*_{i,j}\ket{\psi_{j}}\bra{\psi_{i}}\\
	&=\sum_{i,j,i'\in \mathcal{A}}c_{M,i'}c^*_{i,j}\ket{\psi_{M}}\bra{\psi_{i}}	\left[a_{k,p}\delta_{i',k}\delta_{p,j}+a_{p,k}\delta_{i',p}\delta_{k,j}\right]\\
	&=\sum_{i\in \mathcal{A}}\left[a_{k,p}c_{M,k}c^*_{i,p}+a_{p,k}c_{M,p}c^*_{i,k}\right]\ket{\psi_{M}}\bra{\psi_{i}}.
\end{split}
\end{equation}
The conditions found above ensure that the brackets vanish. 

Given all of the constraints on the coefficients, there are two possible forms of $U$:
\begin{equation}
	\begin{split}
	U_1&=\sum_{i,j\in \mathcal{A}}c_{i,j}\ket{\psi_i}\bra{\psi_j}+\sum_{i\in \mathcal{B}}c_{i,l_i}\ket{\psi_i}\bra{\psi_{l_i}},\\
	U_2&=\sum_{i\in \mathcal{B}}c_{i,l_i}\ket{\psi_i}\bra{\psi_{l_i}}+\sum_{i\in \mathcal{A}}c_{i,L}\ket{\psi_i}\bra{\psi_{L}}+\sum_{i\in \mathcal{A}}c_{M,i}\ket{\psi_{M}}\bra{\psi_i}.
	\end{split}
\end{equation} 
We show that $U_2$ is not a unitary. We compute 
\begin{equation}
\begin{split}
	U_2^\dagger U_2
	&=\left[\sum_{i'\in \mathcal{B}}c^*_{i',l_{i'}}\ket{\psi_{l_{i'}}}\bra{\psi_{i'}}+\sum_{i'\in \mathcal{A}}c^*_{i',L}\ket{\psi_{L}}\bra{\psi_{i'}}+\sum_{i'\in \mathcal{A}}c^*_{M,i'}\ket{\psi_{i'}}\bra{\psi_{M}}\right]\\
	&\hspace{5mm}\times\left[\sum_{i\in \mathcal{B}}c_{i,l_i}\ket{\psi_i}\bra{\psi_{l_i}}+\sum_{i\in \mathcal{A}}c_{i,L}\ket{\psi_i}\bra{\psi_{L}}+\sum_{i\in \mathcal{A}}c_{M,i}\ket{\psi_{M}}\bra{\psi_i}\right]\\
	&=\sum_{i'\in \mathcal{B}}c^*_{i',l_{i'}}\ket{\psi_{l_{i'}}}\bra{\psi_{i'}}\left[\sum_{i\in \mathcal{B}}c_{i,l_i}\ket{\psi_i}\bra{\psi_{l_i}}+\sum_{i\in \mathcal{A}}c_{M,i}\ket{\psi_{M}}\bra{\psi_i}\right]+\sum_{i,i'\in \mathcal{A}}c^*_{i',L}c_{i,L}\ket{\psi_{L}}\bra{\psi_{i'}}\psi_i\rangle\bra{\psi_{L}}\\
	&\hspace{5mm}+\sum_{i'\in \mathcal{A}}c^*_{M,i'}\ket{\psi_{i'}}\bra{\psi_{M}}\left[\sum_{i\in \mathcal{B}}c_{i,l_i}\ket{\psi_i}\bra{\psi_{l_i}}+\sum_{i\in \mathcal{A}}c_{M,i}\ket{\psi_{M}}\bra{\psi_i}\right]\\
	&=\sum_{i,i'\in \mathcal{B}}c^*_{i',l_{i'}}c_{i,l_i}\ket{\psi_{l_{i'}}}\bra{\psi_{l_i}}\delta_{i,i'}+\sum_{\substack{i\in \mathcal{A},\\i'\in \mathcal{B}}}c^*_{i',l_{i'}}c_{M,i}\ket{\psi_{l_{i'}}}\bra{\psi_i}\delta_{i',M}+\sum_{i,i'\in \mathcal{A}}c^*_{i',L}c_{i,L}\ket{\psi_{L}}\bra{\psi_{L}}\delta_{i,i'}\\
	&\hspace{5mm}+\sum_{\substack{i'\in \mathcal{A},\\i\in \mathcal{B}}}c^*_{M,i'}c_{i,l_i}\ket{\psi_{i'}}\bra{\psi_{l_i}}\delta_{i,M}+\sum_{i,i'\in \mathcal{A}}c^*_{M,i'}c_{M,i}\ket{\psi_{i'}}\bra{\psi_i}\\
	&=\sum_{i\in \mathcal{B}}c^*_{i,l_{i}}c_{i,l_i}\ket{\psi_{l_{i}}}\bra{\psi_{l_i}}+\sum_{\substack{i\in A}}c^*_{M,l_{M}}c_{M,i}\ket{\psi_{l_{M}}}\bra{\psi_i}+\sum_{i\in A}\abs{c_{i,J}}^2\ket{\psi_{L}}\bra{\psi_{L}}\\
	&\hspace{5mm}+\sum_{\substack{i'\in \mathcal{A}}}c^*_{M,i'}c_{M,l_M}\ket{\psi_{i'}}\bra{\psi_{l_M}}+\sum_{i,i'\in \mathcal{A}}c^*_{M,i'}c_{M,i}\ket{\psi_{i'}}\bra{\psi_i}.
\end{split}
\end{equation}
Unitarity requires that $U_2^\dagger U_2=I=\sum_{i\in \mathcal{A}\cup \mathcal{B}}\ket{\psi_i}\bra{\psi_i}$. Therefore, we require that
\begin{equation}
	c^*_{M,i'}c_{M,i}=\delta_{i,i'}
\end{equation}
for all $i,i'\in \mathcal{A}$. However, no such $c_{M,i}$ exists, so $U_2$ is not unitary and $U_2\notin \mathcal{U}_{\mathcal{O}_F}$.

We now find the conditions that make $U_1$ unitary:
\begin{equation}
\begin{split}
U_1^\dagger U_1
	&=\left[\sum_{i',j'\in \mathcal{A}}c^*_{i',j'}\ket{\psi_{j'}}\bra{\psi_{i'}}+\sum_{i'\in \mathcal{B}}c^*_{i',l_{i'}}\ket{\psi_{l_{i'}}}\bra{\psi_{i'}}\right]\left[\sum_{i,j\in \mathcal{A}}c_{i,j}\ket{\psi_i}\bra{\psi_j}+\sum_{i\in \mathcal{B}}c_{i,l_i}\ket{\psi_i}\bra{\psi_{l_i}}\right]\\
	&=\sum_{i,j,i',j'\in \mathcal{A}}c^*_{i',j'}c_{i,j}\ket{\psi_{j'}}\bra{\psi_{i'}} \psi_i\rangle\bra{\psi_j}+\sum_{i,i'\in \mathcal{B}}c^*_{i',l_{i'}}c_{i,l_i}\ket{\psi_{l_{i'}}}\bra{\psi_{i'}}\psi_i\rangle\bra{\psi_{l_i}}\\
	&=\sum_{i,j,i',j'\in \mathcal{A}}c^*_{i',j'}c_{i,j}\ket{\psi_{j'}}\bra{\psi_j}\delta_{i,i'}+\sum_{i,i'\in \mathcal{B}}c^*_{i',l_{i'}}c_{i,l_i}\ket{\psi_{l_{i'}}}\bra{\psi_{l_i}}\delta_{i,i'}\\
	&=\sum_{i,j,j'\in \mathcal{A}}c^*_{i,j'}c_{i,j}\ket{\psi_{j'}}\bra{\psi_j}+\sum_{i\in \mathcal{B}}\abs{c_{i,l_i}}^2\ket{\psi_{l_{i}}}\bra{\psi_{l_i}}.
\end{split}
\end{equation}
Unitarity requires that 
 \begin{equation}\label{Eq:li2}
 	\{l_i\}_{i\in\mathcal{B}}=\mathcal{B},
 \end{equation}
 \begin{equation}\label{Eq:cli2}
 	\abs{c_{i,l_i}}^2=1, \ \forall i\in \mathcal{B},
 \end{equation}
 \begin{equation}
 	\sum_{i\in \mathcal{A}}c^*_{i,j'}c_{i,j}=\delta_{j',j}, \quad j,j'\in \mathcal{A}.
 \end{equation}

 We ensure that the conditions previously found for $c_{i,l_i}$ are consistent with Eqs.~\eqref{Eq:li2} and \eqref{Eq:cli2}. For $k\in \mathcal{B}$, recall that there exists an index $m_k\in \mathcal{B}$ such that $c_{i,k}=0$ for $i\in \mathcal{B}$ satisfying $i\neq m_k$. Also, there exists an index $l_k\in \mathcal{B}$ such that $c_{k,j}=0$ for $j\in \mathcal{B}$ and $j\neq l_k$. Assume that, for $k_1\in \mathcal{B}$, $l_{k_1}=l'$. There exists an index $m_{l'}\in \mathcal{B}$ such that $c_{i,l'}=0$ for $i\in \mathcal{B}$ and $i\neq m_{l'}$. If $k_1\neq m_{l'}$, then $c_{k_1,l_{k_1}}=0$, contradicting Eq.~\eqref{Eq:cli2}. Therefore, $m_{l'}= k_1$. In other words, if $k_1,k_2\in\mathcal{B}$ and $l_{k_1}=l_{k_2}$, then $k_1=k_2$. This ensures that $\{l_k\}_{k\in \mathcal{B}}=\mathcal{B}$, consistent with Eq.~\eqref{Eq:li2}.

Lastly, for an arbitrary free observable $O=\sum_{k,p\in \mathcal{A}}a_{k,p}\ket{\psi_k}\bra{\psi_p}+\sum_{k\in \mathcal{B}}a_{k,k}\ket{\psi_k}\bra{\psi_k}$, we verify that $U_1^\dagger O U_1\in \mathcal{O}_F$:
\begin{equation}
\begin{split}
	U_1^\dagger O U_1
	&=\left[\sum_{i',j'\in \mathcal{A}}c^*_{i',j'}\ket{\psi_{j'}}\bra{\psi_{i'}}+\sum_{i'\in \mathcal{B}}c^*_{i',l_{i'}}\ket{\psi_{l_{i'}}}\bra{\psi_{i'}}\right]\left[\sum_{k,p\in \mathcal{A}}a_{k,p}\ket{\psi_k}\bra{\psi_p}+\sum_{k\in \mathcal{B}}a_{k,k}\ket{\psi_k}\bra{\psi_k}\right]\\
	&\hspace{5mm}\times\left[\sum_{i,j\in \mathcal{A}}c_{i,j}\ket{\psi_i}\bra{\psi_j}+\sum_{i\in \mathcal{B}}c_{i,l_i}\ket{\psi_i}\bra{\psi_{l_i}}\right]\\
	&=\sum_{i,j,i',j',k,p\in \mathcal{A}}c^*_{i',j'}c_{i,j}a_{k,p}\ket{\psi_{j'}}\bra{\psi_{j}}\delta_{i',k}\delta_{p,i}+\sum_{i,i',k\in \mathcal{B}}c^*_{i',l_{i'}}c_{i,l_i}a_{k,k}\ket{\psi_{l_{i'}}}\bra{\psi_{l_i}}\delta_{i',k}\delta_{k,i}\\
	&=\sum_{j,j',k,p\in \mathcal{A}}c^*_{k,j'}c_{p,j}a_{k,p}\ket{\psi_{j'}}\bra{\psi_{j}}+\sum_{k\in \mathcal{B}}c^*_{k,l_{k}}c_{k,l_k}a_{k,k}\ket{\psi_{l_{k}}}\bra{\psi_{l_k}}\\
	&=\sum_{j,j'\in \mathcal{A}}\left[\sum_{k,p\in \mathcal{A}}c^*_{k,j'}c_{p,j}a_{k,p}\right]\ket{\psi_{j'}}\bra{\psi_{j}}+\sum_{k\in \mathcal{B}}a_{k,k}\ket{\psi_{l_{k}}}\bra{\psi_{l_k}}\\
	&\in \mathcal{O}_F.
\end{split}
\end{equation}
 
Hence, all unitaries in $\mathcal{U}_{\mathcal{O}_F}$ have the form of $U_1$, implying that $\mathcal{U}_{\mathcal{O}_F}=\mathcal{U}_F$.

\section{Properties of the observable non-revival monotone}
We prove faithfulness. For any $O\in\mathcal{O}$, $G(O)= \abs{\langle O(2\pi T),O\rangle}^2\leq \norm{O(2\pi T)}_2^2\norm{O}_2^2=1$ and $\mathcal{G}(O)\geq 0$. For any $U_F\in \mathcal{U}_F$ and $O\in \mathcal{O}_F$, $U_F^\dagger O U_F\in\mathcal{O}_F$, so $G(U_F^\dagger O U_F)=1$ and $\mathcal{G}(O)=0$. Now take $O\notin \mathcal{O}_F$. From the below lemma, it follows that $U_F^\dagger O U_F\notin \mathcal{O}_F$ and $G(U^\dagger_F O U_F)<1$, so $\mathcal{G}(O)>0$. Hence, $\mathcal{G}$ is faithful.
\begin{lemma}
For $U_F\in \mathcal{U}_F$ and $O\notin \mathcal{O}_F$, then $U_F^\dagger O U_F \notin \mathcal{O}_F$. 
\end{lemma}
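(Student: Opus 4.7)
The plan is to prove the contrapositive: if $U_F^\dagger O U_F \in \mathcal{O}_F$, then $O \in \mathcal{O}_F$. The argument relies on the bijective nature of unitary conjugation combined with the established identification $\mathcal{U}_{\mathcal{O}_F}=\mathcal{U}_F$ from the preceding subsection, so no new calculation in terms of the coefficients $a_{i,j}$ is required.

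First I would apply Lemma~\ref{Lemma:FreeUnitaryObs} to the free unitary $U_F\in \mathcal{U}_F = \mathcal{U}_{\mathcal{O}_F}$ to conclude that $U_F^\dagger \in \mathcal{U}_{\mathcal{O}_F}$ as well. By the very definition of $\mathcal{U}_{\mathcal{O}_F}$, this means that the conjugation map $O'\mapsto (U_F^\dagger)^\dagger O'(U_F^\dagger) = U_F O' U_F^\dagger$ sends $\mathcal{O}_F$ into $\mathcal{O}_F$. Thus both $O\mapsto U_F^\dagger O U_F$ and $O\mapsto U_F O U_F^\dagger$ preserve $\mathcal{O}_F$.

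Next I would argue by contradiction. Suppose that $O\notin \mathcal{O}_F$ but $O':=U_F^\dagger O U_F \in \mathcal{O}_F$. Conjugating back, $U_F O' U_F^\dagger = U_F U_F^\dagger O U_F U_F^\dagger = O$, so $O$ is the image of the free observable $O'$ under a map that, by the previous paragraph, preserves $\mathcal{O}_F$. Therefore $O\in \mathcal{O}_F$, contradicting the hypothesis. Hence $U_F^\dagger O U_F\notin \mathcal{O}_F$, as claimed.

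The only real obstacle is bookkeeping: one must carefully distinguish the two conjugation conventions $U(\cdot)U^\dagger$ and $U^\dagger(\cdot)U$ and apply Lemma~\ref{Lemma:FreeUnitaryObs} to bridge them. Once this is in place, the proof is essentially a one-line consequence of the fact that conjugation by a free unitary (together with its adjoint) restricts to a bijection of $\mathcal{O}_F$, so no resourceful observable can be mapped into $\mathcal{O}_F$.
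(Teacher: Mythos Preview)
Your proof is correct and follows essentially the same approach as the paper: assume $U_F^\dagger O U_F\in\mathcal{O}_F$, use that $U_F^\dagger\in\mathcal{U}_F$ (equivalently $\mathcal{U}_{\mathcal{O}_F}$) so that conjugating back gives $U_F(U_F^\dagger O U_F)U_F^\dagger=O\in\mathcal{O}_F$, a contradiction. Your write-up is simply more explicit about the two conjugation conventions and the invocation of Lemma~\ref{Lemma:FreeUnitaryObs}, whereas the paper compresses this into a single line.
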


\begin{proof}Assume that $U_F^\dagger  O U_F=O_2 \in \mathcal {O}_F$. Then this implies that $U_F O_2 U^\dagger_F \in \mathcal{O}_F$, since $U_F^\dagger \in \mathcal{U}_F$. This yields a contradiction. Therefore, $U_F^\dagger O U_F \notin \mathcal{O}_F$.
\end{proof}

One can see that $\mathcal{G}(U_F^\dagger OU_F)=\max_{U_F'\in \mathcal{U}_F}\left\{
1-G(U_F'^\dagger U_F^\dagger O U_F U_F')\right\}=\max_{U_F'\in \mathcal{U}_F}\left\{
1-G(U_F'^\dagger  O  U_F')\right\}=\mathcal{G}( O)$. This proves invariance.

\section{Proof of Theorem~\ref{Theorem:OTOCBound}}

Let $O_1$ and $O_2$ be Pauli operators. One can write
\begin{equation}
	O_1(t)=c_{O_2}(t)O_2+\sum_{P\in \mathcal{P}_2^{\otimes n},P\neq O_2}c_{P}(t)P.
\end{equation}
Each coefficient $c_P(t)$ is real for all $P\in \mathcal{P}_2^{\otimes n}$, since 
\begin{equation}
	c_P(t)=\langle O_1(t)P\rangle=\langle P O_1 (t)\rangle=\langle P^\dagger O_1^\dagger (t)\rangle=\langle (O_1 (t)P)^\dagger\rangle=\langle O_1 (t)P\rangle^*=c_P^*(t).
\end{equation}

Then,
\begin{equation}
\begin{split}
	\langle O_1(t)O_2 O_1(t)O_2\rangle
	&= \frac{1}{d}\tr{\left(c_{O_2}(t)O_2+\sum_{P\in \mathcal{P}_2^{\otimes n},P\neq O_2}c_{P}(t)P\right)O_2\left(c_{O_2}(t)O_2+\sum_{P'\in \mathcal{P}_2^{\otimes n},P'\neq O_2}c_{P}(t)P'\right)O_2}\\
	&= \frac{1}{d}\tr{\left(c_{O_2}(t)I+\sum_{P\in \mathcal{P}_2^{\otimes n},P\neq O_2}c_{P}(t)PO_2\right)\left(c_{O_2}(t)I+\sum_{P'\in \mathcal{P}_2^{\otimes n},P'\neq O_2}c_{P}(t)P'O_2\right)}\\
	&= \frac{1}{d}\mathrm{Tr}\Bigg\{c^2_{O_2}(t)I+\sum_{P'\in \mathcal{P}_2^{\otimes n},P'\neq O_2}c_{O_2}(t)c_{P'}(t)P'O_2+\sum_{P\in \mathcal{P}_2^{\otimes n},P\neq O_2}c_{P}(t)c_{O_2}(t)PO_2\\
	&\hspace{10mm}+\sum_{\substack{P\in \mathcal{P}_2^{\otimes n},P\neq O_2,\\P'\in \mathcal{P}_2^{\otimes n},P'\neq O_2}}c_{P}(t)c_{P'}(t)P'O_2PO_2\Bigg\}\\
	&= c^2_{O_2}(t)+\frac{1}{d}\sum_{\substack{P\in \mathcal{P}_2^{\otimes n},P\neq O_2,\\P'\in \mathcal{P}_2^{\otimes n},P'\neq O_2}}c_{P}(t)c_{P'}(t)f(P,O_2)\tr{P'PO_2^2}\\
	&=c^2_{O_2}(t)+\sum_{P\in \mathcal{P}_2^{\otimes n},P\neq O_2}c_{P}^2(t)f(P,O_2)\\
	&\geq c^2_{O_2}(t)-\sum_{P\in \mathcal{P}_2^{\otimes n},P\neq O_2}c_{P}^2(t)\\
	&= c^2_{O_2}(t)-(1-c^2_{O_2}(t))\\
	&= 2\langle O_1(t),O_2\rangle^2-1.
\end{split}
\end{equation}
Above, $f(P,O_2)=1$ if $P$ commutes with $O_2$ and $f(P,O_2)=-1$ if $P$ anti-commutes with $O_2$.

It follows that for $O_1=O_2$ and $t=2\pi T$,
\begin{equation}
\begin{split}
	\langle O_1(2\pi T)O_1 O_1(2\pi T)O_1\rangle
	&\geq 2\langle O_1(2\pi T),O_1\rangle^2-1\\
	&=  2G(O_1)-1\\
	&\geq  2\min_{U_F\in \mathcal{U}_F}G(U^\dagger_FO_1 U_F)-1\\
	&=  2(1-\mathcal{G}(O_1))-1\\
	&=  1-2\mathcal{G}(O_1).
\end{split}
\end{equation}
Also, by the Cauchy-Schwarz inequality,
\begin{equation}
	\langle O_1(2\pi T)O_1 O_1(2\pi T)O_1\rangle \leq \abs{\langle O_1(2\pi T)O_1 O_1(2\pi T),O_1\rangle}\leq  \norm{O_1(2\pi T)O_1 O_1(2\pi T)}_2\norm{O_1}_2=1.
\end{equation}

\section{Computation for Hayden-Preskill Decoding Protocol}
We compute the decoding fidelity in the $t=0$ case, where $U(0)=I$. First, we compute the average OTOC,
\begin{equation}
\begin{split}
\av{P_A, P_D}\OTOC(P_A,P_D;I)
	&=\av{P_A, P_D}\langle P_A P_DP_AP_D\rangle\\
	&=\frac{1}{d_A^2d_D^2d}\sum_{P_{A\backslash D},P_{A\cap D}}\sum_{P'_{D\backslash A},P'_{D\cap A}}\tr{ P_{A\backslash D}P_{A\cap D}P'_{D\backslash A}P'_{D\cap A}P_{A\backslash D}P_{A\cap D}P'_{D\backslash A}P'_{D\cap A}}\\
	&=\frac{1}{d_A^2d_D^2d}\sum_{P_{A\backslash D},P_{A\cap D}}\sum_{P'_{D\backslash A},P'_{D\cap A}}\tr{ P_{A\cap D}P'_{D\cap A}P_{A\cap D}P'_{D\cap A}}\\
	&=\frac{1}{d_A^2d_D^2d}\sum_{P_{A\backslash D},P_{A\cap D}}\sum_{P'_{D\backslash A}}d^2_{A\cap D}\av{P'_{D\cap A}}\tr{ P_{A\cap D}P'_{D\cap A}P_{A\cap D}P'_{D\cap A}}\\
	&=\frac{1}{d_A^2d_D^2d}\sum_{P_{A\backslash D},P_{A\cap D}}\sum_{P'_{D\backslash A}}d_{A\cap D}\tr{ P_{A\cap D}}\ptr{A\cap D}{P_{A\cap D}}\\
	&=\frac{1}{d_A^2d_D^2d}\sum_{P_{A\backslash D},P_{A\cap D}}\sum_{P'_{D\backslash A}}dd^2_{A\cap D}\delta_{P_{A\cap D},I_{A\cap D}}\\
    &=\frac{1}{d_A^2d_D^2d}\sum_{P_{A\backslash D},P'_{D\backslash A}}dd^2_{A\cap D}\\
	&=\frac{1}{d_A^2d_D^2d}dd^2_{A\cap D}d^2_{A\backslash D}d^2_{D\backslash A}\\
	&=\frac{1}{d_A^2d_D^2}d_D^2d^2_{A\backslash D}
	=\frac{d^2_{A\backslash D}}{d_A^2}.
\end{split}
\end{equation}
Therefore $F(U(0))=\frac{1}{d_{A\backslash D}^2 }$.

\section{Computation of weak measurement channel}

We describe the weak measurement formalism. We use $A$ to denote Alice's $d$-dimensional system with computational basis states $\{\ket{i}_A\}_{i=1}^d$ and $E$ to denote Eve's $d+1$-dimensional system with basis states $\{\ket{i}_E\}_{i=0}^d$. Define the controlled rotation as 
\begin{equation}
R_{\alpha}=\sum_{i=1}^{d}\ket{i}\bra{i}_A\otimes e^{-i\alpha X_{(E),i}}  
\end{equation}
where ${X_{(E),i}=\ket{i}\bra{0}_E+\ket{0}\bra{i}_E}$ and $\alpha\in (0,\pi/2)$. Define the dephasing channel on the ancilla system as ${\mathcal{N}_{(E)}[\cdot]=\sum_{i=0}^{d}\ket{i}\bra{i}_E[\cdot]\ket{i}\bra{i}_E}$.
The weak measurement channel of strength $p=\sin^2(\alpha)$ on $\rho\otimes \ket{0}\bra{0}_E$, where $\rho$ is a target state and $\ket{0}\bra{0}_E$ is an ancillary state, is
\begin{equation}
\begin{split}
	\mathcal{M}_p(\rho)
	&=\mathcal{N}_{(E)}[R_{\alpha}(\rho\otimes \ket{0}\bra{0}_E)R_{\alpha}^\dagger]
 \end{split}
 \end{equation}
We show how one can write this channel as
\begin{equation}
\mathcal{M}_p(\rho)=(1-p)\rho\otimes \ket{0}\bra{0}_E+p\sum_{i=1}^{d}(\ket{i}\bra{i}_A\rho \ket{i}\bra{i}_A)\otimes \ket{i}\bra{i}_E.
\end{equation}

We begin by rewriting the channel as
\begin{equation}
\begin{split}
	\mathcal{M}_p(\rho)&=\mathcal{N}_{(E)}[R_{\alpha}(\rho\otimes \ket{0}\bra{0}_E)R_{\alpha}^\dagger]\\
	&=\sum_{i=0}^{d}I_A\otimes\ket{i}\bra{i}_E\left[\sum_{j=1}^{d}(\ket{j}\bra{j}_A\otimes e^{-i\alpha X_{(E),j}})(\rho\otimes \ket{0}\bra{0}_E)\sum_{j'=1}^{d}(\ket{j'}\bra{j'}_A\otimes e^{i\alpha X_{(E),j'}})\right]I_A\otimes\ket{i}\bra{i}_E\\   &=\sum_{i=0}^{d}\sum_{j,j'=1}^{d}\ket{j}\bra{j}_A\rho \ket{j'}\bra{j'}_A\otimes  \ket{i}\bra{i}_E e^{-i\alpha X_{(E),j}}\ket{0}\bra{0}_Ee^{i\alpha X_{(E),j'}}\ket{i}\bra{i}_E.
\end{split}
\end{equation}

Using identities computed below, we can rewrite this as
\begin{equation}
\begin{split}
	\mathcal{M}_p(\rho)
	&=\sum_{i=0}^{d}\sum_{j,j'=1}^{d}\ket{j}\bra{j}_A\rho \ket{j'}\bra{j'}_A\otimes  \ket{i}\bra{i}_E (\delta_{i,0}\cos(\alpha)-i\delta_{i,j}\sin(\alpha))(\delta_{i,0}\cos(\alpha)+i\delta_{i,j'}\sin(\alpha))\\
    &=\sum_{i=0}^{d}\sum_{j,j'=1}^{d}\ket{j}\bra{j}_A\rho \ket{j'}\bra{j'}_A\otimes  \ket{i}\bra{i}_E \\
	&\hspace{10mm}\cdot(\delta_{i,0}\delta_{i,0}\cos^2(\alpha)+i\delta_{i,0}\delta_{i,j'}\sin(\alpha)\cos(\alpha)-i\delta_{i,j}\delta_{i,0}\sin(\alpha)\cos(\alpha)+\delta_{i,j}\delta_{i,j'}\sin^2(\alpha))\\
	&=\sum_{j,j'=1}^{d}\ket{j}\bra{j}_A\rho \ket{j'}\bra{j'}_A\otimes  \\
	&\hspace{10mm}(\delta_{0,0}\cos^2(\alpha)\ket{0}\bra{0}_E+i\delta_{0,j'}\sin(\alpha)\cos(\alpha)\ket{0}\bra{0}_E-i\delta_{0,j}\sin(\alpha)\cos(\alpha)\ket{0}\bra{0}_E+\delta_{j,j'}\sin^2(\alpha)\ket{j}\bra{j}_E)\\
	&=\sum_{j,j'=1}^{d}\ket{j}\bra{j}_A\rho \ket{j'}\bra{j'}_A\otimes  (\cos^2(\alpha)\ket{0}\bra{0}_E+\delta_{j,j'}\sin^2(\alpha)\ket{j}\bra{j}_E)\\
	&=(1-p) \rho \otimes  \ket{0}\bra{0}_E+p\sum_{j=1}^{d}\ket{j}\bra{j}_A\rho \ket{j}\bra{j}_A\otimes \ket{j}\bra{j}_E.
\end{split}
\end{equation}
This is the form of the channel presented in the main text. In the above, we used that 
\begin{equation}
\begin{split}
	\bra{i}e^{-i\alpha X_{(a),j}}\ket{0}_E
	&=\bra{i}\left[\sum_{k=0}^{\infty}\tfrac{1}{k!}(-i\alpha)^kX_{(E),j}^k\right]\ket{0}_E\\
	&=\sum_{k=\mathrm{even}}\tfrac{1}{k!}(-i\alpha)^k \bra{i} X_{(E),j}^k\ket{0}_E+\sum_{k=\mathrm{odd}}\tfrac{1}{k!}(-i\alpha)^k \bra{i} X_{(E),j}^k\ket{0}_E\\
	&=\sum_{k=\mathrm{even}}\tfrac{1}{k!}(-i\alpha)^k \bra{i} (\ket{0}\bra{0}_E+\ket{j}\bra{j}_E)\ket{0}_E+\sum_{k=\mathrm{odd}}\tfrac{1}{k!}(-i\alpha)^k \bra{i} (\ket{0}\bra{j}_E+\ket{j}\bra{0}_E)\ket{0}_E\\
	&=\delta_{i,0}\sum_{k=\mathrm{even}}\tfrac{1}{k!}(-i\alpha)^k +\delta_{i,j}\sum_{k=\mathrm{odd}}\tfrac{1}{k!}(-i\alpha)^k \\
	&=\delta_{i,0}\sum_{k=0}^{\infty}\tfrac{1}{(2k)!}(-i)^{2k}\alpha^{2k} +\delta_{i,j}\sum_{k=0}^{\infty}\tfrac{1}{(2k+1)!}(-i\alpha)^{2k+1} \\
	&=\delta_{i,0}\sum_{k=0}^{\infty}\tfrac{1}{(2k)!}(-1)^{k}\alpha^{2k} -i\delta_{i,j}\sum_{k=0}^{\infty}\tfrac{1}{(2k+1)!}(-1)^{k}(\alpha)^{2k+1}\\
	&=\delta_{i,0}\cos(\alpha)-i\delta_{i,j}\sin(\alpha).
\end{split}
\end{equation}
In line three, we used that, when $k$ is even,
\begin{equation}
\begin{split}
	X_{(E),j}^k
	&=X_{(E),j}^{k-2}(\ket{j}\bra{0}_E+\ket{0}\bra{j}_E)(\ket{j}\bra{0}_E+\ket{0}\bra{j}_E)\\
	&=X_{(E),j}^{k-2}(\ket{j}\bra{j}_E+\ket{0}\bra{0}_E)\\
	&=(\ket{j}\bra{j}_E+\ket{0}\bra{0}_E)^{k/2}\\
	&=\ket{j}\bra{j}_E+\ket{0}\bra{0}_E.
\end{split}
\end{equation}
When $k$ is odd,
\begin{equation}
\begin{split}
	X_{(E),j}^k
	&=X_{(E),j}X_{(E),j}^{k-1}\\
	&=(\ket{j}\bra{0}_E+\ket{0}\bra{j}_E)(\ket{j}\bra{j}_E+\ket{0}\bra{0}_E)\\
	&=\ket{j}\bra{0}_E+\ket{0}\bra{j}_E.
\end{split}
\end{equation}
It follows that 
\begin{equation}
\begin{split}
	\bra{0}e^{i\alpha X_{(E),j}}\ket{i}_E
	&=\delta_{i,0}\cos(\alpha)+i\delta_{i,j}\sin(\alpha).
\end{split}
\end{equation}

\end{appendix}

\end{document}